\newtheorem{definition}{Definition}
\crefname{lemma}{Lemma}{Lemmas}
\Crefname{lemma}{Lemma}{Lemmas}
\newtheorem{theorem}{Theorem}
\newtheorem{lemma}{Lemma}
\newcommand{\vahid}[1]{{\color{brown}#1}}
\newcommand{\clrb}[1]{{{\small\emph{\color{blue}#1}}}}
\newcommand{\route}{{\small\textbf{\texttt{Route-views}}}\xspace}
\newcommand{\bible}{{\small\textbf{\texttt{Bible}}}\xspace}
\newcommand{\advo}{{\small\textbf{\texttt{Advogato}}}\xspace}
\newcommand{\slsh}{{\small\textbf{\texttt{Slash}}}\xspace}
\newcommand{\hyves}{{\small\textbf{\texttt{Hyves}}}\xspace}
\newcommand{\live}{{\small\textbf{\texttt{Live-mocha}}}\xspace}
\newcommand{\youtube}{{\small\textbf{\texttt{Youtube}}}\xspace}
\newcommand{\mnsz}{\ensuremath{minsize}\xspace}
\newcommand{\pl}{\textsc{Baseline}\xspace}
\newcommand{\kqc}{\textsc{KernelQC}\xspace}
\newcommand{\kmax}{\textsc{TopkMaximalQC}\xspace}
\newcommand{\topKqce}{\textsc{Top-$k$ QCE}\xspace}
\newcommand{\qc}{\text{{\ensuremath{\gamma}}-quasi-clique{}}{}\xspace}
\newcommand{\quick}{\textsc{Quick}\xspace}
\newcommand{\quickm}{\textsc{QuickM}\xspace}
\newtheorem{problem}{Problem}
\newcommand{\remove}[1]{}
\begin{document}
\title{Enumerating Top-k Quasi-Cliques}

\author{\IEEEauthorblockN{Seyed-Vahid Sanei-Mehri}
\IEEEauthorblockA{
Iowa State University\\
vas@iastate.edu}
\and
\IEEEauthorblockN{Apurba Das}
\IEEEauthorblockA{
Iowa State University\\
adas@iastate.edu}
\and
\IEEEauthorblockN{Srikanta Tirthapura}
\IEEEauthorblockA{
Iowa State University\\
snt@iastate.edu}
}

\maketitle

\begin{abstract}
	Quasi-cliques are dense incomplete subgraphs of a graph that generalize the notion of cliques. Enumerating quasi-cliques from a graph is a robust way to detect densely connected structures with applications to bio-informatics and social network analysis. However, enumerating quasi-cliques in a graph is a challenging problem, even harder than the problem of enumerating cliques. We consider the enumeration of top-$k$ degree-based quasi-cliques, and make the following contributions: (1)~We show that even the problem of detecting if a given quasi-clique is maximal (i.e. not contained within another quasi-clique) is NP-hard (2)~We present a novel heuristic algorithm \kqc{} to enumerate the $k$ largest quasi-cliques in a graph. Our method is based on identifying {\em kernels} of extremely dense subgraphs within a graph, following by growing subgraphs around these kernels, to arrive at quasi-cliques with the required densities (3)~Experimental results show that our algorithm accurately enumerates quasi-cliques from a graph, is much faster than current state-of-the-art methods for quasi-clique enumeration (often more than three orders of magnitude faster), and can scale to larger graphs than current methods.
\end{abstract}
\vspace{-1ex}

\section{Introduction}
\label{sec:intro}
Finding dense subgraphs within a large graph is a foundational problem in graph mining, with wide applications in bioinformatics, social network mining, and security. Much attention has been paid to the problem of enumerating cliques, which are complete dense structures in a graph, e.g.~\cite{BK73,TTT06,MU04,ELS10,MXT17,SMT15}. Usually, however, dense subgraphs are not cliques. The requirement of complete connectivity among vertices of the graph is often too strict, and there maybe edges missing among some pairs of vertices, or the existence of some edges may not be captured during observation. For example, cliques were found to be overly restrictive in identifying cohesive subgroups in social network analysis~\cite{Alba73,Freeman92}, and instead, dense subgraph models were preferred that did not require complete connectivity. A similar need was found in the analysis of protein interaction networks~\cite{SM03}. This has led to the definition of ``incomplete dense structures'' or ``clique relaxations'' that are dense subgraphs where not every pair of vertices is connected. Such definitions can lead to more robust methods for identifying dense structures in a graph. In addition to being of great practical importance, the study of clique relaxations is of fundamental importance in graph analysis. 

In this work, we consider a type of clique relaxation called a {\em degree-based} quasi-clique in a graph. For a parameter $0 < \gamma \le 1$, a $m$-vertex subgraph $H$ of a graph $G=(V,E)$ is said to be a degree-based $\gamma$-quasi-clique (henceforth called as ``$\gamma$-quasi-clique) if the degree of each vertex in  $H$ is at least $\gamma \cdot (m-1)$. Note that if $\gamma=1$, the definition required $H$ to be a clique. By increasing $\gamma$, it is possible to make a stricter threshold for a subgraph to be admitted as a quasi-clique. If $\gamma < 1$, it is possible for the subgraph to be missing some edges among its vertices and still be admitted as a $\gamma$-quasi-clique.
Quasi-clique mining has been applied in many areas such as biological, social, and telecommunication networks. Specific examples include: detecting co-functional protein modules from a protein interaction network~\cite{BB09}, clustering in a multilayer network~\cite{GF+10,BG+12}, and exploring correlated patterns from an attributed graph~\cite{SMZ12}. A $\gamma$-quasi-clique is said to be {\em maximal} if it is not a proper subgraph of any other larger $\gamma$-quasi-clique. We consider enumerating maximal quasi-cliques. This formulation reduces redundancy in the output by ensuring that if a quasi-clique $Q$ is output, then no other quasi-clique that is contained in $Q$ is also output. Note that a maximal quasi-clique may not be the largest (maximum) quasi-clique in the graph.

We consider top-$k$ maximal quasi-clique enumeration, where it is required to enumerate the $k$ largest maximal quasi-cliques in the graph\footnote{Our methods can also be adapted to enumerate only those quasi-cliques whose size is greater than a given threshold}. There are a few reasons why enumerating top-$k$ maximal quasi-cliques is better than enumerating all maximal quasi-cliques.
(1)~if we focus on the top-$k$, then the output size is no more than $k$ quasi-cliques. Compare this with enumerating all maximal quasi-cliques in a graph, whose output size can be exponential in the size of the input graph. For instance, it is known that there can be as much as $\Omega(3^{n/3})$ maximal cliques in a graph, and hence there can be at least as many maximal quasi-cliques, since each clique is a $\gamma$-quasi-clique with $\gamma=1$. 
(2)~the largest quasi-cliques in a graph are often the most interesting among all the quasi-cliques.
(3)~the time required for enumerating top-$k$ can potentially be smaller than the time for enumerating all maximal quasi-cliques.

A straightforward approach to enumerate top-$k$ maximal quasi-cliques is to first enumerate all maximal quasi-cliques in $G$ using an existing algorithm for quasi-clique enumeration such as \quick~\cite{LW08}, followed by extracting the $k$ largest among them. This approach has the problem of depending on an expensive enumeration of all maximal quasi-cliques. If the number of maximal quasi-cliques is much larger than $k$, then most of the enumerated quasi-cliques are discarded, and the resulting computation is wasteful. It is interesting to know if there is a more efficient way to enumerate the largest maximal quasi-cliques in $G$. In this work, we present progress towards this goal. We make the following contributions:\\

\noindent{\bf NP-hardness of Maximality:} First, we prove that even the problem of detecting whether a given quasi-clique in a graph is a maximal quasi-clique is an NP-hard problem. This is unlike the case of cliques -- detecting whether a given clique is a maximal clique can be done in polynomial time, through simply checking if it is possible to add one more vertex to the clique. Note that our result is not about checking maximum sized quasi-cliques -- it was already known~\cite{PV+18} that finding the maximum sized $\gamma$-quasi-clique in a graph is NP-complete, for any value of $\gamma$. Instead, our result is about checking maximality of a quasi-clique.\\
	
\noindent{\bf Algorithm for Top-$k$ $\gamma$-quasi-cliques:} We present a novel heuristic algorithm \kqc for enumerating top-$k$ maximal quasi-cliques without enumerating all maximal quasi-cliques in $G$. Our algorithm is based on the observation that a $\gamma$-quasi-clique typically contains a smaller but denser subgraph, a $\gamma'$-quasi-clique, for a value $\gamma' > \gamma$. \kqc exploits this fact by first detecting ``kernels'' of extremely dense subgraphs, followed by expanding these kernels into $\gamma$-quasi-cliques in a systematic manner.  \kqc uses the observation that for $\gamma' > \gamma$, it is (typically) much faster to enumerate $\gamma'$-quasi-cliques than it is to enumerate $\gamma$-quasi-cliques. Further, the resulting set of $\gamma'$-quasi-cliques can be expanded into $\gamma$-quasi-cliques more easily than it is to construct the set of $\gamma$-quasi-cliques starting from scratch.\\

\noindent{\bf Experimental Evaluation:} We empirically evaluate our algorithm on large real-world graphs and show that \kqc enumerates top-$k$ maximal quasi-cliques with high accuracy, and is orders of magnitude faster than the baseline, which uses a state-of-the-art algorithm for quasi-clique enumeration. For instance, on the graph \advo\footnote{details of the graphs used in the experiments are presented in Section~\ref{sec:expts}}, \kqc yields a nearly 1000 fold speedup for enumerating the top-100 0.7-quasi-cliques, when compared with a baseline based on the \quick algorithm~\cite{LW08}. 

While \kqc is not guaranteed to return exactly the set of top-$k$ maximal quasi-cliques, it is very accurate in practice. Note that, given that the problem of even checking maximality of a quasi-clique is NP-hard, the cost of exact enumeration of maximal quasi-cliques is necessarily high. In many of the cases that we considered, the output of \kqc exactly matched the output of the exact algorithm that used exhaustive search. Usually, the error in the output, when compared with the output of the exact algorithm, was less than one tenth of one percent. See \cref{sec:expts} for more details on the metrics used to measure the accuracy and performance of \kqc over the baseline algorithm. {\bf Significantly, \kqc was able to scale to much larger graphs than current methods.}

\vspace{-1ex}
\begin{figure*}
	\captionsetup[subfigure]{justification=centering}
	\centering
	\subfloat[Graph $G$]{%
		\includegraphics[width=.24\textwidth]{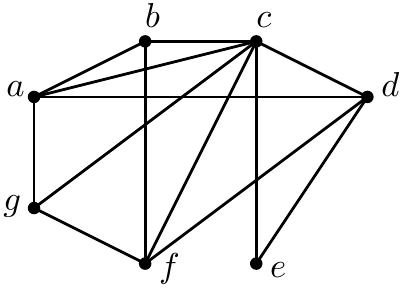}}
	\qquad\qquad
	\subfloat[Quasi clique]{%
		\includegraphics[width=.24\textwidth]{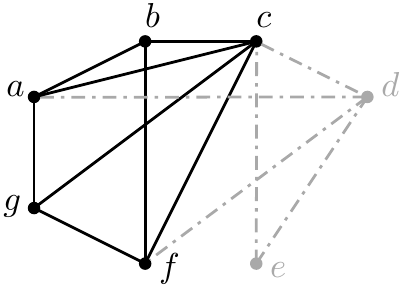}}
	\qquad\qquad
	\subfloat[Maximal Quasi-clique]{%
		\includegraphics[width=.24\textwidth]{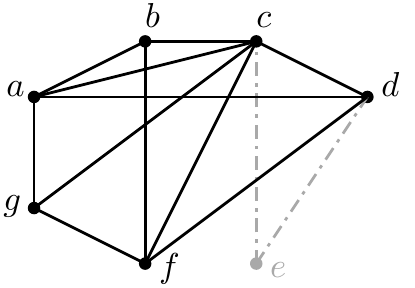}}
	\caption{$\gamma$-quasi-clique with $\gamma = 0.6$ and \mnsz{} = 5. (b) vertices $\{a,b,c,f,g\}$ form a \qc{}. (c) vertices $\{a,b,c,d,f,g\}$ form a maximal \qc{}}
	\label{fig:qc-example}
	\vspace{-4ex}
\end{figure*}
\subsection{Related Works}
\label{sec:related}
\textbf{Degree-based Quasi-Clique: }Motivated by a study on protein sequences, Matsuda et al.~\cite{MIH99} first defined the degree-based $\gamma$-quasi-clique in the context of a protein sequence clustering problem. The degree based $\gamma$-quasi-clique has also been referred to as a $\gamma$-complete-graph in the literature~\cite{K16}. Pei et al.~\cite{PJZ05} study the problem of enumerating those degree-based $\gamma$-quasi-cliques from a graph database that occur in the every graph of the graph database. Zeng et al.~\cite{ZW+07} studied the same problem as Pei et al. but generalize in a sense that their algorithm enumerates degree based $\gamma$-quasi-cliques that occur in at least a certain number of graphs in the database. Note that the algorithms discussed so far can also enumerate all maximal $\gamma$-quasi-cliques. Liu and Wong~\cite{LW08} propose the \quick~algorithm for enumerating all maximal $\gamma$-quasi-cliques from a simple undirected graph that uses a number of pruning techniques, some from prior works, and some newly developed. Lee and Lakshmanan~\cite{LL16} study the problem of finding a maximum $\gamma$-quasi-clique containing a given subset of vertices $S$ of the original graph, and propose a heuristic algorithm. Recently, Pastukhov et al.~\cite{PV+18} study the maximum degree-based $\gamma$-quasi-clique problem. First they prove that finding a maximum $\gamma$-quasi-clique is an NP-Hard problem, and present algorithms for a $\gamma$-quasi-clique of maximum cardinality. Note that while this work focuses on finding a single quasi-clique of the largest size, our goal is not just to find a single large quasi-clique, but to enumerate the $k$ largest maximal quasi-cliques. Further, the NP-hardness result in~\cite{PV+18} is for finding the maximum $\gamma$-quasi-clique, while our NP-hardness result is for finding if a quasi-clique is maximal. 


Abello et al.~\cite{ARS02} first study the problem of finding a density-based $\delta$-quasi-clique, defined as a subgraph $Q$ of the original graph with the ratio of the edges in $Q$ to the total number of edges in a complete subgraph of size $Q$ is at least $\delta$. Note that a degree-based quasi-clique is also a density-based quasi-clique, but the converse is not true. They propose a heuristic algorithm for finding a large $\delta$-quasi-clique. Uno~\cite{uno2007efficient} considered density-based quasi-cliques, and proposed an algorithm for enumerating all $\delta$-quasi-cliques, with polynomial delay. In another study, Pattillo et al.~\cite{PV+13} prove that deciding whether there exists a $\delta$-quasi-clique of size at least $\theta$ is an NP-Complete problem. Brunato et al.~\cite{BHB07} defines a $(\gamma,\delta)$-quasi-clique combining the minimum degree requirement of degree based $\gamma$-quasi-clique and minimum edge requirement of density based $\delta$-quasi-clique. They propose a heuristic algorithm for finding a maximum $(\gamma,\delta)$-quasi-clique. Recently, Balister et al.~\cite{BB+18} derive the concentration bound on the size of the density based maximum $\delta$-quasi-clique following the work of Veremyev et al.~\cite{VB+12}.

\textbf{Other Works on Dense Subgraphs: }The study of dense subgraphs has attracted a wide spectrum of research for many decades. There have been many works on complete dense subgraphs such as maximal cliques~\cite{BK73,CN85,MU04,TTT06,ELS10,CG+16,DST16,MXT17}, maximal bicliques~\cite{AA+04,LSL06,MT17}. There are many different types of incomplete dense subgraphs other than quasi-clique such as $k$-core~\cite{BZ03,CK+11,DDZ14,KB+15}, $k$-truss~\cite{C08} etc. A $k$-core is a maximal connected subgraph such that each vertex in that subgraph has degree at least $k$ and this subgraph is quite different from quasi-clique in the sense that the degree threshold in the $k$-core is an absolute threshold whereas the threshold in the quasi-clique (either degree threshold or density threshold) are relative thresholds, equal to a certain factor ($\gamma$) times the size of the subgraph. In a $k$-truss subgraph, each edge is contained in at least $(k-2)$ triangles. $k$-truss is different from the quasi-clique because the threshold in the $k$-truss is an absolute threshold. Other works on dense subgraphs different from quasi-clique subgraph includes densest subgraph~\cite{G84,C00,KS09,AC09}, triangle densest subgraph~\cite{T14}, $k$-clique densest subgraph~\cite{T15,MP+15} etc. Similar to $k$-core, these subgraphs are based on an absolute threshold for the degree, rather than a relative threshold. Prior work on top-$k$ dense subgraph discovery includes the work of Zou et al.~\cite{ZL+10} on enumeration of top-$k$ maximal cliques from an uncertain graph, defined as the set of cliques with the $k$ largest clique probabilities and the works of Balalau et al.~\cite{BD+15} and Galbrun et al.~\cite{GGT16} on the enumeration of top-$k$ densest subgraphs.


\remove{
The study of dense subgraphs attracts a wide spectrum of research community for more than two decades. There have been many works on dense subgraphs starting with the enumeration algorithm for a complete dense subgraph called maximal clique by Bron and Kerbosch~\cite{BK73} where a maximal clique is a maximal subgraph with any two vertices connected. There have been many works on enumerating all maximal cliques afterwards including~\cite{CN85,TTT06,MU04,ELS10,CG+16,MXT17}. Another type of complete dense subgraph is called maximal biclique which is a maximal bipartite subgraph with each vertex in one partition connected to each vertex in the other partition. There have been many works on enumerating all maximal bicliques from an undirected graph including a consensus method based algorithm due to Alexe et al.~\cite{AA+04} and a depth first search based algorithm due to Liu et al.~\cite{LSL06}. Note that maximal clique and maximal biclique differ from quasi-clique in the sense that prior two subgraphs are the complete dense subgraphs whereas the quasi-clique is an incomplete dense subgraph. There are many different types of incomplete dense subgraphs other than quasi-clique such as $k$-core, $k$-truss, densest subgraph etc. Now we will present some previous works on discovering incomplete dense subgraphs other than quasi-clique. $k$-core is a maximal connected subgraph such that each vertex in that subgraph has degree at least $k$ and this subgraph is quite different from quasi-clique in the sense that the degree threshold in the $k$-core is an absolute threshold whereas the threshold in the quasi-clique (either degree threshold or density threshold) are the relative thresholds. Batagelj and Zaversnik~\cite{BZ03} first proposed a linear time (on the number of edges of the graph) algorithm for core decomposition which is to compute the core numbers of all the vertices. Following that, there are many algorithms for the $k$-core decomposition problem~\cite{CK+11,DDZ14,KB+15}. Another dense subgraph is a $k$-truss where each each is contained in at least $(k-2)$ triangles. Cohen~\cite{C08} defines this dense subgraph in a study of social network and proposes an algorithm for enumerating all maximal $k$-trusses given the value of $k$ and the algorithm can be modified to enumerated maximal $k$-trusses for a given range of $k$ values as mentioned by the author. There are many other works on incomplete dense structures different from quasi-clique including densest subgraph~\cite{G84,C00,KS09,AC09}, triangle densest subgraph~\cite{T14}, $k$-clique densest subgraph~\cite{T15,MP+15}.
}

\vspace{-1ex}
\section{Preliminaries and Problem Definition}
\label{sec:prelims}
Let $G=(V,E)$ be a simple undirected graph. 
Let $V(G)$ denote the set of vertices and $E(G)$ denote the set of edges of $G$. 
Let $d^{G}(u)$ denote the degree of vertex $u$ in $G$. When the context is clear, we use $d(u)$ to mean $d^{G}(u)$.
We use the following definition of degree-based quasi-cliques. 

\begin{definition}[$\gamma$-quasi-clique]
For parameter $0 < \gamma \leq 1$, a vertex-induced subgraph $Q$ of $G$ is called a $\gamma$-quasi-clique if $Q$ is connected and, for every vertex $v\in V(Q)$, $d^{Q}(v) \geq \lceil\gamma(|Q|-1)\rceil$.
\end{definition}

Note that when $\gamma=1$, the above definition reduces to a clique. For a $\gamma$-quasi-clique $Q$, by the phrase ``size of $Q$'' and notation $|Q|$, we mean the number of vertices in $Q$. A $\gamma$-quasi-clique $Q$ is called {\em maximal} if there does not exist another $\gamma$-quasi clique $Q'$ such that $V(Q) \subset V(Q')$ and $|Q| < |Q'|$. See Figure~\ref{fig:qc-example} for an example of the above definition.

\begin{problem}[Top-$k$ $\gamma$-QCE]
Given integer $k > 0$, a parameter $0 < \gamma \le 1$, a simple undirected graph $G=(V,E)$, enumerate $k$ maximal $\gamma$-quasi cliques from $G$ that have the largest sizes, among all maximal $\gamma$-quasi-cliques in $G$.
\end{problem}

Given $0 < \gamma \le 1$, a simple undirected graph $G=(V,E)$, the $\gamma$-quasi-clique enumeration ($\gamma$-QCE) problem asks to enumerate all maximal $\gamma$-quasi cliques from $G$. If the value of $\gamma$ is clear from the context we sometimes use ``QCE'' to mean $\gamma$-QCE. 

\textbf{\quick algorithm for QCE:} The current state-of-the-art algorithm for QCE is \quick~\cite{LW08}, which takes as input a set of vertices $X$, degree threshold $\gamma$, and enumerates all maximal $\gamma$-quasi-cliques that contain $X$. By setting $X$ to an empty set, one can enumerate all maximal $\gamma$-quasi-cliques of $G$. Note that \quick may also enumerate non-maximal quasi-cliques which need to be filtered out in a post-processing step. We modify \quick such that it omits the check for maximality in emitting quasi-cliques i.e. it enumerates all $\gamma$-quasi-cliques instead of only maximal ones -- we call this version of the \quick algorithm as \quickm. Non-maximal quasi-cliques are filtered out at a later step, while enumerating top-$k$-quasi-cliques.

\remove{
	\begin{table}[!t]
		\small
		\centering
		\caption{\textbf{Summary of Notations}}
		\label{table:notations}
		\begin{tabular}{| c | c |}
			
			\toprule
			\textbf{Notation} & \textbf{Meaning} \\
			\midrule
			$G$ & Input Graph\\
			\hline
			$V(G)$ &  Vertex set of $G$\\
			\hline
			$E(G)$ & Edge set of $G$\\
			\hline
			$\Gamma_{G}(u)$ & Set of vertices adjacent to $u$ in $G$\\
			\hline
			$d^{G}(u)$ & Number of vertices adjacent to $u$ in $G$\\
			\hline
			$|G|$ & Number of vertices in $G$, also called size of $G$\\
			\bottomrule
		\end{tabular}
	\end{table}
}

\vspace{-1ex}
\newcommand{\NPsz}{\ensuremath{r}}
\section{Hardness of Checking Maximality of a Quasi-Clique}
\label{sec:hardness}
It is easy to deduce that $\gamma$-QCE is an NP-hard problem, since the problem of enumerating maximal cliques is a special case when $\gamma=1$. However, QCE presents an even more severe challenge. We now prove that even determining if a given quasi-clique is maximal is an NP-hard problem. This is very different from the case of maximal cliques -- checking a given clique is maximal can be done in polynomial time, by simply checking if there exists a vertex outside the clique that is connected to all vertices within the clique. If there exists such a vertex, then the given clique is not maximal, otherwise it is maximal.

\begin{problem}[Maximality of a Quasi-Clique]
Given a graph $G=(V,E)$, a $\gamma$-quasi-clique $X \subseteq V$, determine whether or not $X$ is a maximal quasi-clique in $G$.
\end{problem} 

\begin{theorem}
\label{thm:mqc-hardness}
Maximality of a Quasi-clique is NP-hard. 
\end{theorem}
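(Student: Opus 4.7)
The plan is to reduce from \textsc{Clique}, a classical NP-hard problem. Given an instance $(H, k)$ of \textsc{Clique}, I will construct in polynomial time a graph $G$, a rational parameter $\gamma$, and a $\gamma$-quasi-clique $Q \subseteq V(G)$ such that $H$ contains a clique of size $k$ if and only if $Q$ is \emph{not} maximal in $G$. Since any algorithm that decides maximality also decides its negation, this establishes NP-hardness of the Maximality problem.

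The construction fixes $\gamma = 1/2$. Let $V(G) = X \cup \{z\} \cup V(H)$, where $X$ is a set of $k$ fresh vertices and $z$ is one additional apex vertex. Make $X$ a clique, make $z$ adjacent to every vertex in $X \cup V(H)$, keep the edges of $H$ inside $V(H)$, and add no edges between $X$ and $V(H)$. Take $Q = X \cup \{z\}$; this is a clique on $k+1$ vertices, hence a $(1/2)$-quasi-clique and a legitimate input instance.

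The heart of the proof is the biconditional ``$Q$ is non-maximal $\Leftrightarrow$ $H$ has a $k$-clique.'' Any strict quasi-clique extension of $Q$ has the form $Q \cup Y$ for some nonempty $Y \subseteq V(H)$; write $t = |Y|$, so $|Q \cup Y| = k+1+t$ and the required minimum degree is $\lceil (k+t)/2 \rceil$. Each $x \in X$ has exactly $k$ neighbors in $Q \cup Y$ (the other $X$-vertices together with $z$, since $x$ has no edges into $V(H)$), which forces $k \geq \lceil (k+t)/2 \rceil$, i.e.\ $t \leq k$. Each $y \in Y$ has at most $1 + (t-1) = t$ neighbors in $Q \cup Y$ (the apex $z$ plus at most $t-1$ neighbors inside $Y$), which forces $t \geq \lceil (k+t)/2 \rceil$, i.e.\ $t \geq k$. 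The two bounds pinch at $t = k$, and then the $y$-inequality forces $\deg_{H[Y]}(y) = k-1$ for every $y \in Y$, so $Y$ is a $k$-clique of $H$. Conversely, any $k$-clique $Y$ of $H$ immediately satisfies the three degree inequalities (the one at $z$ is automatic), witnessing that $Q$ is not maximal.

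The main obstacle is arranging for the two opposing degree constraints to coincide at a single value of $t$. The choices $|X| = k$, $\gamma = 1/2$, and the apex vertex $z$ are tuned precisely to this end: $|X|$ and $\gamma$ balance the upper and lower bounds so that they pinch at $t = k$, while $z$ both connects the two halves of $G$ (ensuring $Q$ and any candidate extension are connected) and supplies each $y \in Y$ with the single ``free'' neighbor inside $Q$ without which no extension could ever meet the degree threshold. Generalizing to any fixed rational $\gamma \in (0,1)$ should be routine by rescaling $|X|$ appropriately, but I would present the $\gamma = 1/2$ case first for clarity.
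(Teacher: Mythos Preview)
Your reduction is correct and, in fact, cleaner than the paper's. Both proofs reduce from \textsc{Clique}, but the gadgets differ. The paper builds a quasi-clique $X$ of size $2r^2+r$ out of three blocks $A_1,A_2,B$ (two $r^2$-cliques and an $r$-clique, with $A_1$ acting as a hub to $A_2$, $B$, and $V'$) and takes $\gamma=\frac{r^2+r-1}{2r^2+2r-1}$, a value depending on $r$ and tending to $1/2$. The ``tight'' vertices in $B$ force any extension $M\subseteq V'$ to satisfy $|M|\le r$, while the degrees of vertices in $M$ force $|M|\ge r$ and then force $M$ to be a clique. Your construction achieves the same two-sided pinch with a much smaller gadget: a $k$-clique $X$, a single apex $z$, and $\gamma=1/2$ fixed. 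Your $X$ plays the role of the paper's $B$ (upper-bounding $|Y|$), your $z$ replaces the hub $A_1$ (providing connectivity and the one guaranteed neighbor for each $y$), and you dispense with $A_2$ entirely. What you gain is simplicity and a hardness statement for a \emph{fixed} $\gamma$; what the paper's larger gadget buys is a family of instances whose $\gamma$ ranges over infinitely many values below $1/2$, which is a slightly different slice of the parameter space. Your closing remark that rescaling $|X|$ should handle other rational $\gamma$ is plausible but would need a separate check of the ceiling arithmetic; for the theorem as stated, the $\gamma=1/2$ case already suffices.
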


\begin{IEEEproof}
We prove NP-hardness by reducing the $r$-clique problem, that asks whether a given graph $G'=(V',E')$ contains a clique of size $r$, to the problem of checking maximality of a quasi-clique. This $r$-clique problem is NP-complete~\cite{GJ02}. Given graph $G'$ on which we have to solve the $r$-clique problem, construct a graph $G=(V,E)$ as follows (See Figure~\ref{fig:NP}). Let $V = V'\cup X$ where $X$ is a set of $2r^2 + r$ additional vertices.  $X$ consists of three parts -- two sets $A_1$ and $A_2$, each of size $r^2$, and $B$, of size $r$. We construct edges in $G$ as follows:
\begin{itemize}
\item All edges $E'$ in $G'$ are retained in $G$
\item Add edges within $A_1$, within $A_2$ and within $B$ such that $A_1$ is a clique, $A_2$ is a clique, and $B$ is a clique.
\item Add edges connecting each vertex in $A_1$ with each vertex in $A_2$ and $B$. Thus, $A_1 \cup A_2$ is a clique and $A_1 \cup B$ is a clique, but $X=A_1 \cup A_2 \cup B$ is not a clique.
\item Add edges connecting each vertex of $A_1$ to each vertex of $V'$.
\end{itemize}
\begin{figure}[t!]
	\captionsetup[subfigure]{justification=centering}
	\centering
	\includegraphics[width=0.45\textwidth] {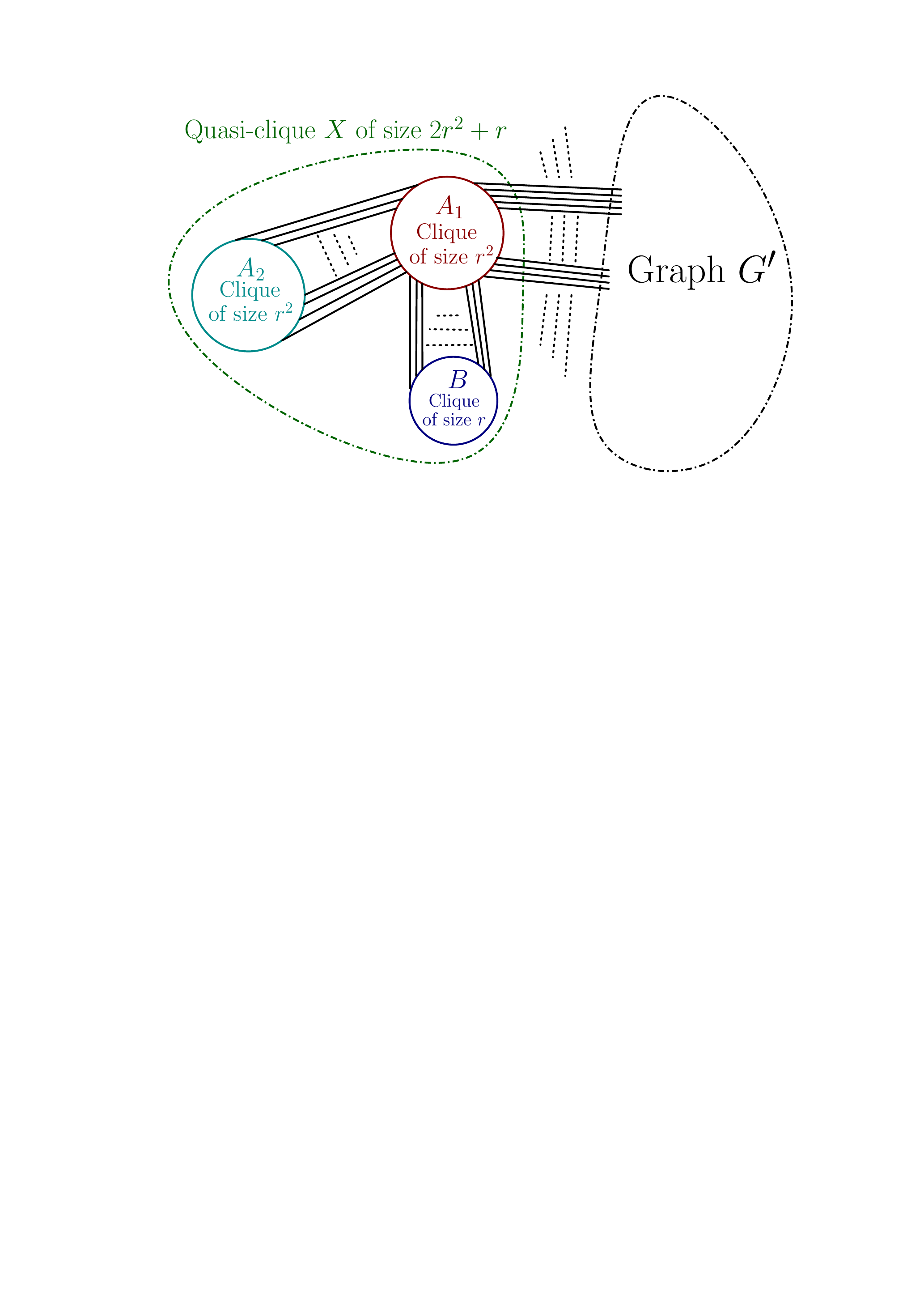}
	\caption{Construction of graph $G$ for Proof of Theorem~\ref{thm:mqc-hardness}.}
	\label{fig:NP}
\end{figure}

Set $\gamma= \frac{r^2 + r - 1}{2r^2 + 2r - 1}$. 
We first show that $X$ is a $\gamma$-quasi-clique. To see this, consider that the total number of vertices in $X$ is $(2r^2 + r)$.
For $X$ to be a $\gamma$-quasi-clique, each vertex should have a degree of at least $\lceil \gamma \cdot (2r^2 + r-1) \rceil = \left\lceil \frac{(r^2 + r - 1)(2r^2+r-1)}{2r^2 + 2r - 1} \right\rceil \le (r^2+r-1)$.
We can verify that every vertex in $X$ has at least a degree of $(r^2 + r - 1)$.

We now claim that $X$ is not a maximal $\gamma$-quasi-clique in $G$ if and only if $G'$ contains an $r$-clique.\\
(1)~Suppose that $G'$ contains an $r$-clique. There exists a set of vertices $L \subset V'$ such that $L$ is a clique and $|L| = r$.
Consider the set $Q = X \cup L$. We show that $Q$ is a $\gamma$-quasi-clique. Since $X \cap L = \emptyset$, we have $|Q| = |X| + |L| = 2r^2 + 2r$. Therefore, $\lceil{\gamma \cdot (|Q| - 1)}\rceil = \left\lceil {\frac{(r^2 + r - 1)(2r^2+2r-1)}{2r^2 + 2r - 1}} \right\rceil = r^2 + r - 1$. It can be verified that every vertex in $Q$ has at least a degree of $r^2 + r - 1$. Thus, $Q$ is a $\gamma$-quasi-clique.

(2)~Suppose that $X$ is not a maximal $\gamma$-quasi-clique in $G$.
Then, there must be a non-empty set $M \subset V'$ such that $R = M \cup X$ is a $\gamma$-quasi-clique in $G$. 
We note that it is not possible that $|M| > r$. If this was the case, then the minimum degree threshold for a vertex in $R$ is $\lceil{\gamma \cdot (|R| - 1)}\rceil = \left\lceil {\frac{(r^2 + r - 1)(|R|-1)}{2r^2 + 2r - 1}} \right\rceil = \left\lceil {\frac{(r^2 + r - 1)(2r^2 + r + |M| -1)}{2r^2 + 2r - 1}} \right\rceil > r^2 + r - 1$, since $|M| > r$. However, the minimum degree of vertices in $R$ is $r^2 + r - 1$ (consider a vertex from the set $B \subset R$). 

Similarly, it is not possible that $|M| < r$. Let assume that was the case. Then, the minimum degree threshold for a vertex in $R$ is $\lceil{\gamma \cdot (|R| - 1)}\rceil = \left\lceil {\gamma \cdot (2r^2 + r + |M| - 1)} \right\rceil$. However, the minimum degree of a vertex in $R$ is $(r^2 + |M| - 1)$ (consider a vertex from $M$). It can be verified that since $|M| < r$,  $ \left\lceil {\gamma \cdot (2r^2 + r + |M| -1)} \right\rceil > r^2 + |M| - 1$. Then, $R$ cannot be a quasi-clique. 

Therefore, it must be that $|M| = r$. In this case, $M$ must be a clique of size $r$. In the case that $M$ is not a clique, the minimum degree of a vertex in $R$ is $r^2 + r - 2$ (consider a vertex from $M$). However, the minimum degree threshold for a vertex in $R$ is $\lceil{\gamma \cdot (|R| - 1)}\rceil = \left\lceil {\frac{(r^2 + r - 1)(2r^2+2r-1)}{2r^2 + 2r - 1}} \right\rceil = r^2 + r - 1 > r^2 + r - 2$. This completes the proof.
\end{IEEEproof}
\remove{
We claim that, There exists a clique size r in $G'$ if and only if there is a $\gamma$-quasi-clique $Q$ that contains $X$ in $G$. 

Note that every vertex in $X$ has at least a degree of $r^2 + r - 1$, which satisfies the density threshold. Thus, $X$ is a \qc. Assume that $Q = X\cup Y$ and $X\cap Y = \phi$. Now, if $Y$ is a r-clique in $G'$, $Q$ is clearly a $\gamma$-quasi-clique. For proving the other direction, suppose there is a $\gamma$-quasi-clique $Q$ that includes $X$. We show that $Y$ must be a clique of size r. We consider the following cases:
\begin{itemize}
\item
\textbf{$|Y| < r$: }Then the density of $Q$ is at most $\frac{r^2 + r - 2}{c \cdot r^2 + 2r - 2}$, which is less than $\gamma = \frac{r^2 + r - 1}{c\cdot{}r^2 + 2r - 1}$
\item
\textbf{$|Y| > r$: }Then the density of $Q$ is at most $\frac{r^2 + r - 1}{c \cdot{} r^2 + 2r}$, which is less than $\gamma = \frac{r^2 + r - 1}{c\cdot{}r^2 + 2r - 1}$
\item
\textbf{$|Y| = r$: }Suppose $Y$ is not a clique. Then the density of $Q$ is at most $\frac{r^2 + r - 2}{c \cdot{} r^2 + 2r - 1}$, which is less than $\gamma = \frac{r^2 + r - 1}{c\cdot{}r^2 + 2r - 1}$
\end{itemize}

}

\remove{
We prove NP-hardness by reducing the \NPsz-clique problem that asks whether a given graph $G'=(V',E')$ contains a clique of size \NPsz, to the MQC problem. This \NPsz-clique problem is NP-complete.

We construct a graph $G=(V,E)$ as follows (See Figure~\ref{fig:NP}). Let $V = V'\cup X$ where $X$ is a set of $c \cdot \NPsz^2 + \NPsz$ new vertices, where $c$ is a positive integer. Choose an arbitrary vertex set $A$ from $X$ of size $|A| = \NPsz^2$. Choose another vertex set $B$ from $X\setminus A$ of size $|B| = \NPsz$. Now, we construct edges in $G$ as follows: 
\begin{itemize}
    \item Add edges within vertices in $B$ to make $B$ a clique of size \NPsz.
    \item From each vertex in $B$, add $\NPsz^2$ edges to $\NPsz^2$ arbitrary vertices in $X \setminus B$.
    \item Add edges to make $X\setminus B$ a clique of size $c \cdot \NPsz^2$.
    \item Add edges connecting each vertex of $A$ to each vertex of $V'$.
\end{itemize}

$E$ consists of all these edges including $E'$ (edge set in the original graph $G'$). Next, we set $\gamma= \frac{\NPsz^2 + \NPsz - 1}{c\cdot{}\NPsz^2 + 2\NPsz - 1}$. We claim that, There exists a clique size \NPsz in $G'$ if and only if there is a $\gamma$-quasi-clique $Q$ that contains $X$ in $G$. 

Note that every vertex in $X$ has at least a degree of $\NPsz^2 + \NPsz - 1$, which satisfies the density threshold. Thus, $X$ is a \qc. Assume that $Q = X\cup Y$ and $X\cap Y = \phi$. Now, if $Y$ is a \NPsz-clique in $G'$, $Q$ is clearly a $\gamma$-quasi-clique. For proving the other direction, suppose there is a $\gamma$-quasi-clique $Q$ that includes $X$. We show that $Y$ must be a clique of size \NPsz. We consider the following cases:
\begin{itemize}
\item
\textbf{$|Y| < \NPsz$: }Then the density of $Q$ is at most $\frac{\NPsz^2 + \NPsz - 2}{c \cdot \NPsz^2 + 2\NPsz - 2}$, which is less than $\gamma = \frac{\NPsz^2 + \NPsz - 1}{c\cdot{}\NPsz^2 + 2\NPsz - 1}$
\item
\textbf{$|Y| > \NPsz$: }Then the density of $Q$ is at most $\frac{\NPsz^2 + \NPsz - 1}{c \cdot{} \NPsz^2 + 2\NPsz}$, which is less than $\gamma = \frac{\NPsz^2 + \NPsz - 1}{c\cdot{}\NPsz^2 + 2\NPsz - 1}$
\item
\textbf{$|Y| = \NPsz$: }Suppose $Y$ is not a clique. Then the density of $Q$ is at most $\frac{\NPsz^2 + \NPsz - 2}{c \cdot{} \NPsz^2 + 2\NPsz - 1}$, which is less than $\gamma = \frac{\NPsz^2 + \NPsz - 1}{c\cdot{}\NPsz^2 + 2\NPsz - 1}$
\end{itemize}
}

\remove{
	\ad{
		In this section we prove that deciding whether a given quasi-clique is maximal is computationally hard problem.
		
		\textbf{Quasi-Clique Maximality Problem: }
		Given $G=(V,E)$, a $\gamma$-quasi-clique $X$, degree threshold $\gamma$, check whether a $\gamma$-quasi-clique $Q$ exists that contains $X$.
		
		\begin{theorem}
			Quasi-Clique Maximality problem is NP-Complete.
		\end{theorem}
		
		\begin{proof}
			It is easy to see that the problem belongs to class NP. To prove that the problem is NP-Hard, we reduce it from an NP-Complete problem CLIQUE. CLIQUE takes two input: (1) a graph $G = (V,E)$ and (2) an integer $k$ and the problem is to decide whether there is a clique of size $k$ in $G$.
			
			First we create a graph $G'=(V',E')$ from $G$ as follows: Let $V' = V\cup X$ where $X$ is of size $n^2$, $n = |V|$. Thus the size of $G'$ is $n^2+n$. We additionally make $X$ a clique of size $n^2$. Clearly, $X$ is a $\gamma$-quasi-clique for any value of $\gamma$. Now fix a $\gamma$.
			
			We first select a vertex subset $A\subset X$ of size $\gamma n^2 - (1-\gamma)(k-1)$. Then we connect each vertex of $A$ with each vertex of $V$ (the vertex set of the CLIQUE problem we start with). This completes the construction of $G'$.
			
			We claim that $Q = X\cup Y$, $Y \subseteq V$, is a $\gamma$-quasi-clique if and only if $Y$ is a clique of size $k$. 
			
			First we prove that $Q$ is a $\gamma$-quasi-clique if $Y$ is a clique of size $k$. Each vertex of $X$ has degree at least $n^2-1$ in $Q$ which is clearly greater than $\gamma(n^2+k-1)$. Each vertex in $Y$ has degree $\gamma(n^2+k-1)$ in $Q$ according to the construction. Thus $Q$ is a $\gamma$-quasi-clique.
			
			Next we prove that $Y$ is a clique of size $k$ if $Q$ is a $\gamma$-quasi-clique. Note that, the degree of each vertex in $X$ has degree much higher than $\gamma(n^2+k-1)$ in $Q$ and therefore, these vertices always satisfy the quasi-clique criteria. Let us see the vertices in $Y$:
			
			\textbf{$|Y| < k$: }Maximum degree of a vertex $y \in Y$ in $Q$ is $k-2 + \gamma n^2 - (1-\gamma)(k-1) = \gamma(n^2+k-2) + \gamma - 1$ which is less than $\gamma(n^2+k-2)$. This is a contradiction.
			
			\textbf{$|Y| = k$: }This is only possible if $Y$ is a clique of size $k$.
			
			\textbf{$|Y| > k$: }Assume that $Y$ does not contain a clique of size $k$. Maximum degree of a vertex of $Y$ can be $k-1$ inside $Y$. The maximum degree of a vertex of $Y$ in $Q$ is $\gamma(n^2+k-1)$ which is less than the minimum degree requirement in $Q$ which is $\gamma(n^2+k)$. A contradiction.
		\end{proof}

	}

\vahid{
	Here, we show that how to generalize the construction so that it verifies \cref{thm:mqc-hardness} is applicable for different values of $\gamma$. For any integer $d \geq 2$, let say $A_i (2 \leq i \leq d)$ is a $r^2$-clique and every vertex in $A_i$ is connected to all vertices of $A_1$. With setting $\gamma = \frac{r^2 + r - 1}{d\cdot{}r^2 + 2r - 1}$ the same calculation, we can show that $X$ is a quasi-clique of size $d\cdot{}r^2 + r$, and $X$ is not a maximal $\gamma$-quasi-clique if $G'$ contains an $r$-clique.
}
}
\vspace{-1ex}
\begin{figure*}[t!]
	\captionsetup[subfigure]{justification=centering}
	\centering
	\subfloat[$\gamma'=0.85$]{%
		\includegraphics[width=.25\textwidth] {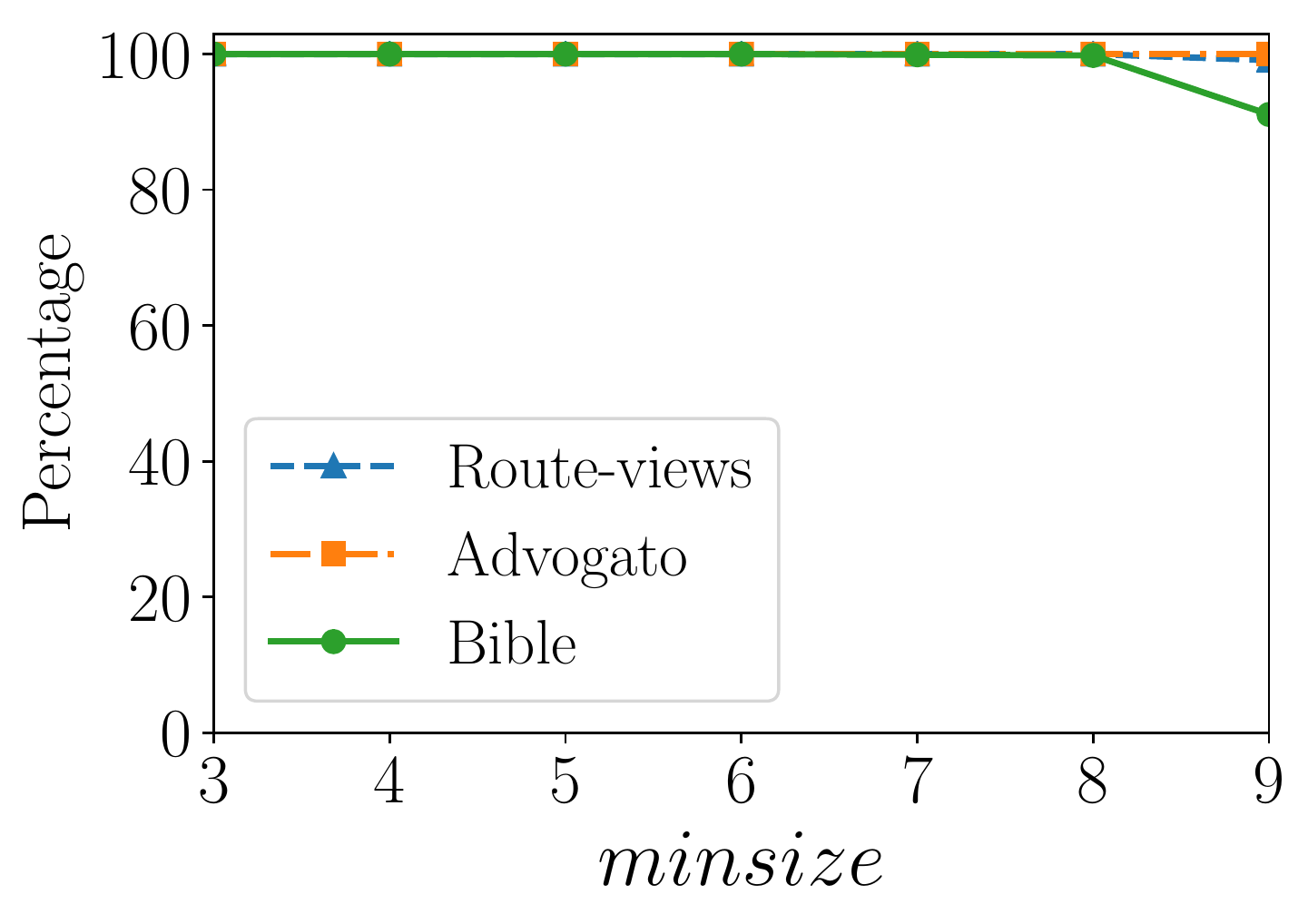}}
	\subfloat[$\gamma'=0.9$]{%
		\includegraphics[width=.25\textwidth] {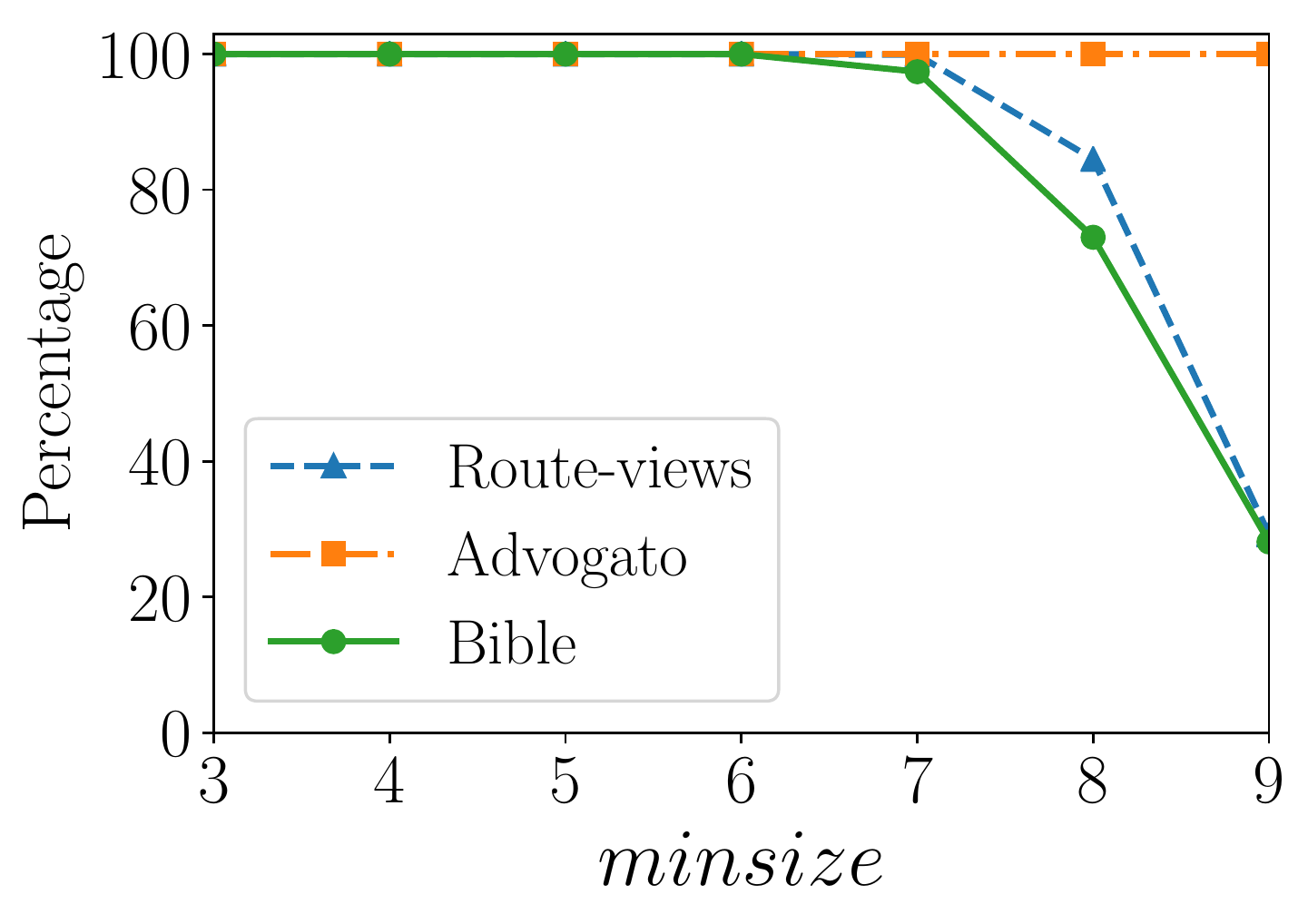}}
	\subfloat[$\gamma'=0.95$]{%
		\includegraphics[width=.25\textwidth] {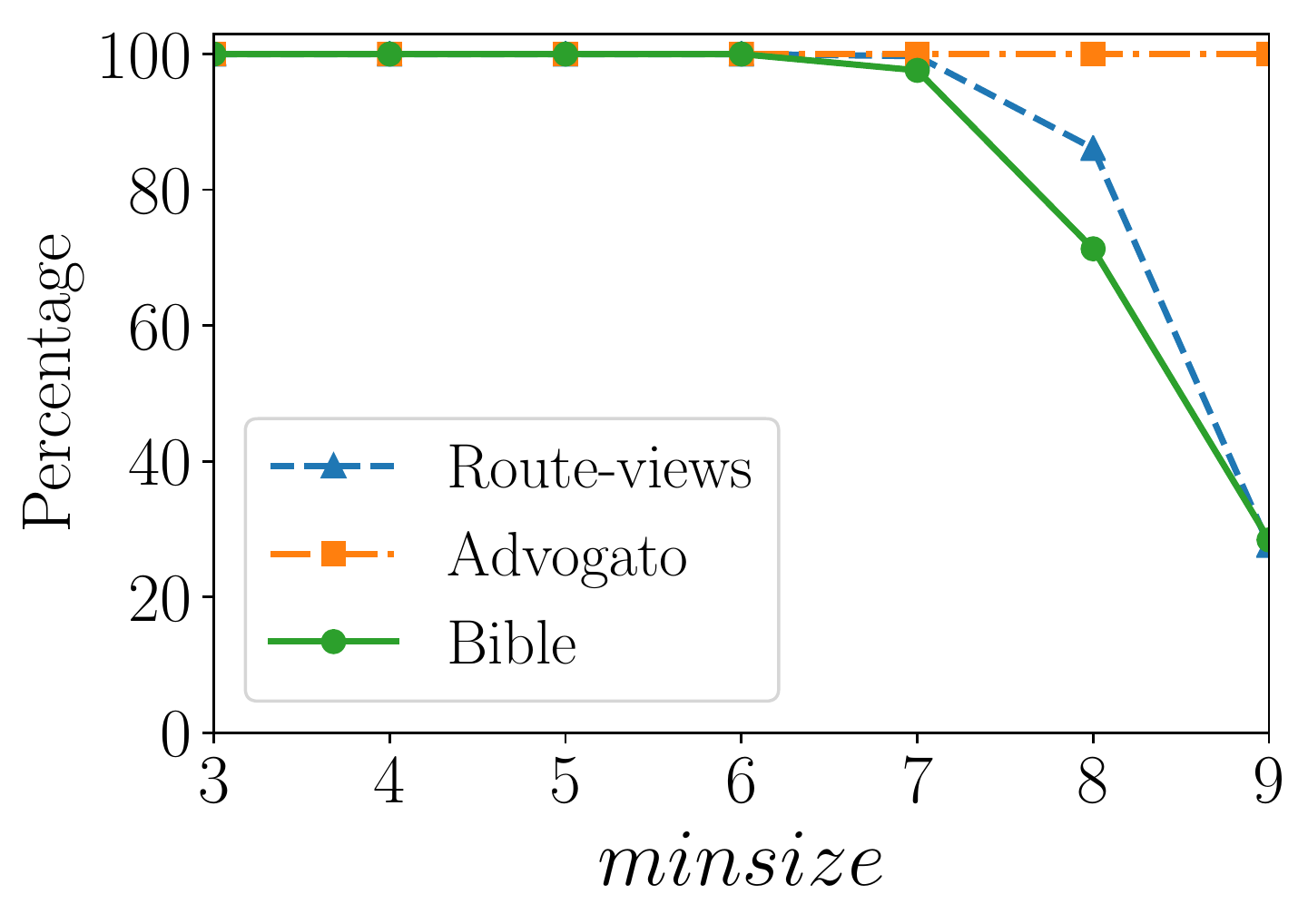}}
	\subfloat[$\gamma'=1.0$]{%
		\includegraphics[width=.25\textwidth] {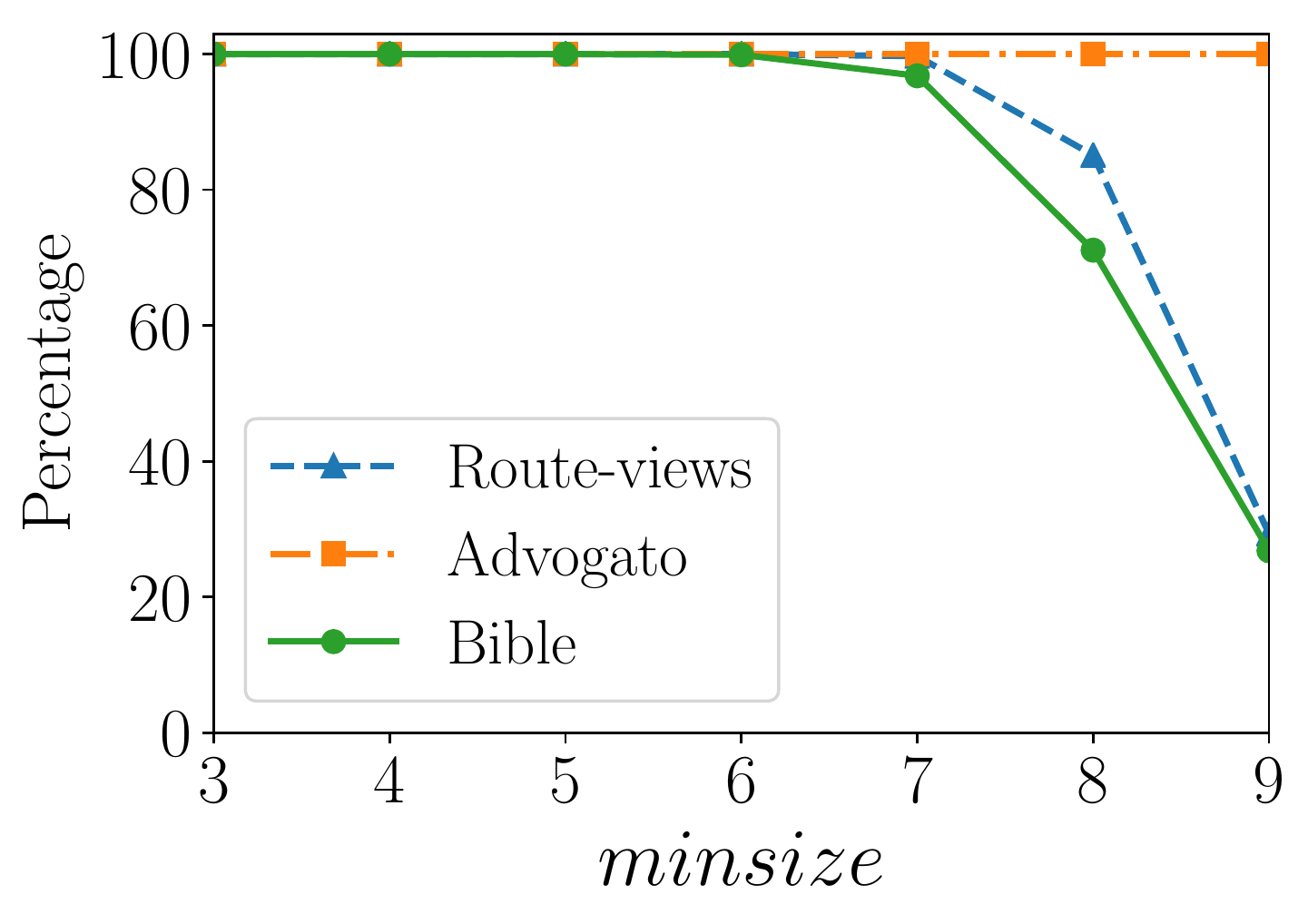}}
	\caption{On the $x$-axis is the size of a $\gamma'$-quasi-clique. The $y$-axis shows what fraction of the sampled $0.8$-quasi-cliques contained a $\gamma'$-quasi-clique of a given size. The results are shown for three different graphs, \advo, \route, and \bible.	
		\label{fig:kernel}}	
	\vspace{-3ex}
\end{figure*}
\section{Algorithm for Top-$k$ QCE}
\label{sec:algorithm}
In this section, we present algorithms for enumerating top-$k$ $\gamma$-quasi-cliques given input graph $G$ parameters $k$ and $\gamma$. 

A straightforward {\bf baseline} algorithm for \topKqce~ is to enumerate all maximal quasi-cliques using the \quick algorithm \cite{LW08}, and then only output the $k$ largest among them. We call this algorithm as \pl. \\



{\bf \kqc Algorithm:} We next present our heuristic algorithm for \topKqce, \kqc. The intuition is as follows. Our observations from experiments on a range of graphs showed that within a dense subgraph, a $\gamma$-quasi-clique for a given $\gamma$, there is usually a smaller, but denser subgraph, i.e. a $\gamma'$-quasi-clique with $\gamma' > \gamma$. We describe some results below.

We considered graphs \advo, \route, and \bible\footnote{These graphs are described in \cref{table:graph-stats}}, and $\gamma=0.8$. We randomly sampled $1000$ $\gamma$-quasi-cliques each from the graphs \advo, \route, and \bible, of size at least $10$ (\mnsz{} $ = 10$)\footnote{We did not consider the graph \slsh{} because the the size of largest \qc{} in this graph is less than $10$.}. Interestingly, we found that {\bf every sampled $0.8$-quasi-clique from  \advo, \route, and \bible~ graphs had, as a subgraph, a $\gamma'$-quasi-clique of size at least $7$, for different values of $\gamma'$, ranging from $0.85$ to $1.0$}. Details are shown in the \cref{fig:kernel}. 

The above suggests that large $\gamma$-quasi-cliques (usually) contain $\gamma'$-quasi-cliques of substantial sizes as subgraphs for $\gamma' > \gamma$. Note that an adversary can form a $\gamma$-quasi-clique without any large  $\gamma'$-quasi-clique contained within. However, our experiments show that this is not the case in real-world networks, and the size of $\gamma'$-quasi-cliques (or ``kernels'') in  $\gamma$-quasi-cliques is relatively large (See \Cref{fig:kernel}).

\begin{figure}[t!]
	\captionsetup[subfigure]{justification=centering}
	\centering
	\includegraphics[width=.3\textwidth] {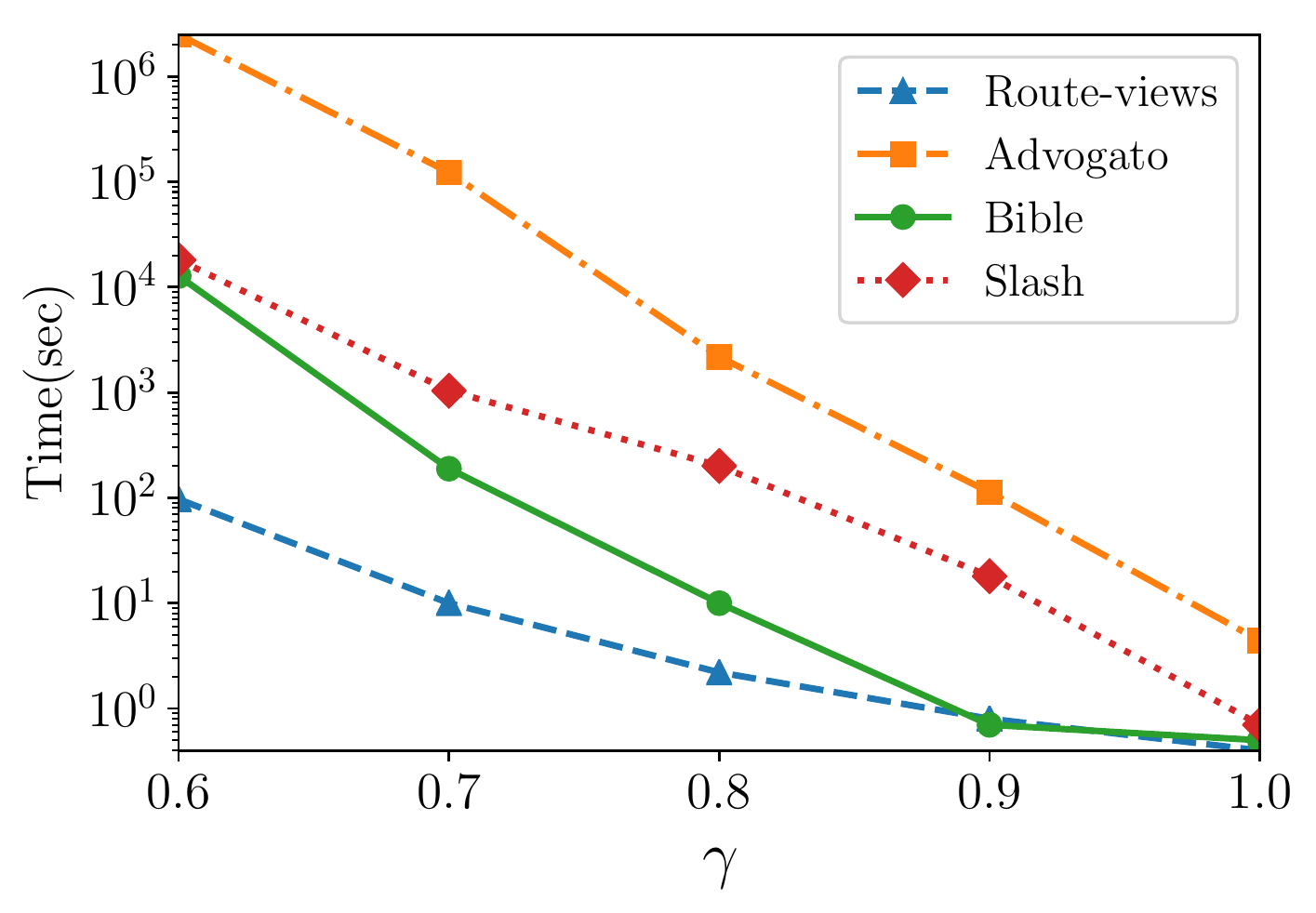}
	\caption{Runtimes of $\gamma$-quasi-clique enumeration for different values of $\gamma$, for \mnsz $= 5$, and for different graphs. }
	\label{fig:time-vs-gamma}
\end{figure}
We further note that as $\gamma$ increases, the complexity of finding $\gamma$-quasi-cliques decreases substantially. To see this, \Cref{fig:time-vs-gamma} shows the computational cost of enumerating $\gamma$-quasi-cliques as $\gamma$ increases. Note that the $y$-axis is in log-scale. The trend is that the cost decreases exponentially as $\gamma$ increases.

Based on the above observations, our algorithm idea is as follows. Given a threshold $0 < \gamma < 1$, we choose $\gamma'$ such that $\gamma < \gamma' \le 1$. We then enumerate the set $Y$ consisting of the largest $k'$ maximal $\gamma'$-quasi-cliques in the graph $G$. These dense subgraphs in $Y$ are considered ``kernels'' that are then further expanded to recover $k$ maximal $\gamma$-quasi-cliques in $G$. Thus, our algorithm has two parts:
\begin{itemize}
\item [(1)]
{\em Kernel Detection:} Find kernels in the graph, i.e. $\gamma'$-quasi-cliques for the chosen value of $\gamma'$. Then, among all kernels, largest $k'$ maximal kernels are extracted.

\item [(2)]
{\em Kernel Expansion:} Expand detected kernels into larger $\gamma$-quasi-cliques. This can be performed by iterating through the enumerated $\gamma'$-quasi-cliques and then using an existing algorithm for QCE, such as \quick \cite{LW08} to enumerate all maximal quasi-cliques that contain each kernel. Next, among all extracted $\gamma$-quasi-cliques, largest-$k$ maximal $\gamma$-quasi-cliques are enumerated.
\end{itemize}

The \kqc algorithm is described in Algorithm \ref{alg:kqc}. In \Cref{line:kqc-getX} of \Cref{alg:kqc}, a modified version of \quick{}, \quickm, is used to extract all $\gamma'$-quasi-cliques. This version does not actually check if a quasi-clique is maximal, before outputting it. For \kqc{}, the quasi-cliques from the subroutine need not be maximal, since \Cref{alg:kmax} sorts quasi-cliques in an ascending order of their sizes and suppresses non-maximal quasi-cliques. By omitting a maximality check, \quickm{} is more efficient than \quick{}.

\begin{lemma}
	\Cref{alg:kmax} returns at most $k$ largest quasi-cliques that are maximal with respect to a set of quasi-cliques $S$ (i.e. not contained within any other quasi-clique in $S$).
	\label{lem:kmax}
\end{lemma}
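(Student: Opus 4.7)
The plan is to establish the two assertions of the lemma in turn: first, that every quasi-clique output by \Cref{alg:kmax} is maximal with respect to $S$ (i.e., properly contained in no other member of $S$), and second, that at most $k$ such quasi-cliques are returned, specifically the largest surviving ones. Throughout, I will lean on the elementary structural observation that whenever $V(Q) \subsetneq V(Q')$ holds for two quasi-cliques, we automatically have $|Q| < |Q'|$, so proper vertex-containment forces strictly larger size.

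For the maximality claim, I will exploit the fact that \Cref{alg:kmax} begins by sorting $S$ in ascending order of size. Scanning in this order, each candidate $Q$ is compared only against elements that appear later in the list, which are therefore of size at least $|Q|$. By the size-versus-containment observation, any element of $S$ that properly contains $Q$ must be strictly larger and hence appear strictly later in the sorted list; so a local check against later items is equivalent to a global check against all of $S$. Elements of $S$ with the same size as $Q$ cannot properly contain $Q$, so ties in size do not require separate handling. Consequently, the quasi-cliques that survive the suppression step are exactly those of $S$ not properly contained in any other member of $S$, i.e., the maximal elements with respect to $S$.

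For the size bound, I will argue that the retention step either terminates as soon as $k$ survivors have been accumulated (scanning from the large end of the sorted list), or performs a final truncation of the filtered list to its $k$ largest entries. Either way, the output contains at most $k$ items; and because the list is already sorted by size, the retained items are precisely the $k$ largest maximal quasi-cliques in $S$. If fewer than $k$ maximal quasi-cliques exist in $S$, all of them are returned, which is consistent with the ``at most $k$'' phrasing of the lemma.

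The main obstacle I expect is making rigorous the claim that pruning only against later (larger) quasi-cliques in the sorted order is sufficient to certify maximality within the full set $S$. The hinge is the structural fact relating proper containment to strict size inequality; once this is stated cleanly, the correctness argument reduces to routine bookkeeping on the sorted list, and the size-bound claim follows immediately from the truncation or early-termination mechanism.
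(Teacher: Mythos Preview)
Your argument is correct and mirrors the paper's: the size sort together with the containment-implies-strictly-larger observation justifies that the subset check in \Cref{line:kmax-if} filters precisely the non-maximal members of $S$, and the guard $|Q|<k$ in the same line supplies the cardinality bound. You spell out the containment/size link more carefully than the paper does, but the structure of the proof is the same.
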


\begin{IEEEproof}
	A quasi-clique $q$ is added to the set $Q$ in \Cref{line:kmax-added} of \Cref{alg:kmax} if the size of $Q$ is less than $k$ and $q$ is not a subset of any other quasi-clique in $Q$ (Using the {\small \textbf{\texttt{if}}} block in \Cref{line:kmax-if}). Because of the latter condition, all quasi-cliques in $Q$ are maximal with respect to the quasi-cliques of $S$. On the other hand, since all $\gamma$-quasi-cliques in $S$ are sorted in an ascending order of their sizes (\Cref{line:kmax-sort}), $Q$ maintains the largest maximal quasi-cliques from $S$, and the size of $Q$ cannot be larger than $k$. 
\end{IEEEproof}

With \cref{lem:kmax,lem:kqc}, we show that every $\gamma$-quasi-clique, returned by \Cref{alg:kqc}, is maximal and has at least \mnsz vertices. 

\begin{algorithm}[b!]
	\caption{\kqc($G,\gamma,\mnsz,k$)
		\label{alg:kqc}
	}
	\DontPrintSemicolon
	\SetKwInOut{Input}{Input}
	\KwIn{Graph $G = (V, E)$, parameter $0 < \gamma < 1$, size threshold $\mnsz$, and an integer $k$.}
	\KwOut{$k$ maximal $\gamma$-quasi-cliques in $G$ with at least $\mnsz$ vertices in each.}
	
	Choose $\gamma'$ such that $\gamma < \gamma' \le 1$, and $k' \geq k$\label{line:kqc-choose}\;
	$X \gets \quickm(G, \phi, \gamma', \mnsz)$ \Comment{ \clrb{Kernel Detection -- retrieve $\gamma'$-quasi-cliques from $G$.}} \label{line:kqc-getX}\;
	$Y \leftarrow \kmax(X, k')$ \Comment{\clrb{\Cref{alg:kmax}}.}\label{line:kqc-kmaxX}\;
	$Z \gets \emptyset$\;
	\For{$\text{a quasi-clique } q \in Y$ \label{line:kqc-forY}}
	{
		$Z \gets Z \cup \quickm(G, q, \gamma, \mnsz)$ \Comment{ \clrb{Kernel Expansion -- add $\gamma$ quasi-cliques through expanding $q$.}} \label{line:kqc-addZ}\;
	}
	$R \gets \kmax(Z, k)$ \label{line:kqc-kmaxZ} \Comment{\clrb{\Cref{alg:kmax}.}}\;
	
	\Return $R$ \label{line:kqc-return}\;
\end{algorithm}
\begin{lemma}
	The set $R$ in \Cref{alg:kqc} contains at most $k$ $\gamma$-quasi-cliques, where each quasi-clique has at least \mnsz vertices, and is a maximal quasi-clique in the graph. 
	\label{lem:kqc}
\end{lemma}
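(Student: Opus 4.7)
The plan is to verify the three assertions of the lemma in turn: (i) $|R| \le k$; (ii) every quasi-clique in $R$ has at least \mnsz{} vertices; and (iii) every quasi-clique in $R$ is a maximal $\gamma$-quasi-clique in $G$. Claims (i) and (ii) are essentially immediate from the construction of \Cref{alg:kqc}, and the only real content of the lemma is (iii).

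For (i), the final line of \Cref{alg:kqc} is $R \gets \kmax(Z, k)$, so \Cref{lem:kmax} directly yields $|R| \leq k$. For (ii), every member of the intermediate set $Z$ is produced by a call $\quickm(G, q, \gamma, \mnsz)$ for some kernel $q \in Y$, and by the specification of \quickm{} every emitted $\gamma$-quasi-clique has at least \mnsz{} vertices; since $R \subseteq Z$, the same lower bound carries over to every element of $R$.

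For (iii) I would argue by contradiction. Suppose some $q \in R$ were not maximal in $G$; then there is a $\gamma$-quasi-clique $q'$ of $G$ with $V(q) \subset V(q')$ and $|q'| > |q| \geq \mnsz$. Because $q \in Z$, the loop at \Cref{line:kqc-forY} must have produced $q$ from some kernel $k_0 \in Y$, so $V(k_0) \subseteq V(q) \subset V(q')$. By the completeness guarantee of \quickm{}, which enumerates \emph{every} $\gamma$-quasi-clique of size at least \mnsz{} that contains its input set, the superset $q'$ itself lies in $\quickm(G, k_0, \gamma, \mnsz) \subseteq Z$. But then the presence of $q'$ in $Z$ with $V(q) \subset V(q')$ would have caused the {\small \texttt{if}} guard in \kmax{} at \Cref{line:kmax-if} to reject $q$, contradicting $q \in R$.

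The main obstacle is precisely this last step: upgrading maximality ``within $Z$'' (which is all that \Cref{lem:kmax} delivers) to maximality ``within $G$''. The upgrade rests on the completeness property of \quickm{} stated earlier in the paper, namely that dropping the maximality filter from \quick{} causes it to emit every $\gamma$-quasi-clique of size at least \mnsz{} containing its seed, not merely the maximal ones. Without this property, a strict superset $q'$ of some output $q$ could be missed by the enumeration and $q$ would be falsely reported as maximal; with it in hand, the contradiction closes and the lemma follows.
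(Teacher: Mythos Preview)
Your proposal is correct and follows essentially the same route as the paper: both arguments split the lemma into the three claims, dispatch $|R|\le k$ via \Cref{lem:kmax}, obtain the \mnsz{} bound from membership in $Z$, and prove maximality by the same contradiction---tracing a non-maximal $q\in R$ back to its kernel, invoking the completeness of \quickm{} to place a strict superset $q'$ into $Z$, and then using \Cref{lem:kmax} to exclude $q$ from $R$. The only cosmetic difference is that for the \mnsz{} bound you appeal directly to the size parameter of \quickm{}, whereas the paper argues via containment of a kernel from $Y$; both are valid.
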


\begin{IEEEproof}
	In \Cref{line:kqc-getX} of \Cref{alg:kqc}, \quickm extracts all $\gamma'$-quasi-cliques in the graph $G$. Then in \Cref{line:kqc-kmaxX} and by \Cref{lem:kmax}, $X$ contains largest maximal $\gamma'$-quasi-cliques of $G$, where $|X| \leq k'$. 
	In \Cref{line:kqc-forY,line:kqc-addZ}, every $\gamma$-quasi-clique of the set $Z$ contains at least a $\gamma'$-quasi-clique from the set $Y$. It is because we expand every $\gamma'$-quasi-clique in $Y$ by \quickm using the parameter $\gamma$. Therefore, any quasi-clique in the set $Z$ has at least \mnsz vertices. Since $R \subseteq Z$ (\Cref{line:kqc-kmaxZ}), any quasi-clique in $R$ has at least \mnsz vertices. In addition, by \cref{lem:kmax}, we know that $|R| \leq k$.
	
	In the rest, we show that quasi-cliques in $R$ are maximal in $G$. By contradiction, assume that there is a $\gamma$-quasi-clique in $R$ which is \textit{not} maximal in $G$, $i.e.$ suppose that there are two $\gamma$-quasi-cliques $h$ and $h'$ in $G$ s.t. $h \in R$, $h \subset h'$, and $h' \not\in R$. We know that $h$ is discovered by the expansion of a $\gamma'$-quasi-clique $q$ (\Cref{line:kqc-addZ}). Hence, $q \subseteq h$ and $q \subset h'$. On the other hand, \quickm ensures that all $\gamma$-quasi-cliques containing $q$ are enumerated. The enumerated quasi-cliques are added to $Z$ in \Cref{line:kqc-addZ}. Therefore, $h' \in Z$ because $h'$ is a $\gamma$-quasi-clique and contains $q$. In addition, since $h \subset h'$, \Cref{lem:kmax} ensures that $h' \in R$ and $h \not\in R$. This contradicts our assumption that $h \in R$ and $h' \not\in R$.
\end{IEEEproof}

\begin{algorithm}[t!]
	\caption{\kmax($S, k$)
		\label{alg:kmax}
	}
	\DontPrintSemicolon
	\SetKwInOut{Input}{Input}
	\KwIn{Set of quasi-cliques $S$ and an integer $k$.}
	\KwOut{top (largest)-$k$ maximal quasi-cliques from $S$.}
	Sort $S$ in an ascending order of sizes of quasi-cliques \label{line:kmax-sort}\;
	$Q \gets \emptyset$\;
	\For{$\text{a quasi-clique } q \in S$}
	{
		\If{$(|Q| < k) \wedge (\forall q' \in Q, q \not\subseteq q')$ \label{line:kmax-if}}{
			$Q \gets Q \cup q$ \label{line:kmax-added}
		}
	}
	\Return $Q$\;
\end{algorithm}

In \Cref{line:kqc-choose} of \Cref{alg:kqc}, we need to choose two user-defined parameters, $\gamma'$ and $k'$, based on the given values of $\gamma$ and $k$. Here, we discuss the influence of these parameters on the accuracy and runtime of \kqc{}.

\textbf{Dependence on $\gamma'$:} For a given $\gamma$, varying the value $\gamma' \in (\gamma, 1]$ has effect on the runtime of \kqc. Based on our observation in \Cref{fig:time-vs-gamma}, for a high value of $\gamma'$, the kernel detection phase of \kqc{} can extract kernels faster. However, these kernels have relatively smaller sizes, and the expansion phase will take a longer time. Conversely, if $\gamma'$ is small (close to $\gamma$), kernel detection takes more time to extract $\gamma'$-quasi-cliques while the kernel expansion phase requires less time as kernels have relatively larger sizes and less chance to be expanded.

\textbf{Dependence on $k'$:} In \Cref{alg:kqc}, $k'$ decides the number of kernels, which need to be extracted in the kernel detection and then expanded to mine $\gamma$-quasi-cliques. The higher the value of $k'$, the more the kernels are needed to be processed in \kqc{}. Then, the runtime of \kqc{} may increase with a higher value for $k'$. However, the chance to mine larger maximal $\gamma$-quasi-cliques also increases since more kernels need to be enlarged in the kernel expansion phase. Therefore, a higher value of $k'$ can increase both the runtime and the accuracy of \kqc{}. 

In Section~\ref{sec:expts}, we present an empirical sensitivity analysis of parameters $\mnsz$, $\gamma$, $\gamma'$, $k$, and $k'$ on the accuracy and runtime of the algorithm.


\vspace{-1ex}
\begin{figure*}[!t]
	\centering
	\includegraphics[width=0.3\textwidth]{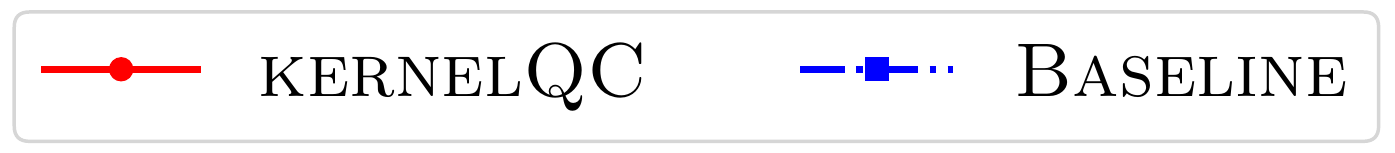}
	\vspace{-5ex}
\end{figure*}
\begin{figure*}[!t]
	\captionsetup[subfigure]{justification=centering}
	\centering
	\subfloat[\advo]{%
		\includegraphics[width=.24\textwidth] {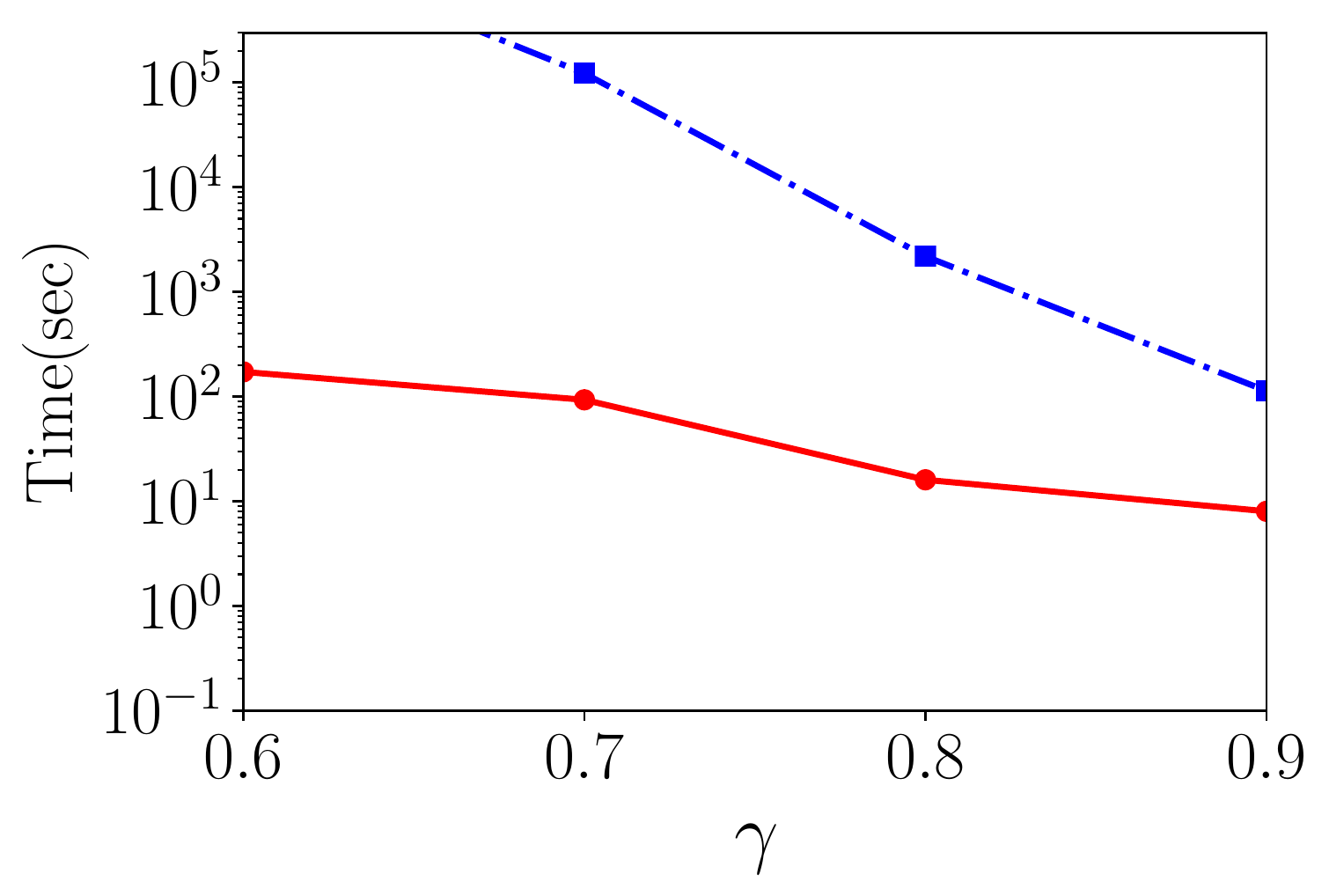}}
	\subfloat[\route]{%
		\includegraphics[width=.24\textwidth]{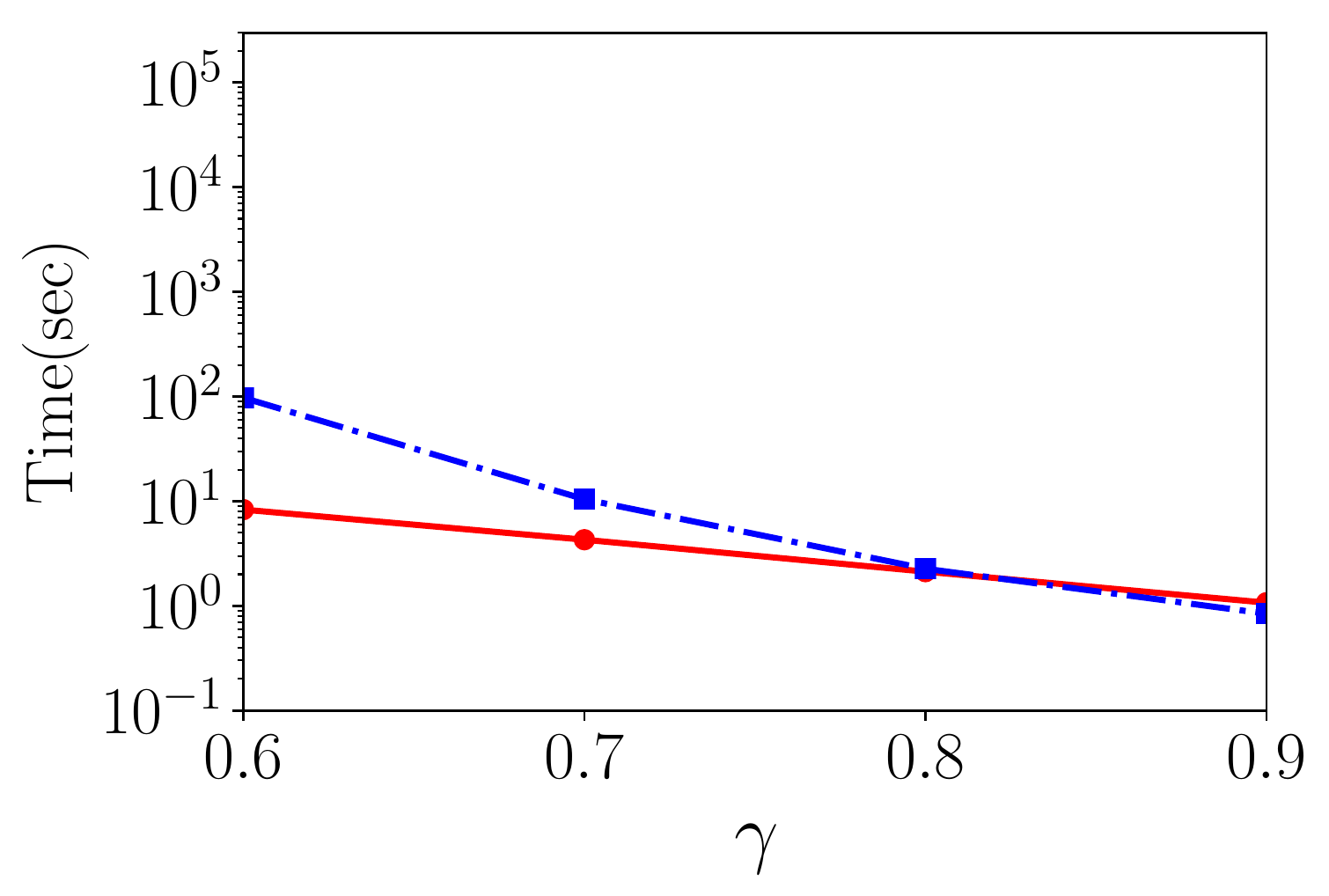}}
	\subfloat[\bible]{%
		\includegraphics[width=.24\textwidth] {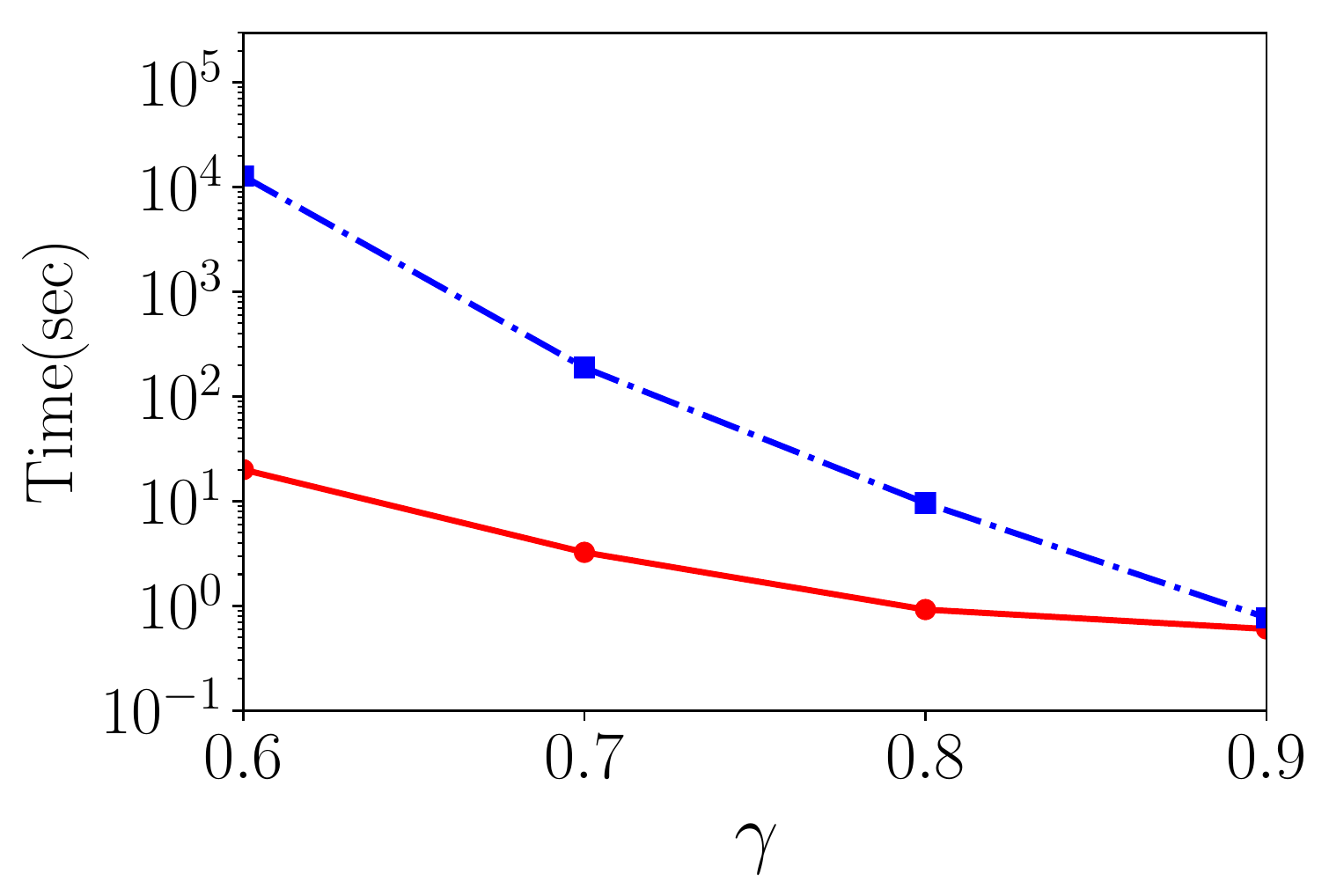}}
	\subfloat[\slsh]{%
		\includegraphics[width=.24\textwidth]{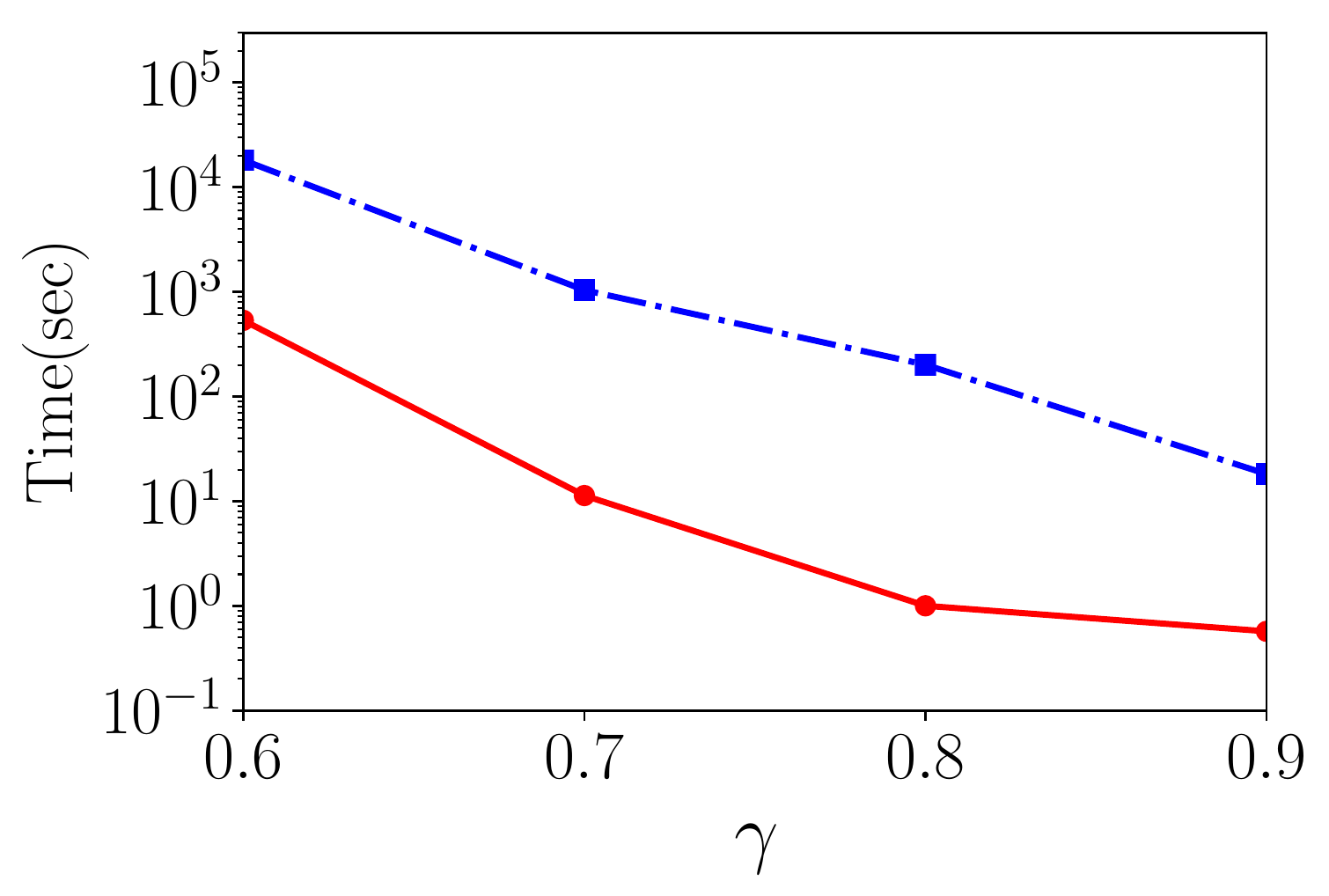}}
	\caption{The runtimes of \kqc{} and \pl{} as a function of $\gamma$.}
	\label{fig:g-vs-time}
	\vspace{-2ex}
\end{figure*}

\newcommand{\tmpl}{\pl{} takes}
\newcommand{\spd}{Speedup factor of \kqc{} over \pl}
\newcommand{\ac}{Error percentage of \kqc}
\section{Experiments}
\label{sec:expts}
\textbf{Networks and Experimental Setup.} We used real-world networks from publicly available repository at KONECT.\footnote{http://konect.uni-koblenz.de/} The networks we used are summarized in \Cref{table:graph-stats} and are converted to simple graphs by removing self-loops and multiple edges. We implemented the \pl{} and \kqc{} algorithms in C++ and compiled with \texttt{g++} compiler with \texttt{-O3} as the optimization level. The experiments are conducted on a cluster of machines equipped with a 2.0 GHz 8-Core Intel E5 2650 and 64.0 GB memory.\\

\textbf{Metrics.} As discussed in \Cref{sec:algorithm}, \kqc{} is a heuristic algorithm for extracting the top-$k$ maximal \qc{}s. There is no guarantee that it will always be correct, i.e. it may not always enumerate the $k$ largest maximal quasi-cliques. On the other hand, \pl{} mines the exact top-$k$ maximal \qc{}s. We need a metric to measure the accuracy of \kqc compared to the \pl algorithm. For this purpose, we use S{\o}ergel similarity, which is as follows. Suppose that $H = \langle h_1,h_2,h_3,\cdots,h_k \rangle$ is an ascending ordered list, maintaining the sizes of $k$ maximal \qc{}s returned by \kqc{}.\footnote{A size of a quasi-clique is the number of vertices in the quasi-clique.} Similarly, suppose that $Z = \langle z_1,z_2,z_3,\cdots,z_k \rangle$ is a list in an ascending order, which contains the sizes of the top-$k$ maximal \qc{}s, returned by \pl{}, the exact algorithm. The S{\o}ergel similarity between two lists $H$ and $Z$ is as follows:
\begin{align}
 \text{S{\o}ergel similarity (H, Z)} = \frac{\sum_{i=1}^{k} |h_i - z_i|}{\sum_{i=1}^{k}\max{(h_i, z_i)}} \times 100
\end{align}
Using a similar method, we can compute the error percentage of a list $H$ compared to a list $Z$. Here, we define how we measure the error percent of a list $H$ from a list $Z$:
{\small
\begin{align}
\text{Error percent (H, Z)} = \left(1 - \frac{\sum_{i=1}^{k} |h_i - z_i|}{\sum_{i=1}^{k}\max{(h_i, z_i)}}\right) \times 100
\label{eq:error-percent}
\end{align}
}

For our purpose, the S{\o}ergel similarity is better suited than other metrics such as Jaccard similarity. In particular, if we used Jaccard similarity to measure the similarity between vertex sets of two quasi-cliques, this will fail to consider the sizes of the quasi-cliques. If two algorithms return sets of quasi-cliques that are of exactly the same sizes, but whose elements are different, then the Jaccard similarity will show a poor match, while the S{\o}ergel similarity will show a perfect match. 

Note that S{\o}ergel similarity shows the similarity of two lists of numbers. Here, we consider the lists of \textit{sizes} of quasi-cliques, obtained by \kqc and \pl algorithms. As mentioned above, the two lists with length $k$, where each list is sorted in an ascending order of sizes of quasi-cliques. Henceforth, when we refer to the error percent of \kqc{}, we compare the returned lists of \kqc{} and \pl{} using \Cref{eq:error-percent}.
\begin{table}[t!]
	\centering
	
	\scalebox{0.84}{
\begin{tabular}{|l|r|r|r|c|}
	\hline
	Graph   & \multicolumn{1}{c|}{\#Vertices} & \multicolumn{1}{c|}{\#Edges} & \multicolumn{1}{c|}{Maximum Degree} & Average Degree \\ \hline
	\advo    & \num{5155}                            & \num{39285}                        & \num{803}                                 & 15.24          \\ \hline
	\route   & \num{6474}                            & \num{12572}                        & \num{1458}                                & 3.88           \\ \hline
	\bible   & \num{1773}                           & \num{9131}                        & \num{364}                                 & 10.30          \\ \hline
	\slsh    & \num{51083}                          & \num{116573}                        & \num{2915}                                 & 4.56           \\ \hline
	\live    & \num{104103}                         & \num{2193083}                      & \num{2980}                                & 42.13          \\ \hline
	\youtube & \num{1134890}                        & \num{2987624}                     & \num{28754}                               & 5.27           \\ \hline
	\hyves   & \num{1402673}                         & \num{2777419}                      & \num{31883}                               & 3.96           \\ \hline
\end{tabular}
	}
\caption{Summary of the input graphs.}
\label{table:graph-stats}
\end{table}
\begin{figure*}[t]
	\centering
	\includegraphics[width=0.6\textwidth]{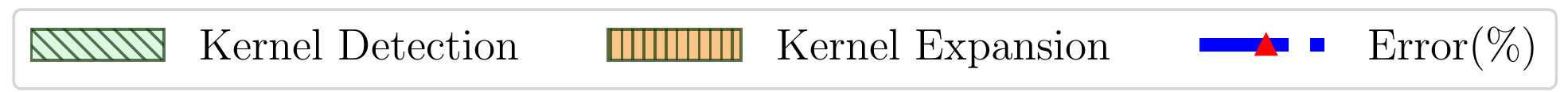}
	\vspace{-5ex}
\end{figure*}
\begin{figure*}[t!]
	\captionsetup[subfigure]{justification=centering}
\centering
  \subfloat[$\gamma=0.6$ did not finish after \num{259200} secs]
  [$\gamma=0.6$. \pl{} did not \\ finish after \num{259200} secs.]{%
  \includegraphics[width=.24\textwidth] {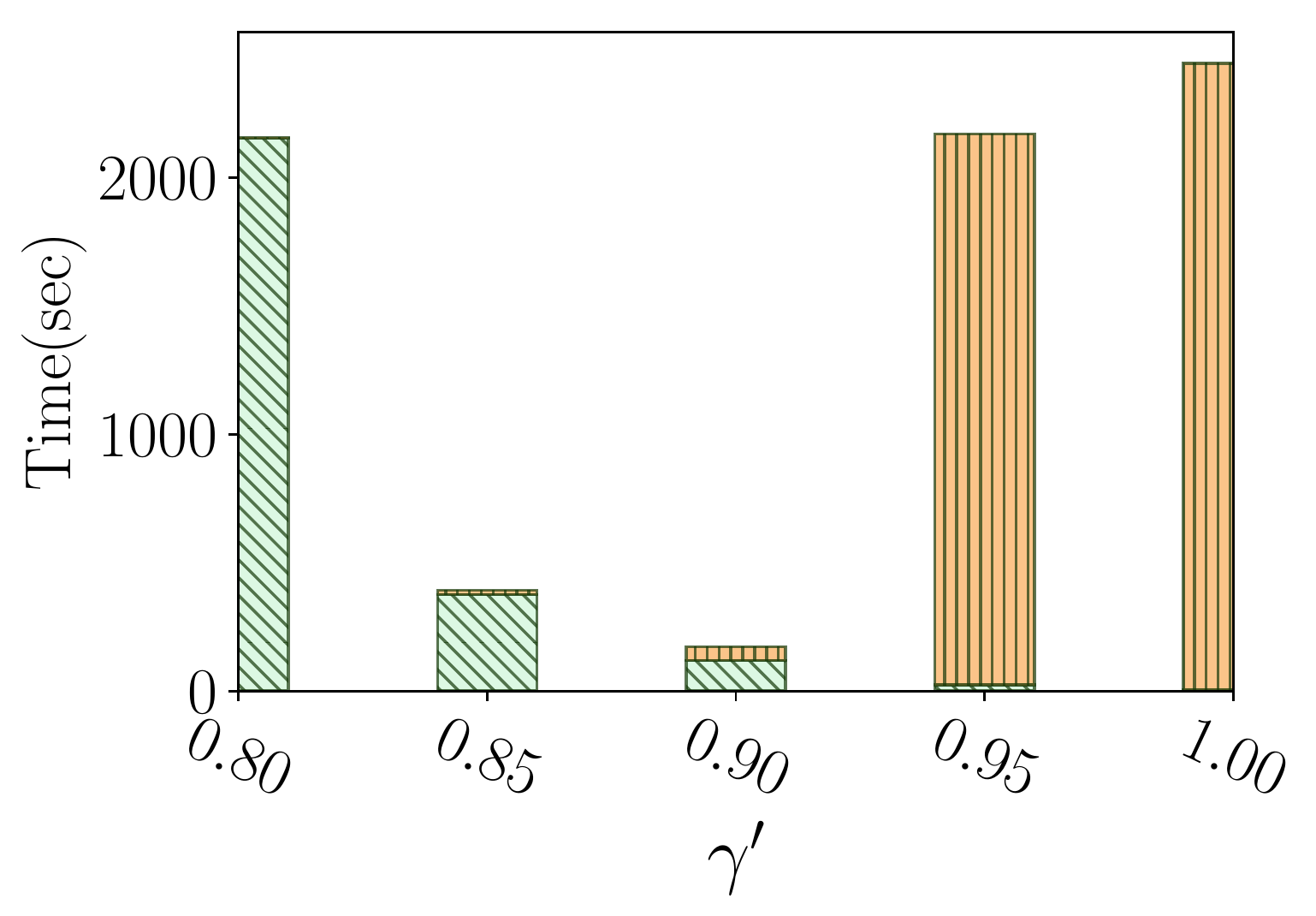} \label{fig:advo-0.6}}
  \subfloat[$\gamma=0.7$, \tmpl{} 122618 secs][$\gamma=0.7$, error is $0\%$. \\ \tmpl{} 122618 secs]{%
  \includegraphics[width=.26\textwidth]{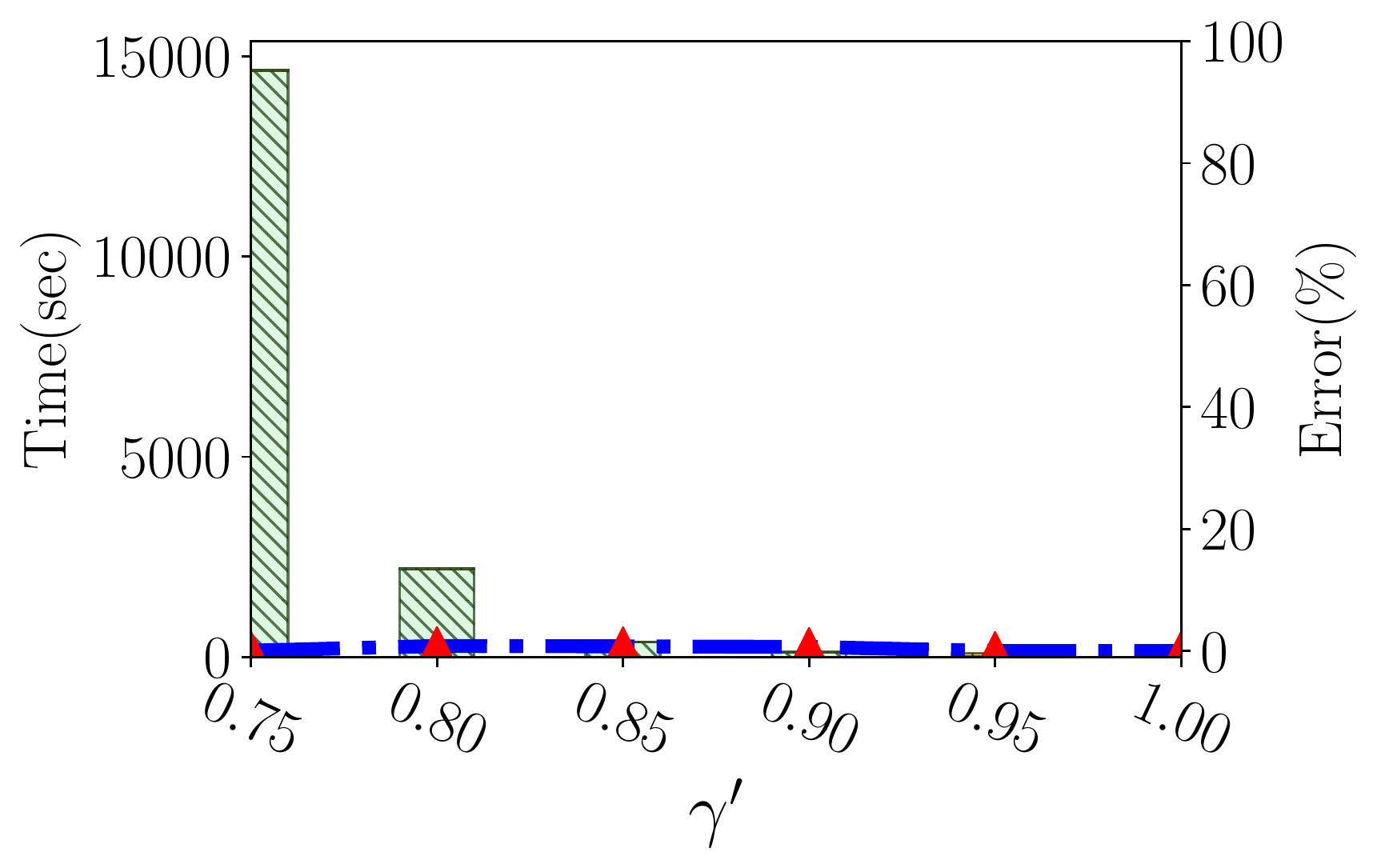} \label{fig:advo-0.7}}
    \subfloat[$\gamma=0.8$, \tmpl{} 2185 secs][$\gamma=0.8$, error is $0\%$. \\ \tmpl{} 2185 secs]{%
  	\includegraphics[width=.23\textwidth] {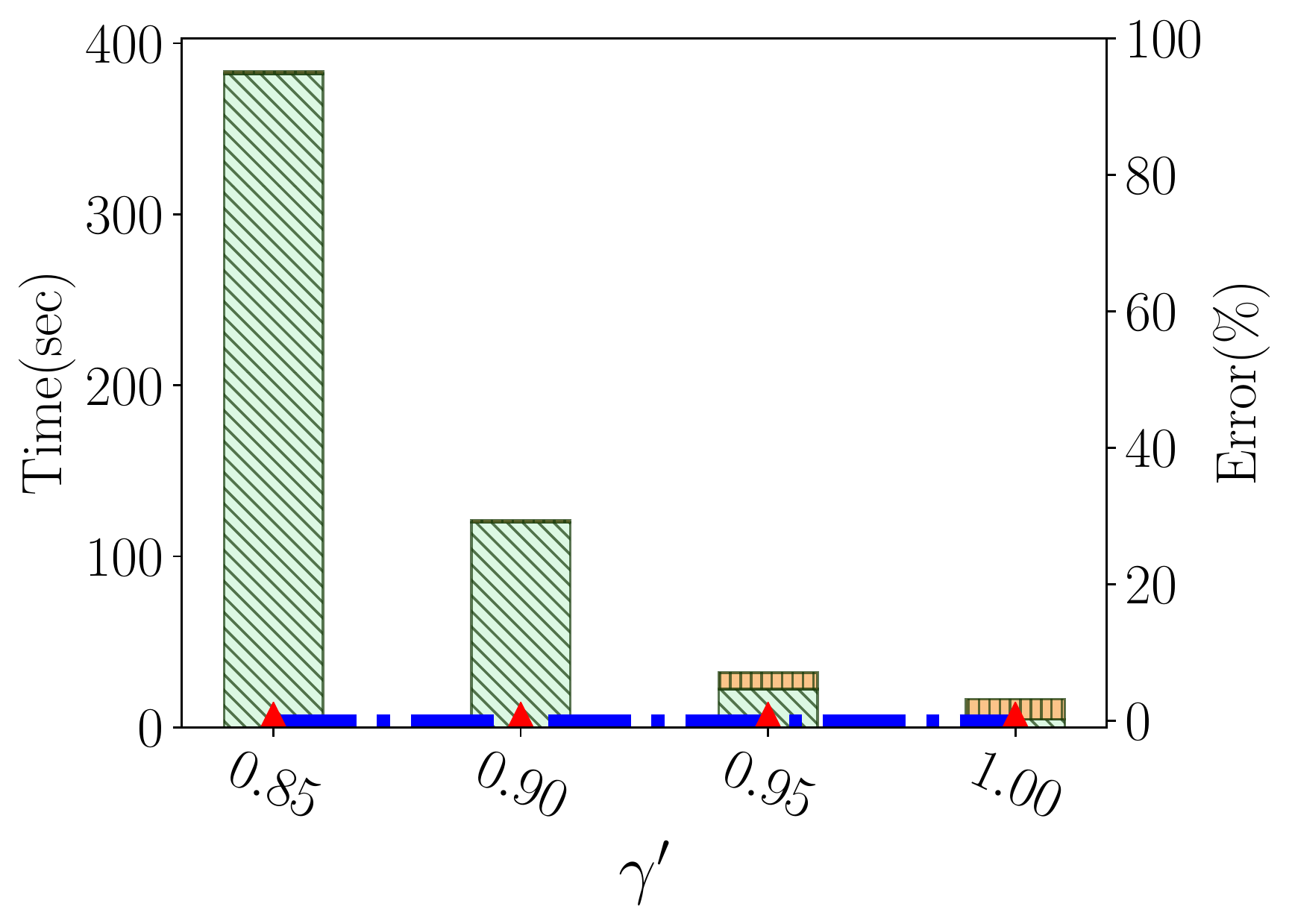}}
  \subfloat[$\gamma=0.9$, \tmpl{} 113 secs][$\gamma=0.9$, error is $0\%$. \\ \tmpl{} 113 secs]{%
  	\includegraphics[width=.23\textwidth]{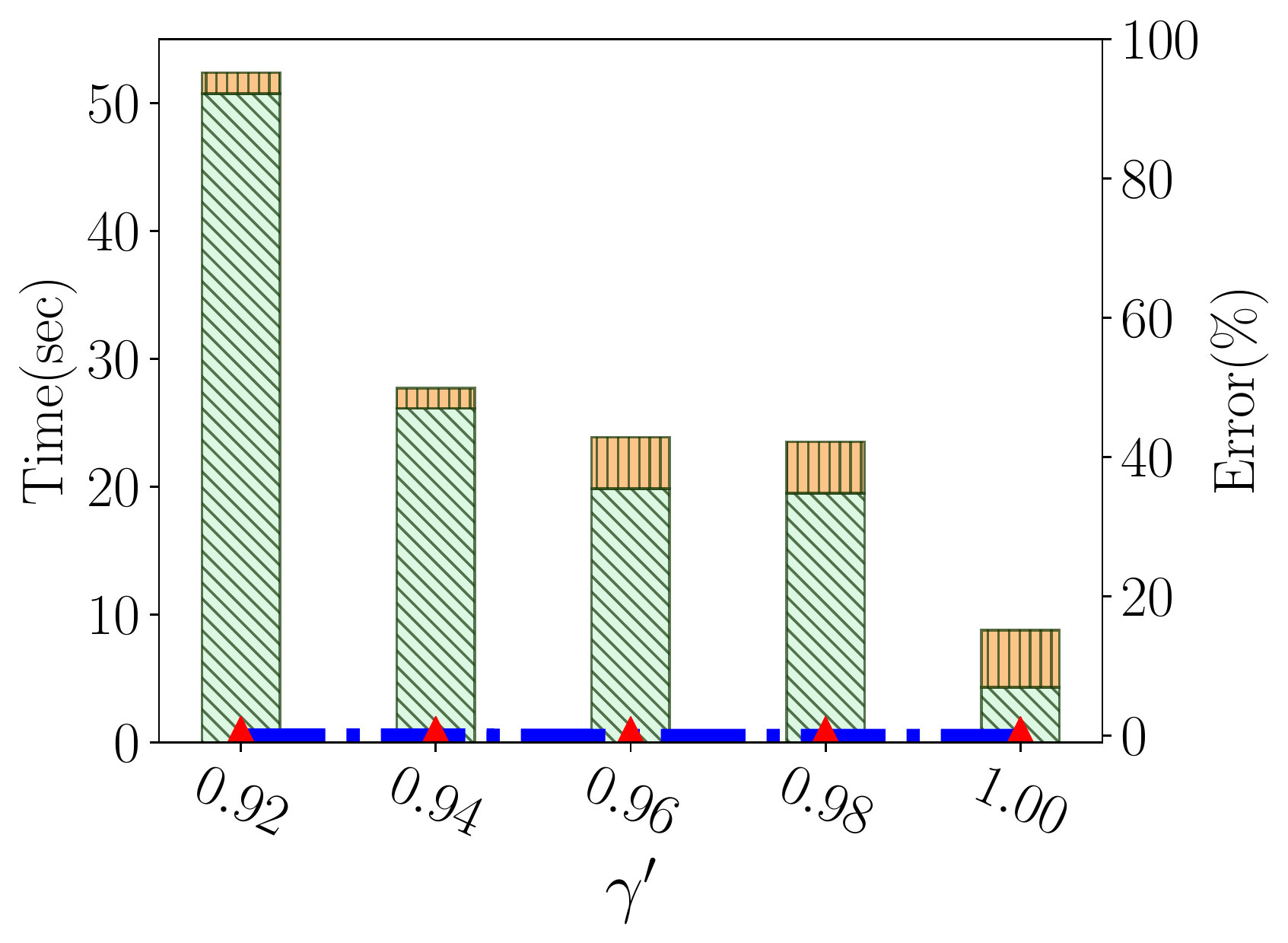}}
  	\caption{\advo, $k = 100$, $k' = 300$, \mnsz{} $= 5$.}
	\label{fig:advo-gp}
	\vspace{-4.5ex}
\end{figure*}
\begin{figure*}[t!]
	\captionsetup[subfigure]{justification=centering}
	\centering
	\subfloat[ $\gamma=0.6$, \tmpl 97 secs][ $\gamma=0.6$, error is $0\%$. \\ \tmpl{} 97 secs]{%
		\includegraphics[width=.26\textwidth] {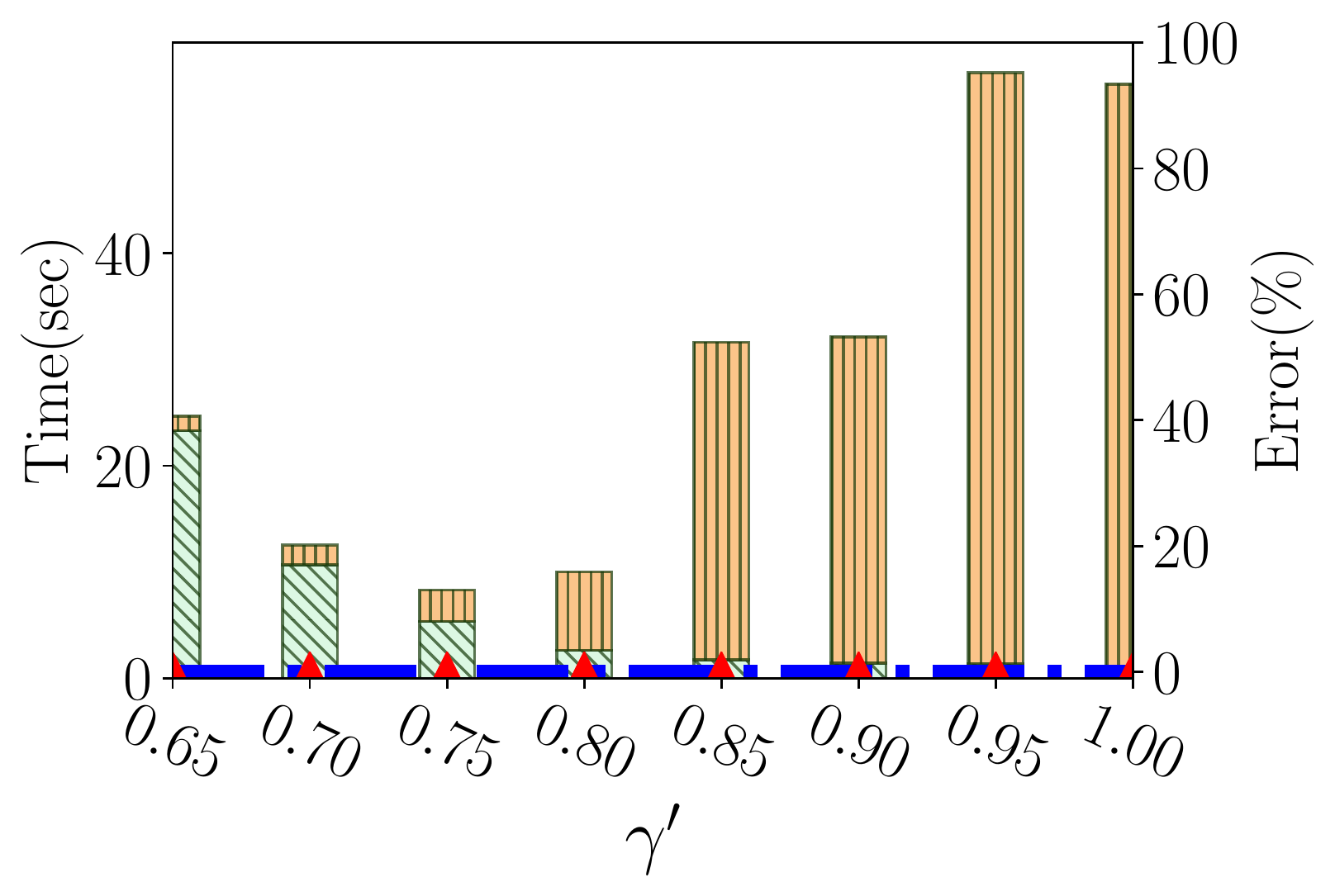}}
	\subfloat[ $\gamma=0.7$, \tmpl 10 secs][ $\gamma=0.7$, error is $0\%$. \\ \tmpl{} 10 secs]{%
		\includegraphics[width=.24\textwidth]{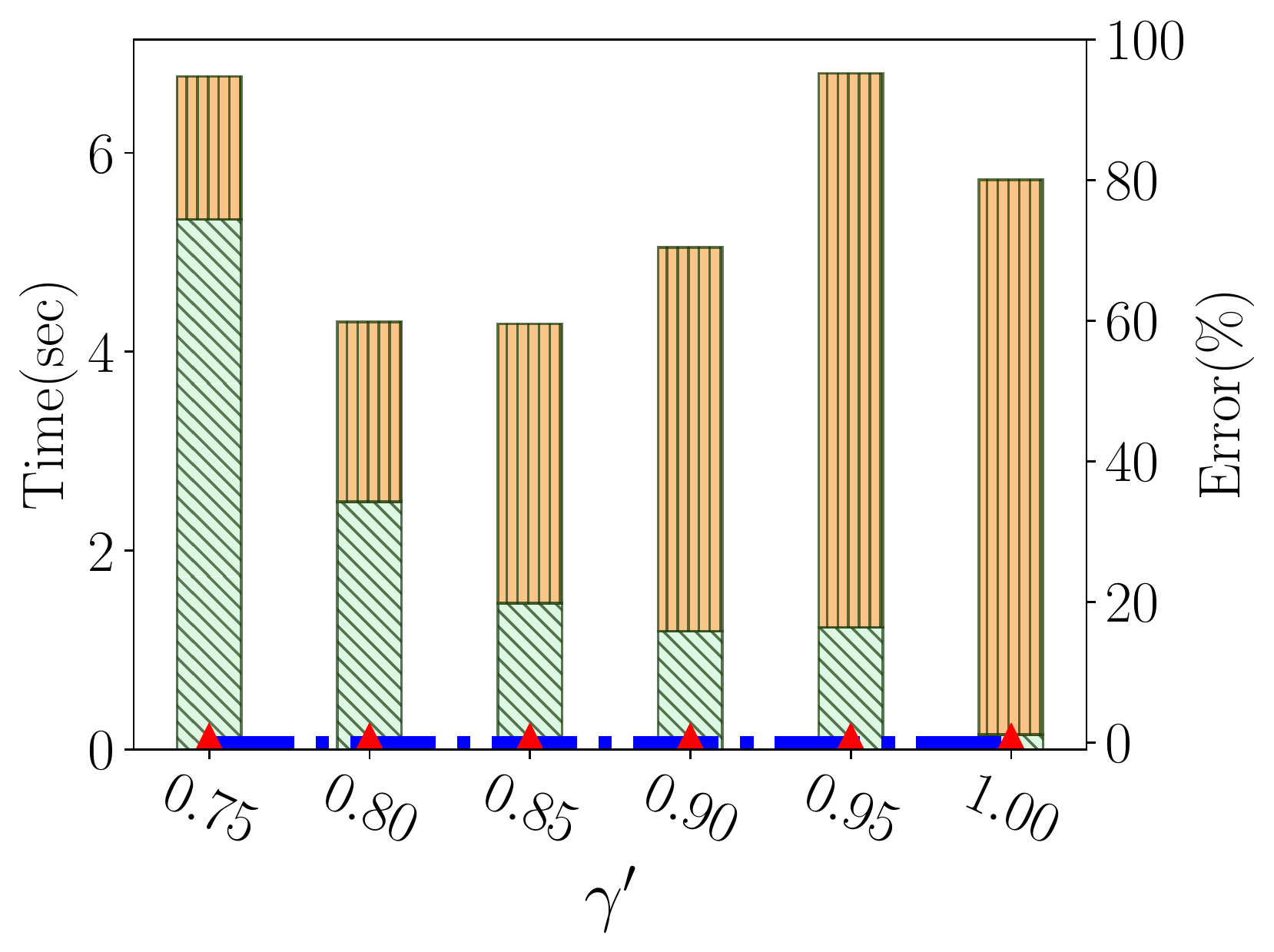}}
	\subfloat[ $\gamma=0.8$, \tmpl 2 secs][ $\gamma=0.8$, error is $0\%$. \\ \tmpl{} 2 secs]{%
		\includegraphics[width=.24\textwidth] {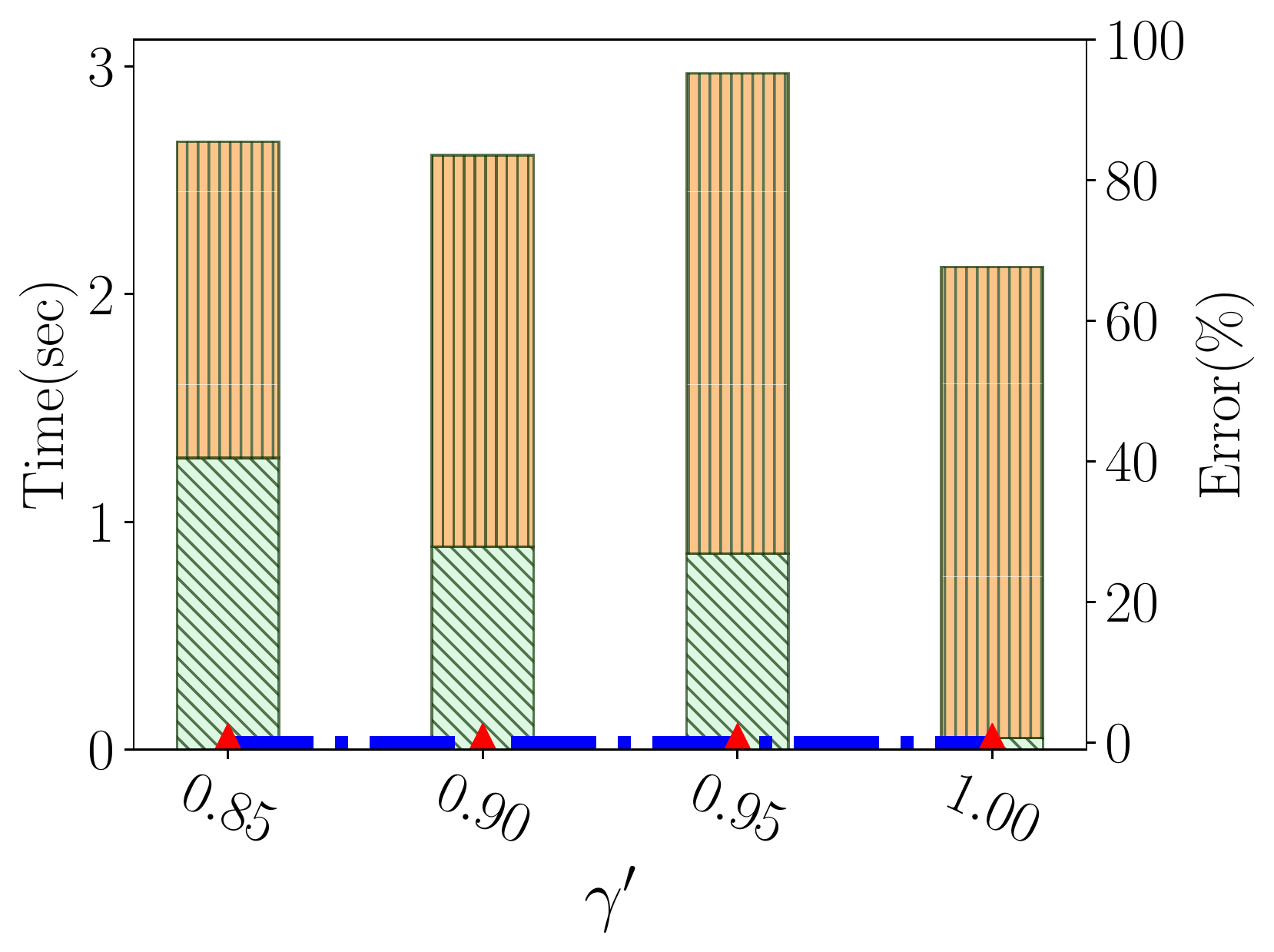}}
	\subfloat[ $\gamma=0.9$, \tmpl 0.9 secs][ $\gamma=0.9$, error is $0\%$. \\ \tmpl{} 0.9 secs]{%
		\includegraphics[width=.24\textwidth]{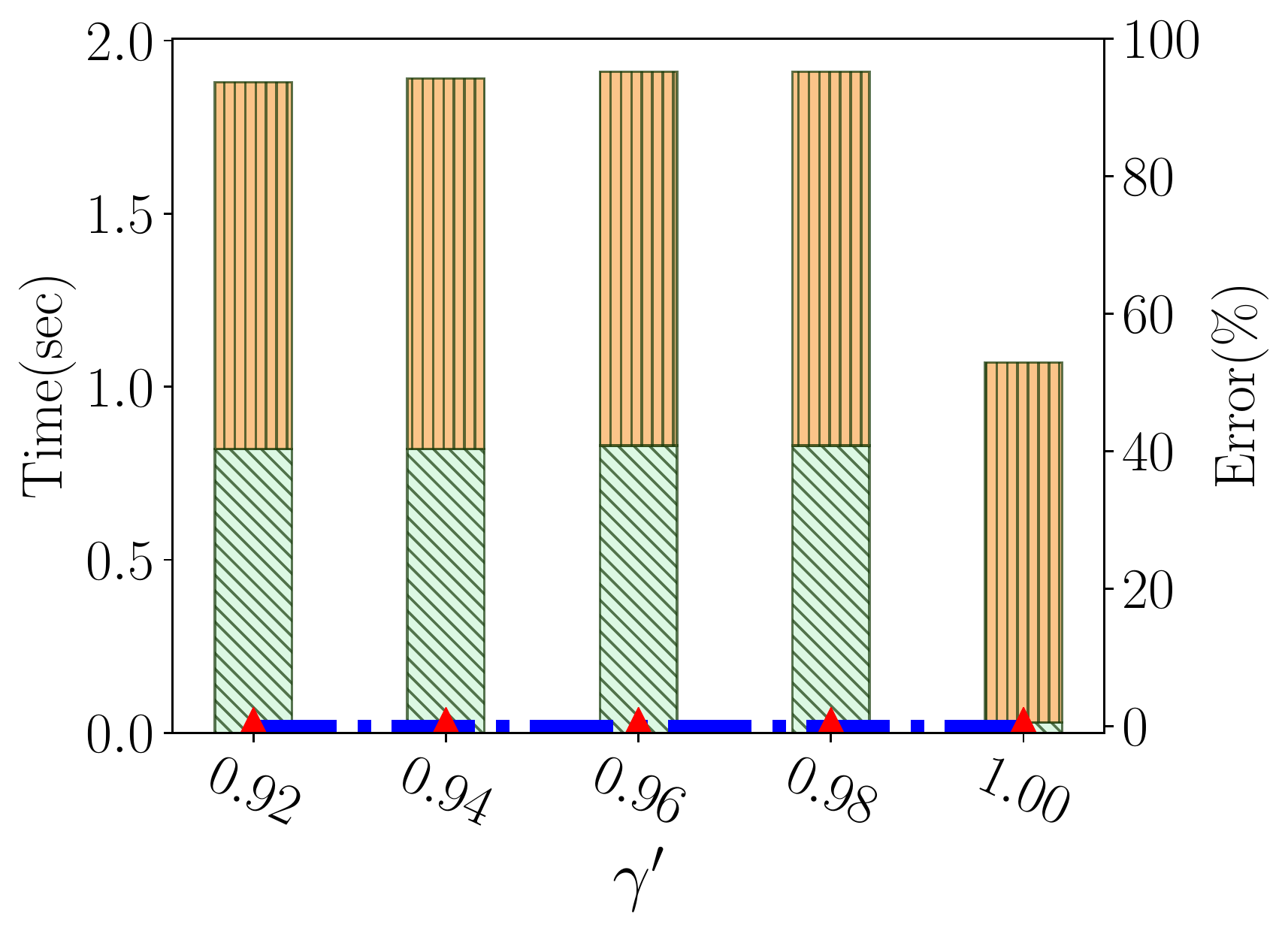}}
	\caption{\route, $k = 100$, $k' = 300$, \mnsz{} $= 5$.}
	\label{fig:route-gp}
	\vspace{-4.5ex}
\end{figure*}
 \begin{figure*}[t!]
 		\captionsetup[subfigure]{justification=centering}
 	\centering  
   \subfloat[$\gamma=0.6$, \tmpl 12760 secs][ $\gamma=0.6$, error is $0\%$. \\ \tmpl{} 12760 secs]{%
 	\includegraphics[width=.24\textwidth] {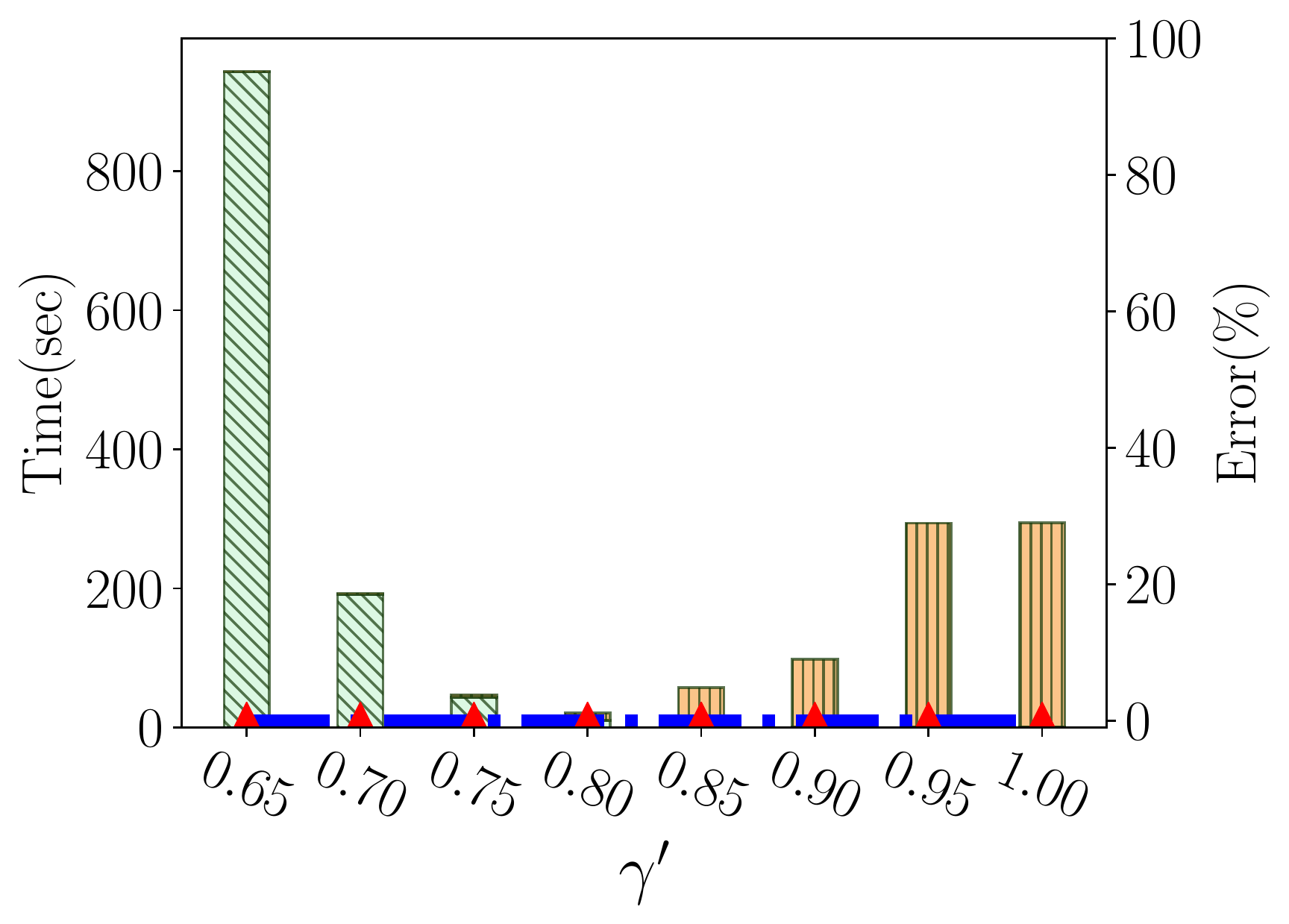} \label{fig:bible-0.6}}
 \subfloat[ $\gamma=0.7$, \tmpl 189 secs][ $\gamma=0.7$, error is at most $0.1\%$. \\ \tmpl{} 189 secs]{%
 	\includegraphics[width=.24\textwidth]{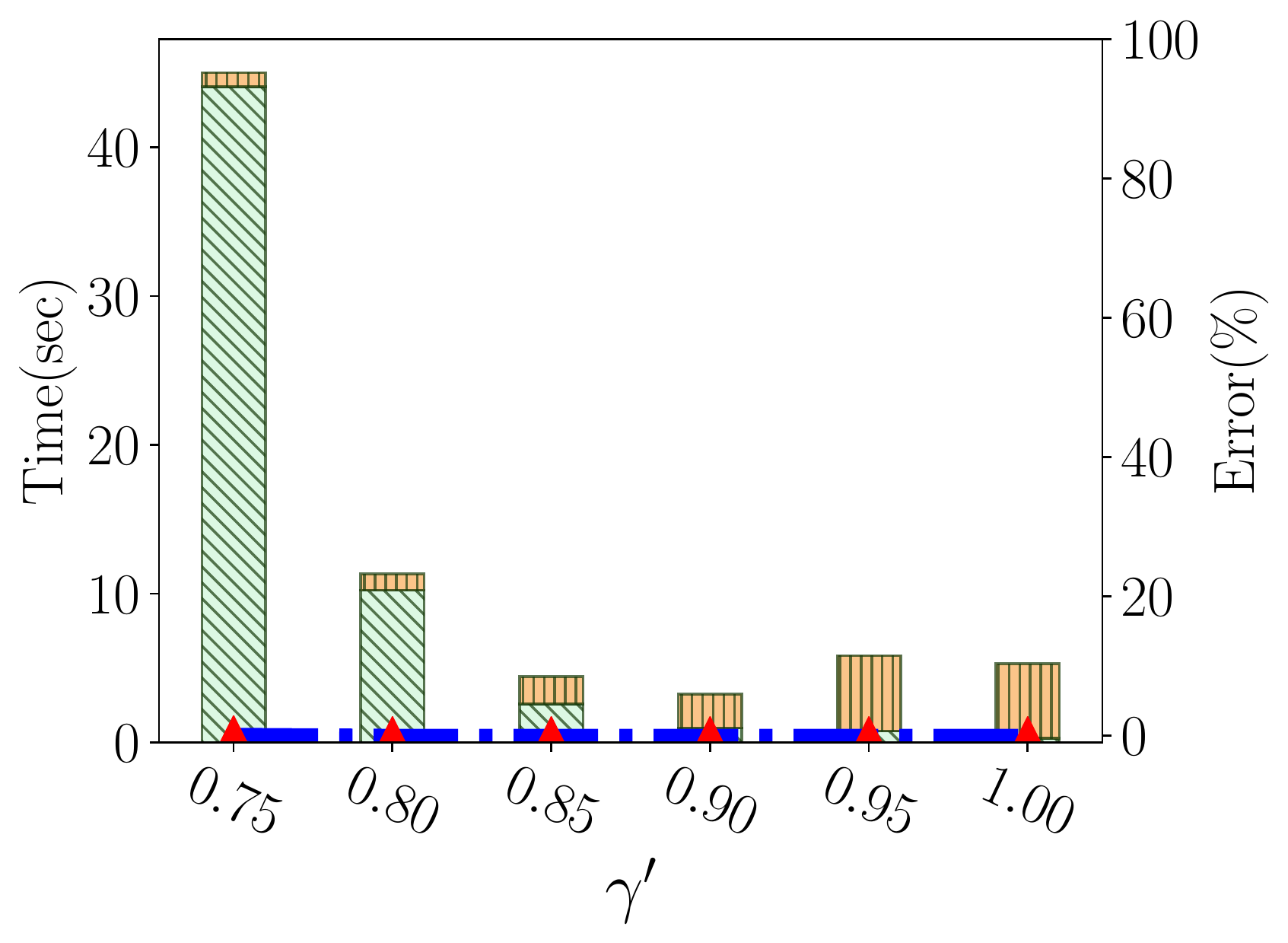}\label{fig:bible-0.7}}
 \subfloat[ $\gamma=0.8$, \tmpl 10 secs][ $\gamma=0.8$, error is at most $0.8\%$. \\ \tmpl{} 10 secs]{%
 	\includegraphics[width=.24\textwidth] {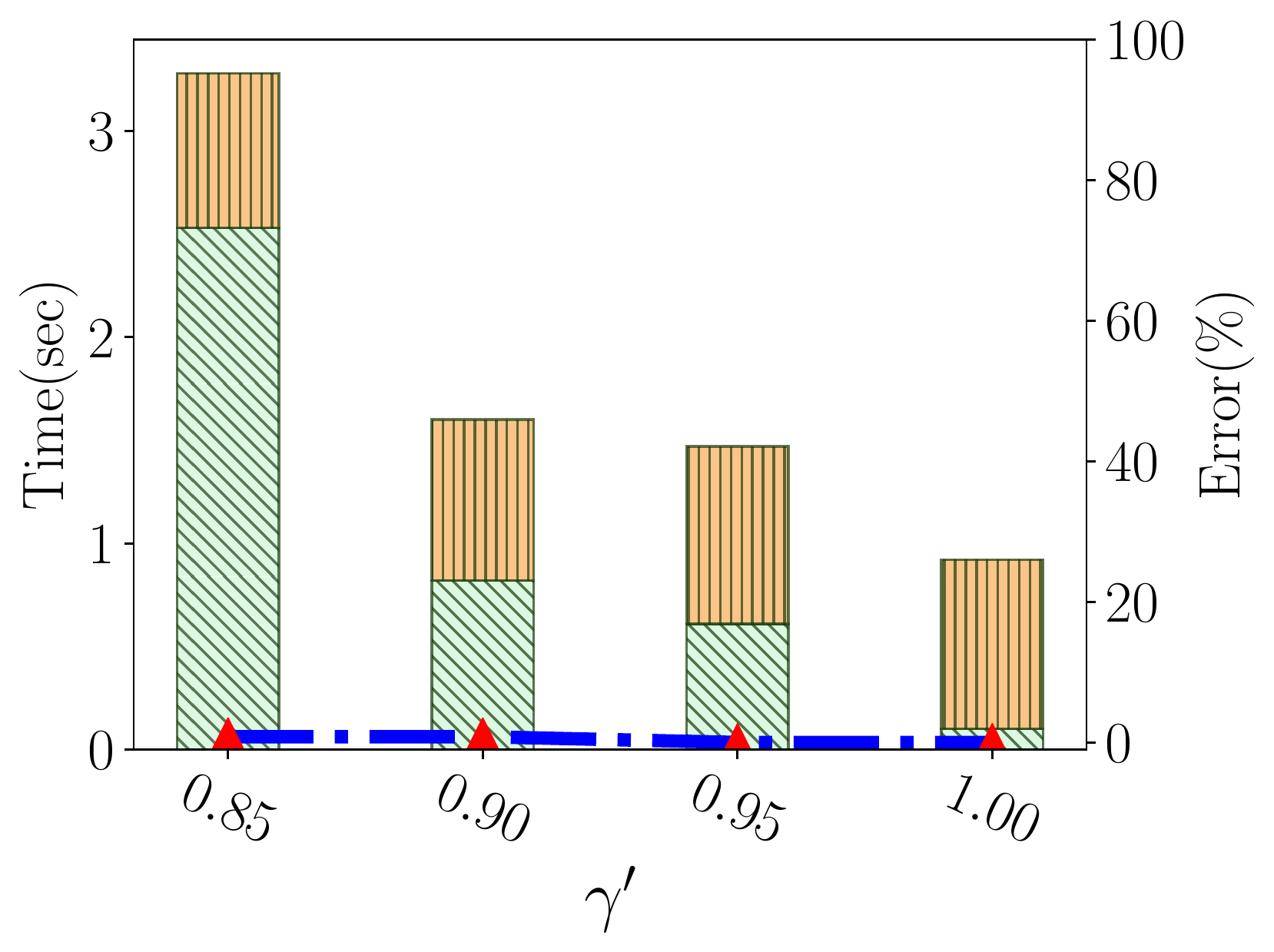}}
 \subfloat[ $\gamma=0.9$, \tmpl 0.8 secs][ $\gamma=0.9$, error is at most $0.6\%$. \\ \tmpl{} 0.8 secs]{%
 	\includegraphics[width=.26\textwidth]{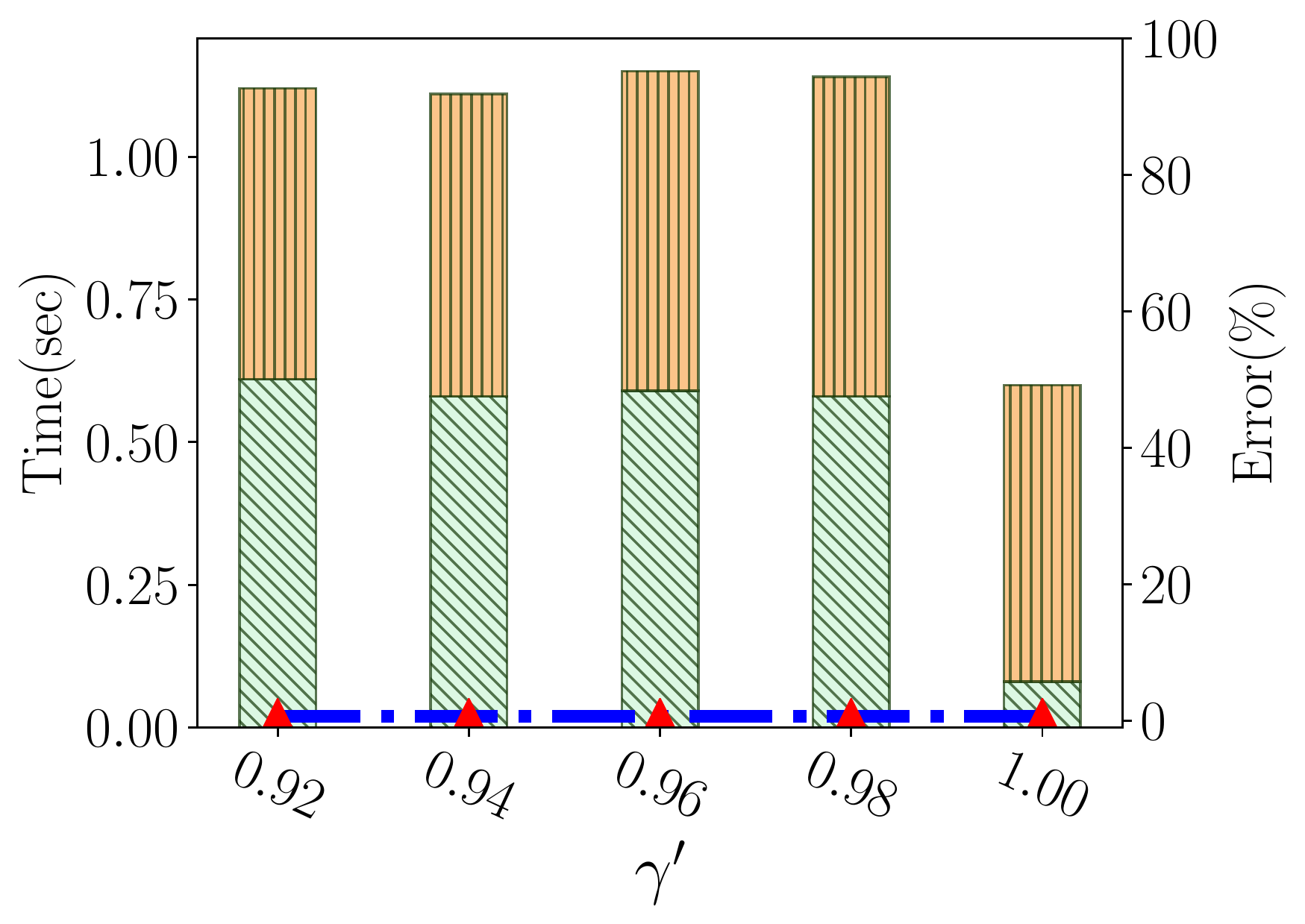}}
  	\caption{\bible, $k = 100$, $k' = 300$, \mnsz{} $= 5$.}
	\label{fig:bible-gp}
	\vspace{-4.5ex}
\end{figure*}
\begin{figure*}[t]
	\centering
	\includegraphics[width=0.6\textwidth]{plots/output-gctac/legend.pdf}
	\vspace{-5ex}
\end{figure*}
\begin{figure*}[t!]
 		\captionsetup[subfigure]{justification=centering}
 	\centering
   \subfloat[ $\gamma=0.6$, \tmpl 18152 secs][ $\gamma=0.6$, error is $0\%$. \\ \tmpl{} 18152 secs]{%
 	\includegraphics[width=.24\textwidth] {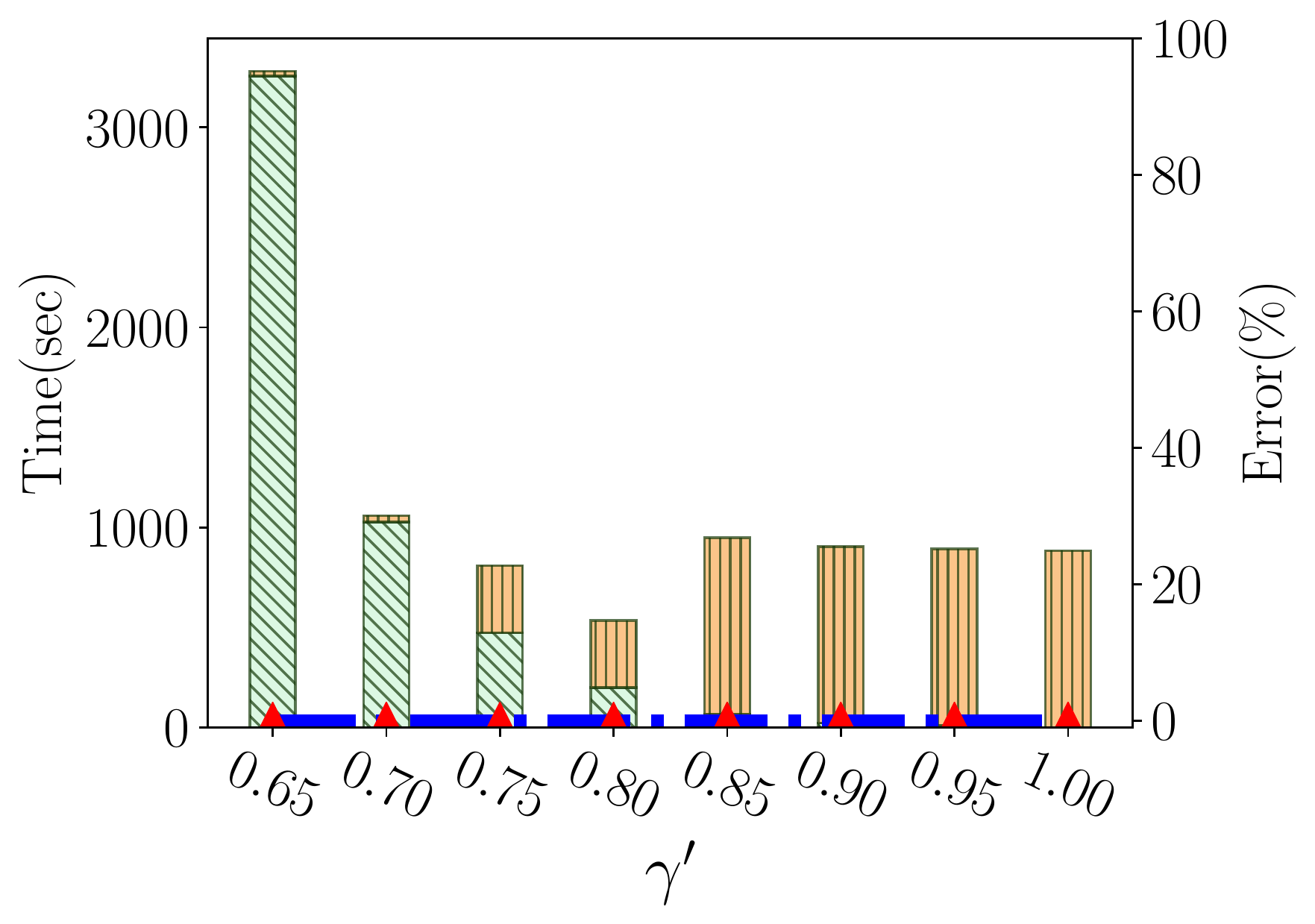} \label{fig:slash-0.6}}
 \subfloat[ $\gamma=0.7$, \tmpl 1039 secs][ $\gamma=0.7$, error is $0\%$. \\ \tmpl{} 1039 secs]{%
 	\includegraphics[width=.24\textwidth]{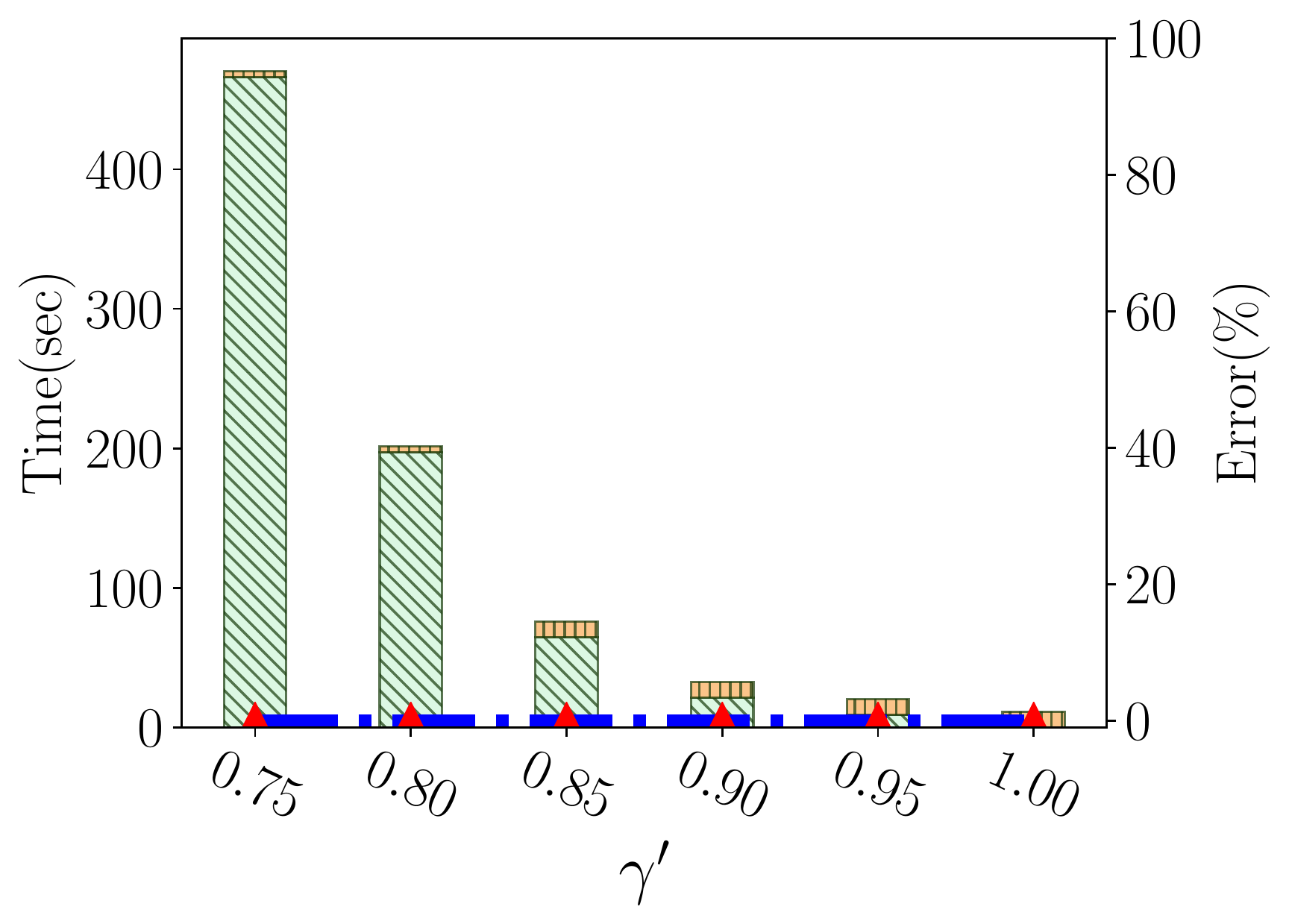}}
 \subfloat[ $\gamma=0.8$, \tmpl 201 secs][ $\gamma=0.8$, error is at most $2.1\%$. \\ \tmpl{} 201 secs]{%
 	\includegraphics[width=.24\textwidth] {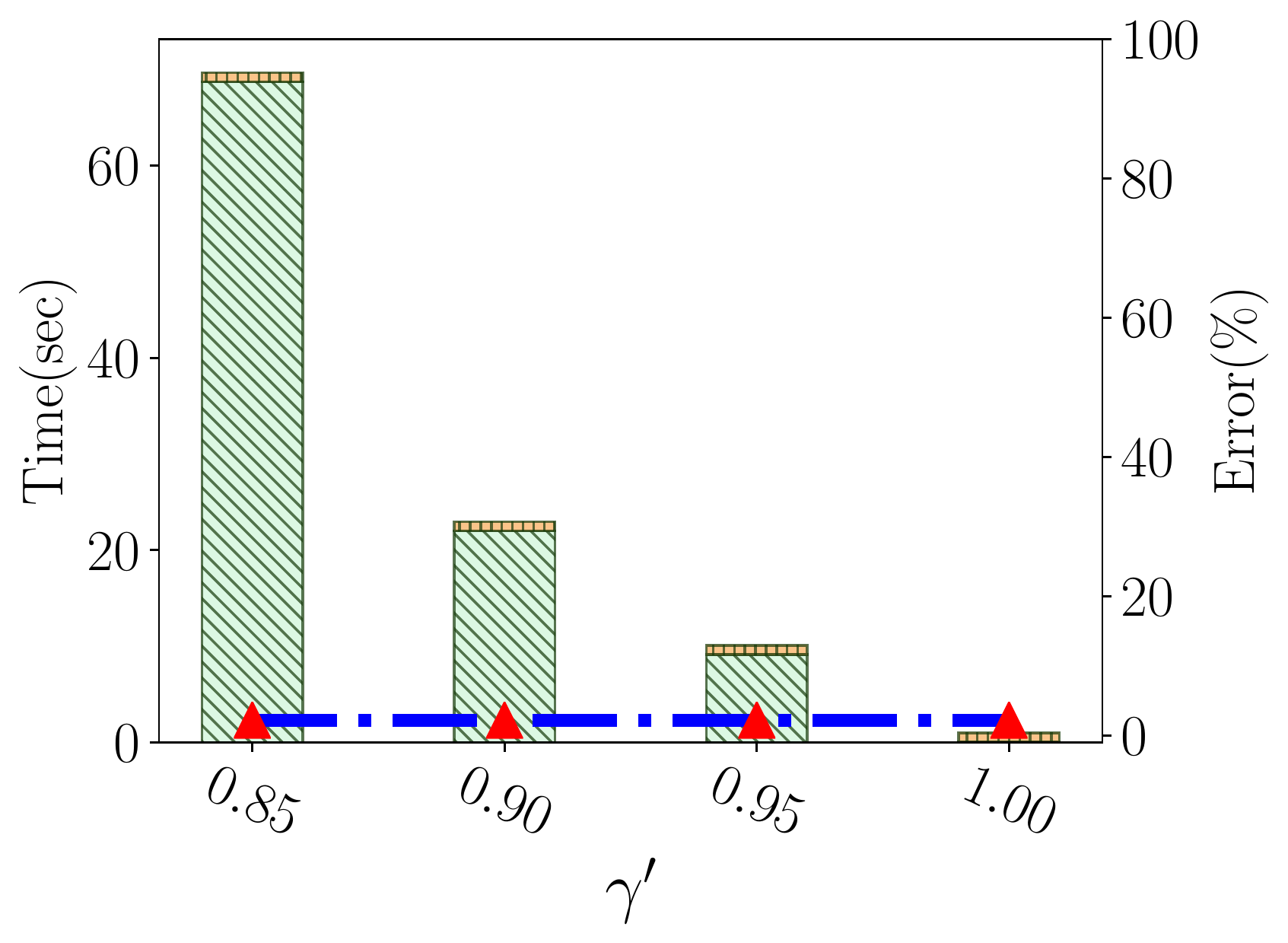} \label{fig:slash-0.8}}
 \subfloat[ $\gamma=0.9$, \tmpl 18 secs][ $\gamma=0.9$, error is $0\%$. \\ \tmpl{} 18 secs]{%
 	\includegraphics[width=.24\textwidth]{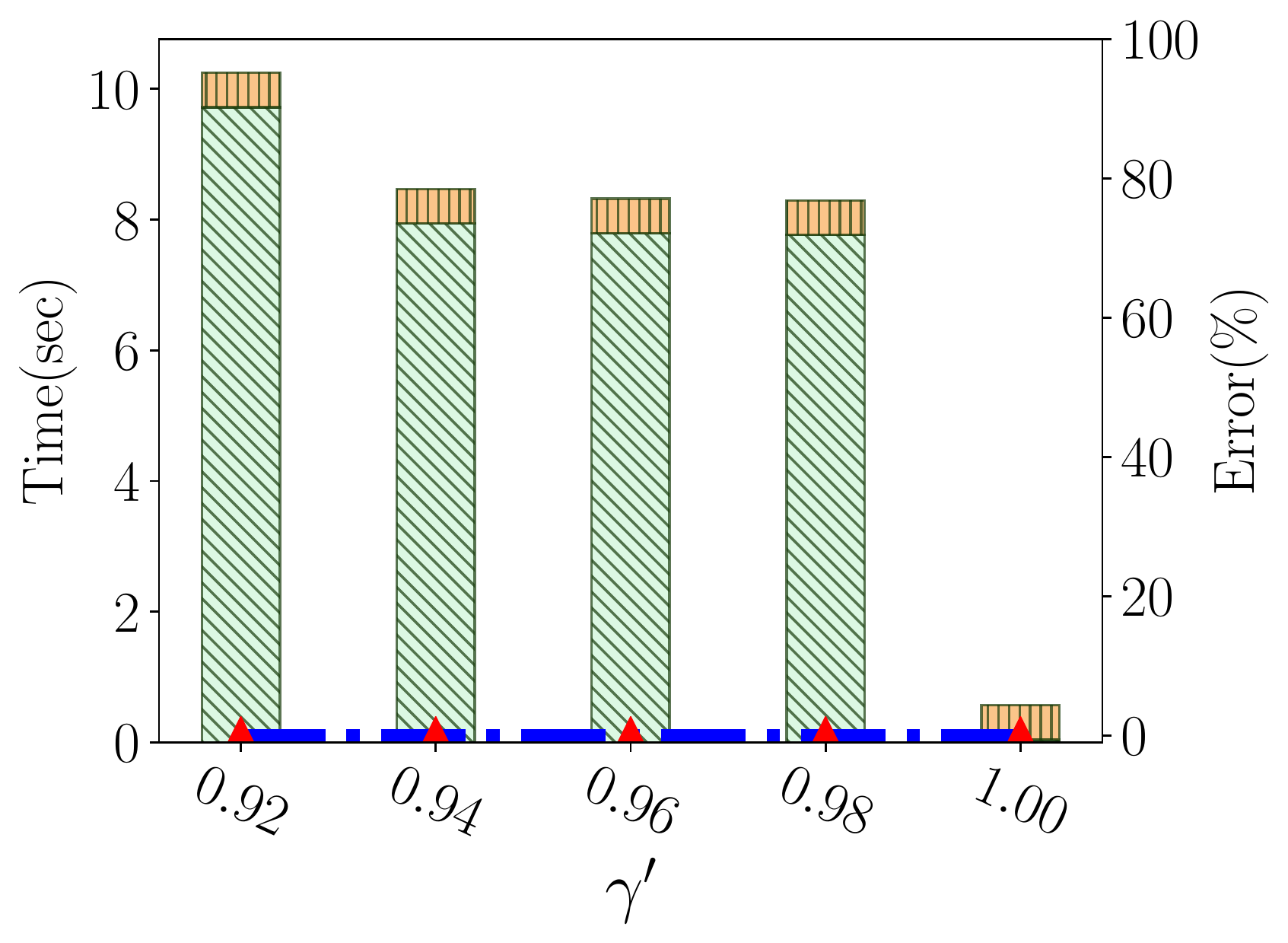}}
  \caption{\slsh, $k = 100$, $k' = 300$, \mnsz{} $= 5$.}
\label{fig:slash-gp}
\end{figure*}
\begin{figure*}[t]
	\centering
	\includegraphics[width=0.5\textwidth]{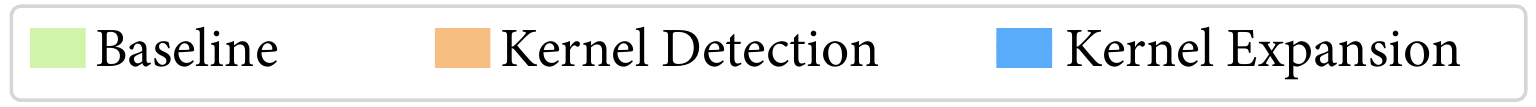}
	\label{fig:mnszt-legend}
	\vspace{-5ex}
\end{figure*}
\begin{figure*}[t]
	\captionsetup[subfigure]{justification=centering}
	\centering
	\subfloat[\advo]{%
		\includegraphics[width=.25\textwidth] {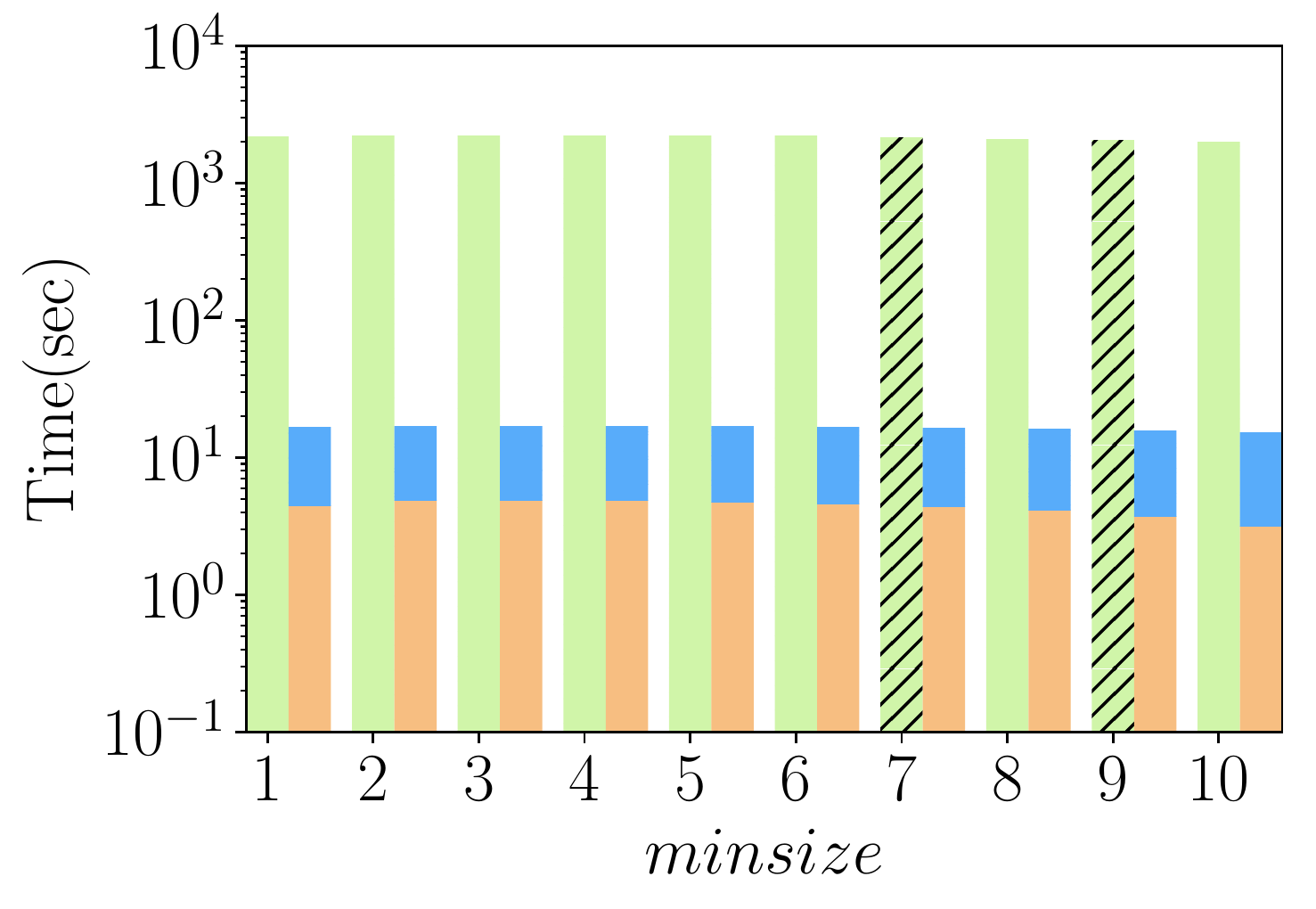} \label{fig:mnszt-advo}}
	\subfloat[\route]{%
		\includegraphics[width=.25\textwidth]{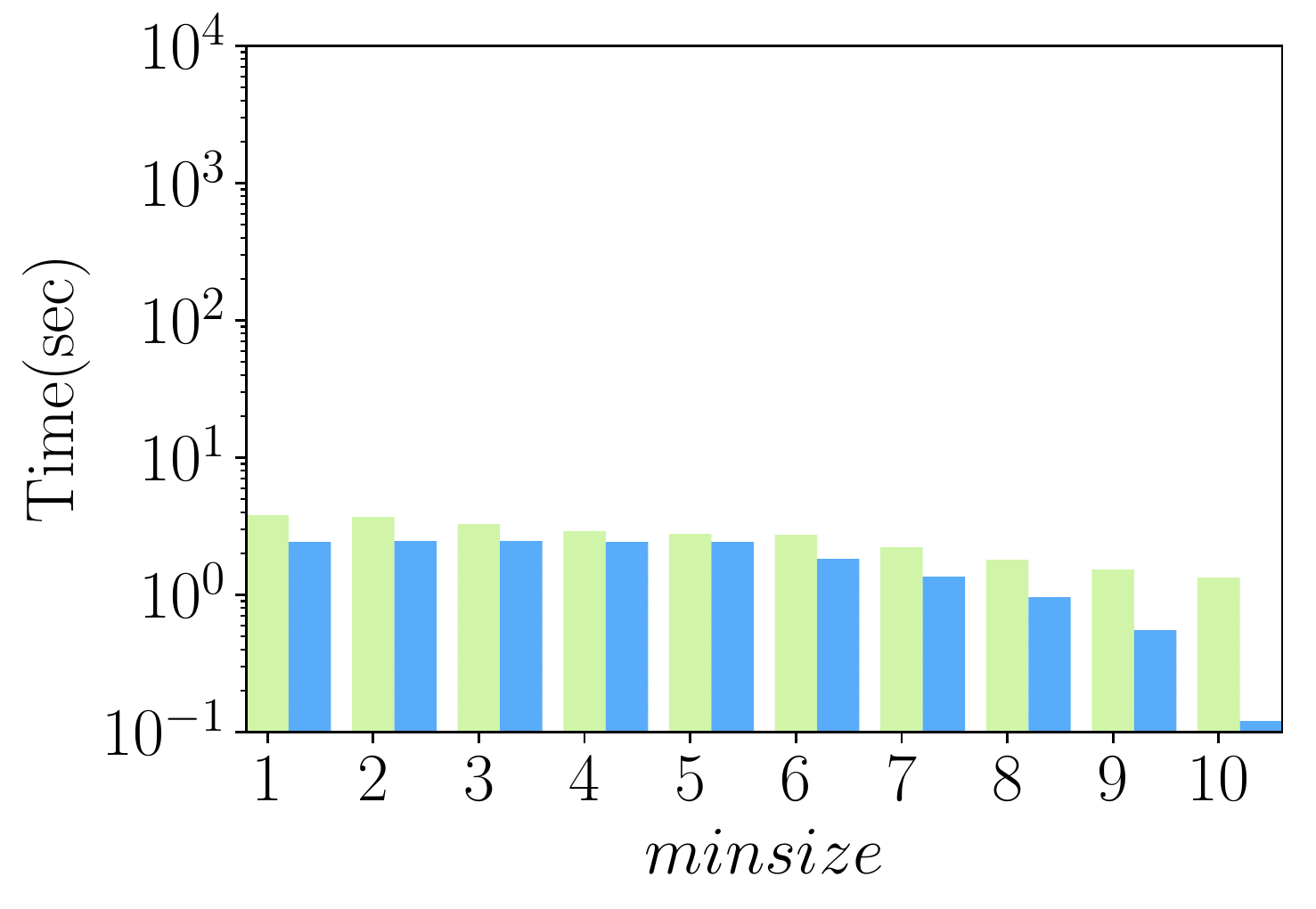}  \label{fig:mnszt-route}}  
	\subfloat[\bible]{%
		\includegraphics[width=.25\textwidth]{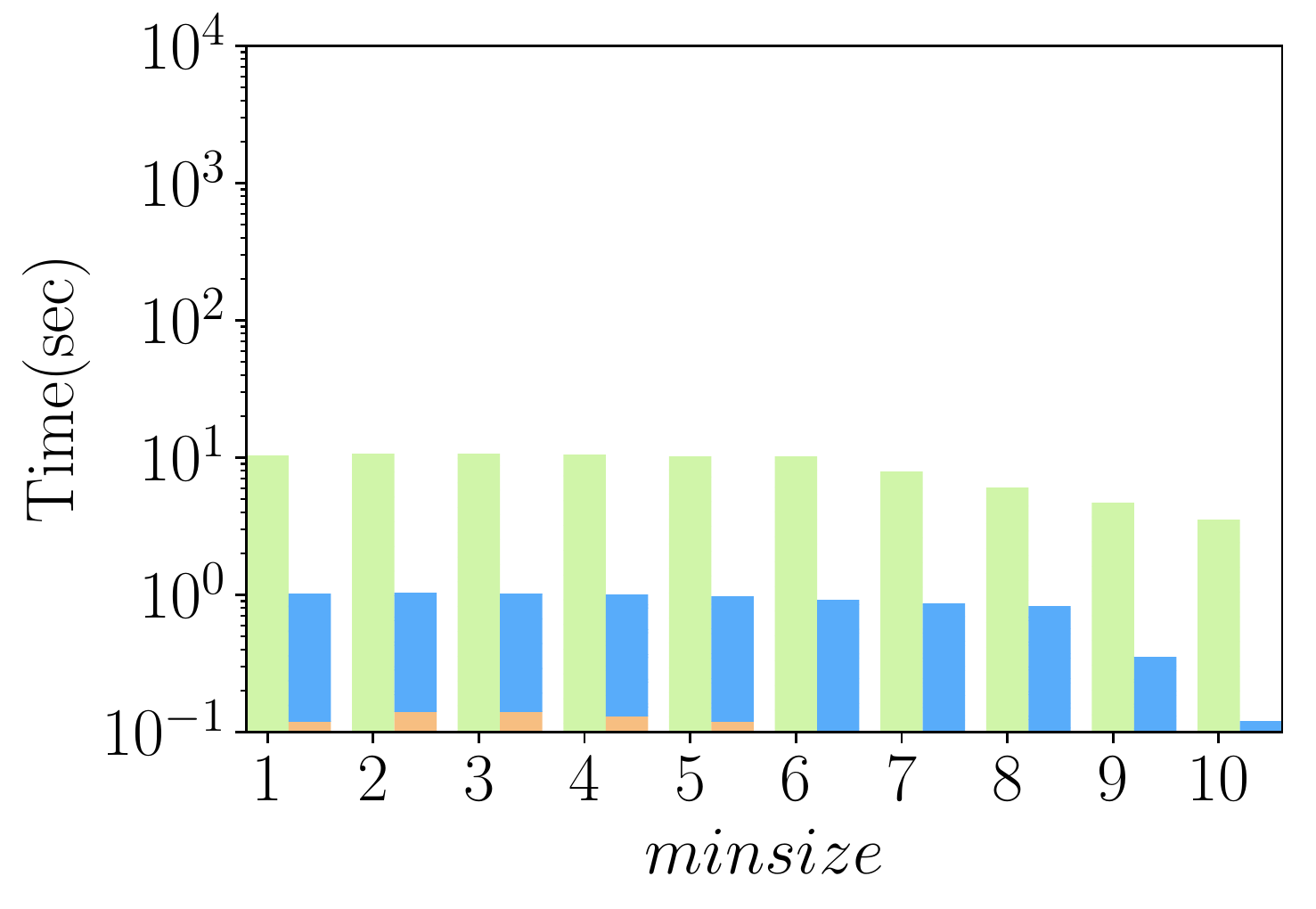}, \label{fig:mnszt-bible}}
	\subfloat[\slsh]{%
		\includegraphics[width=.25\textwidth] {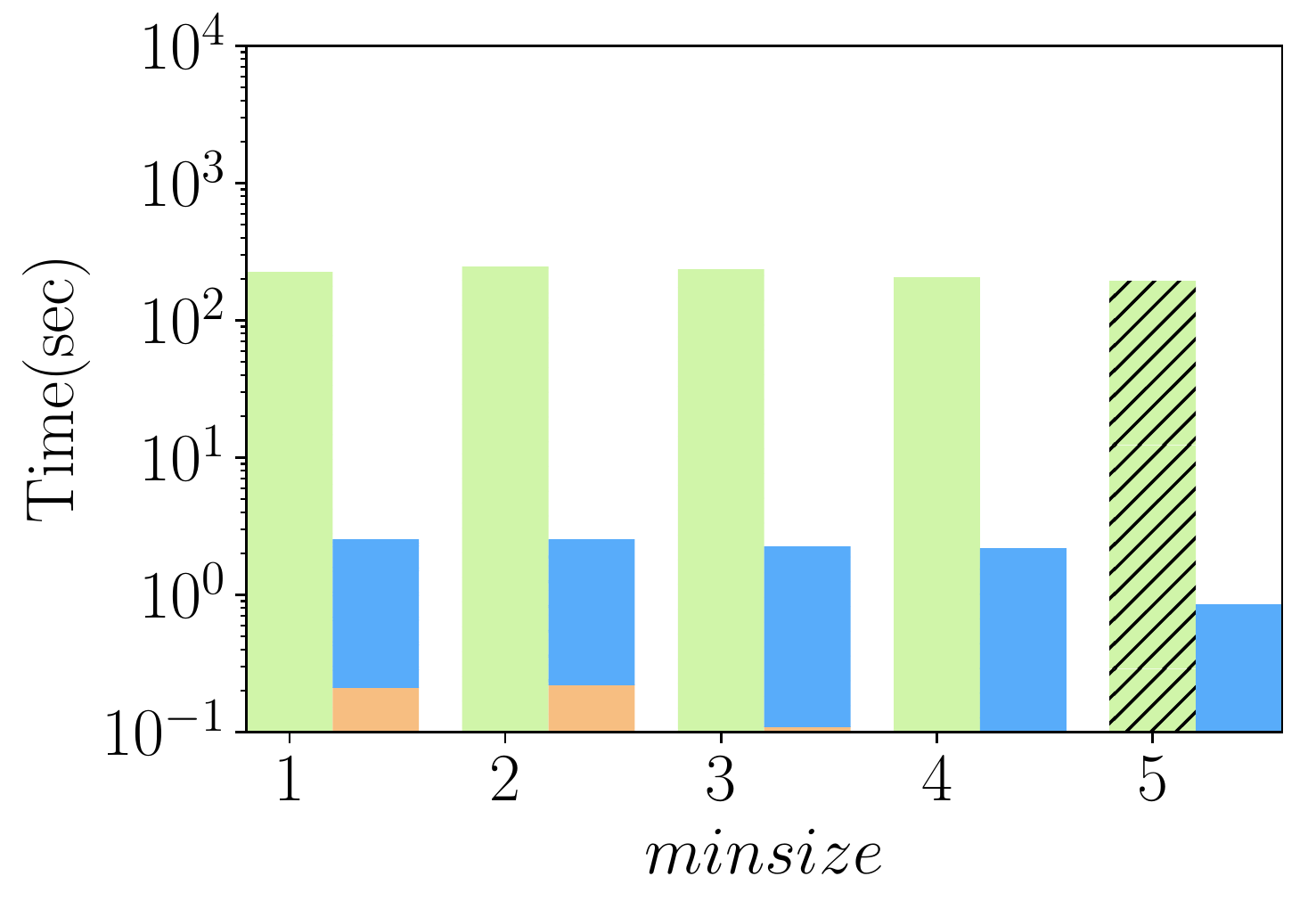} \label{fig:mnszt-slash}}
	\hfill 
	\caption{ {The runtimes of \kqc and \pl as a function of \mnsz, for $\gamma=0.8$, $\gamma' = 1.0$, $k = 100$, and $k' = 300$.}}
	\label{fig:mnszt}
	\vspace{-4ex}
\end{figure*}

\begin{figure*}[t]
	\captionsetup[subfigure]{justification=centering}
	\centering
	\subfloat[\advo]{%
		\includegraphics[width=.25\textwidth]{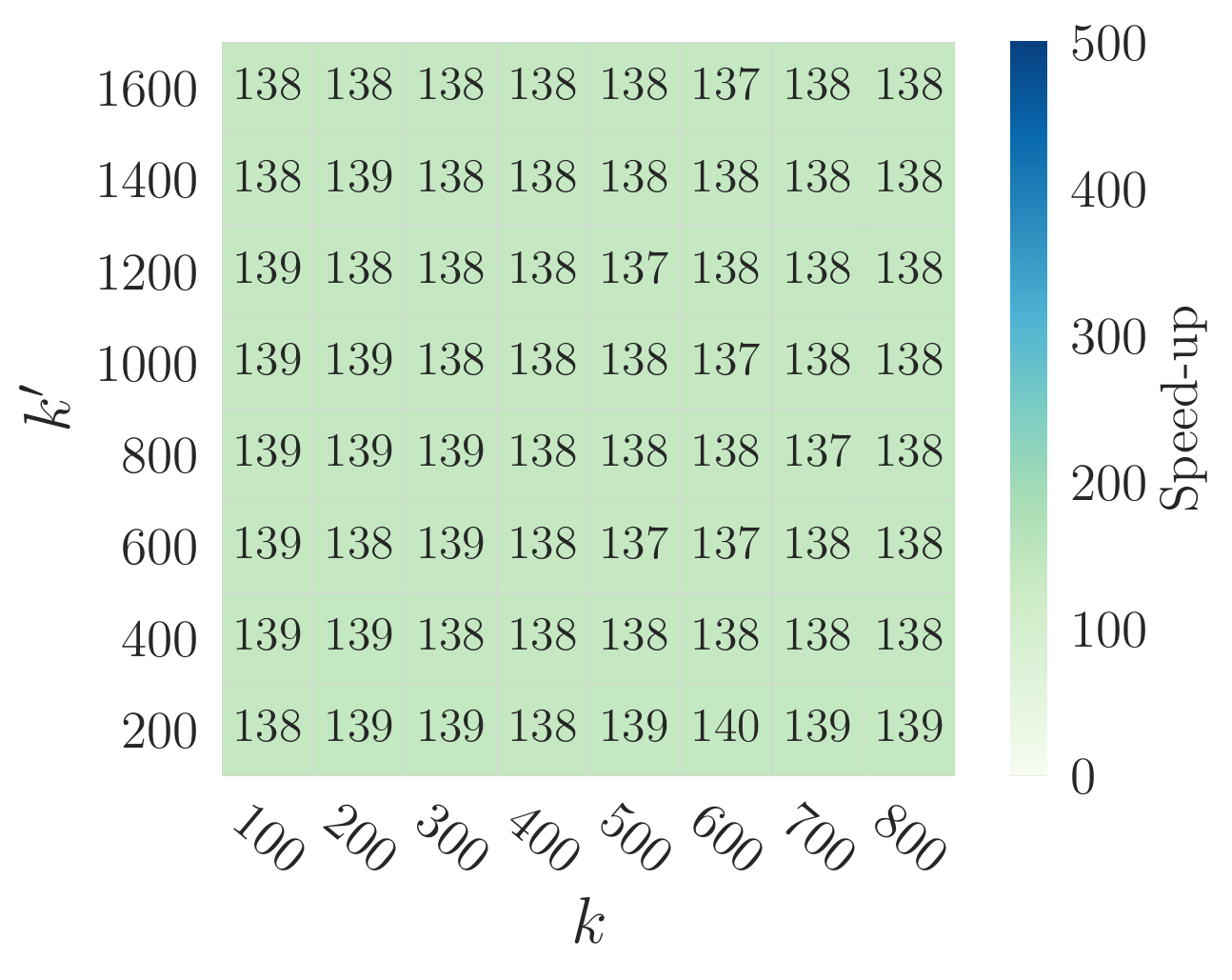} \label{fig:kac-advo-sp}}
	\subfloat[\route]{%
		\includegraphics[width=.25\textwidth]{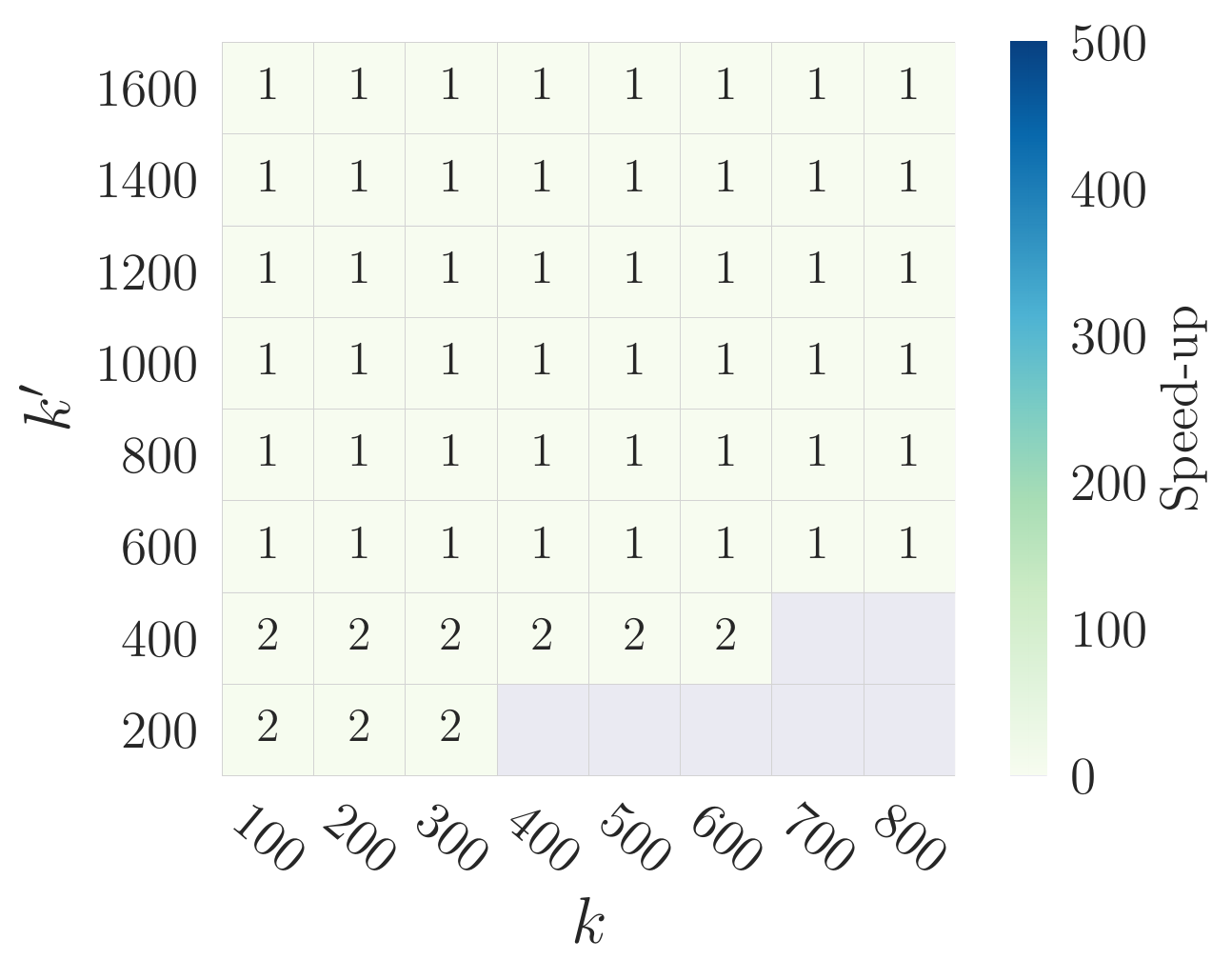} \label{fig:kac-route-sp}}
	\subfloat[\bible]{%
		\includegraphics[width=.25\textwidth]{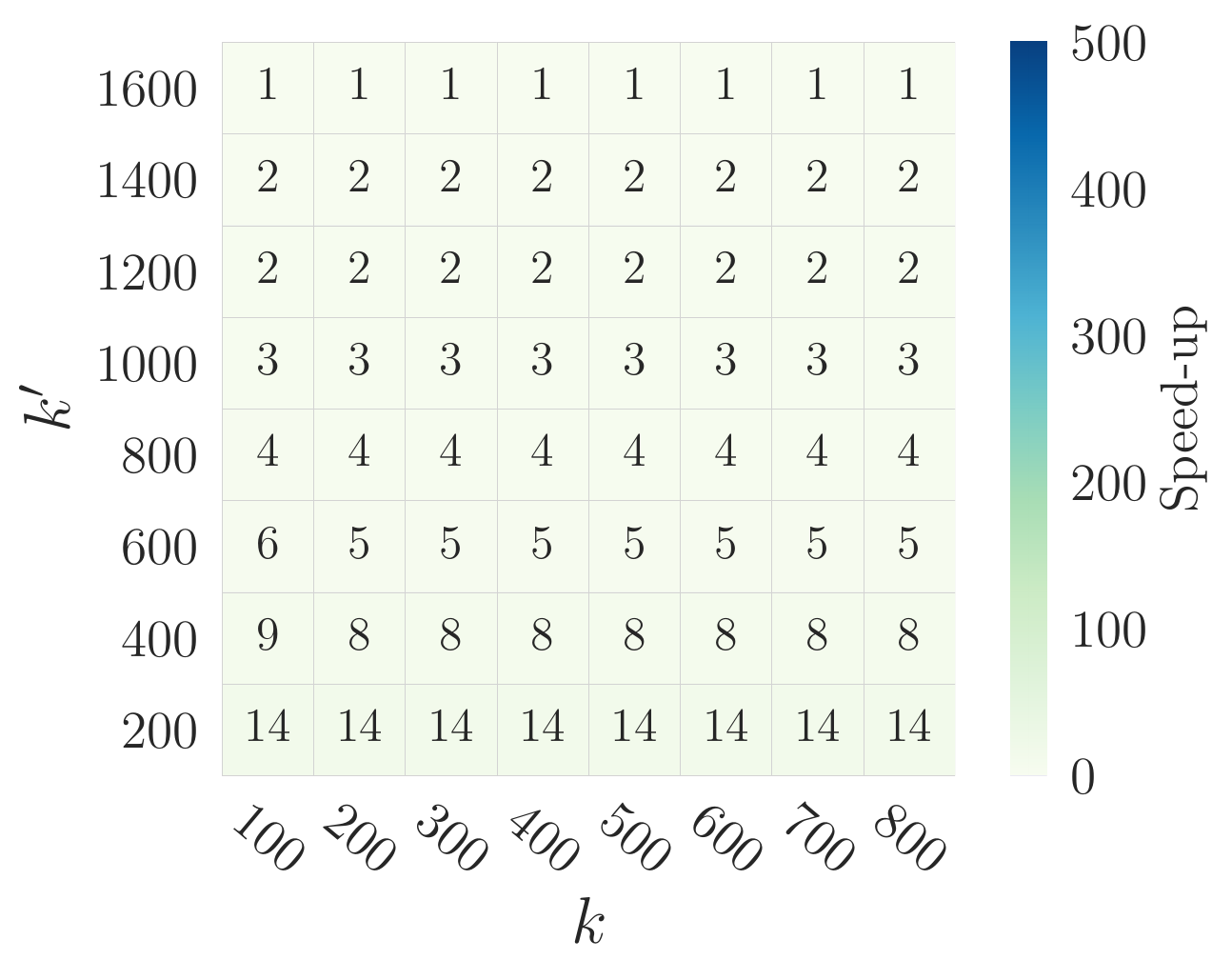} \label{fig:kac-bible-sp}}
	\subfloat[\slsh]{%
		\includegraphics[width=.25\textwidth]{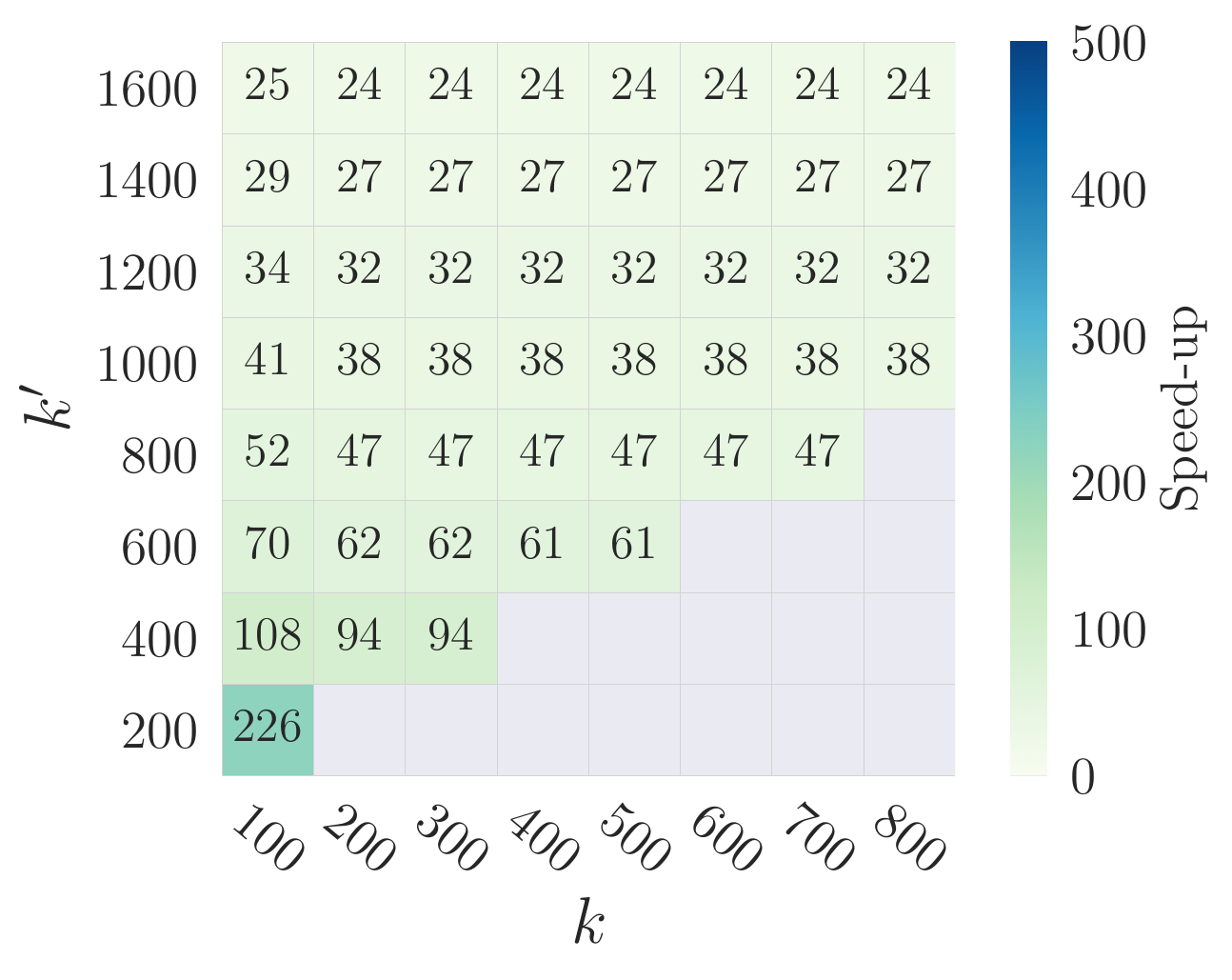} \label{fig:kac-slash-sp}}
	\hfill 
	\caption{\spd, for $\gamma=0.8$, $\gamma' = 1.0$, and \mnsz{}$=3$.}
	\vspace{-4ex}
\end{figure*}
\begin{figure*}[t]
	\captionsetup[subfigure]{justification=centering}
	\centering
	\subfloat[\advo]{%
		\includegraphics[width=.25\textwidth] {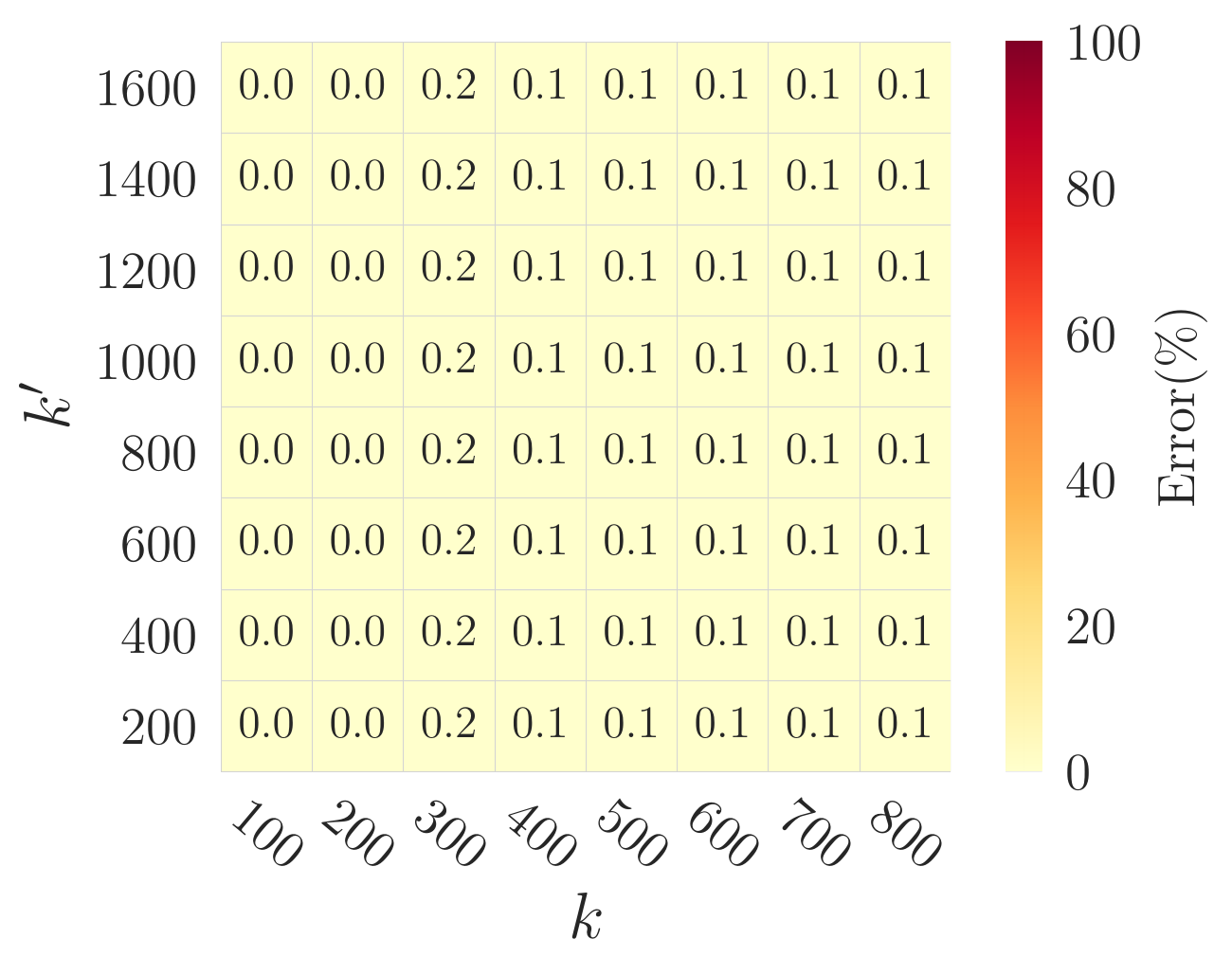} \label{fig:kac-advo-ac}}
	\subfloat[\route]{%
		\includegraphics[width=.25\textwidth] {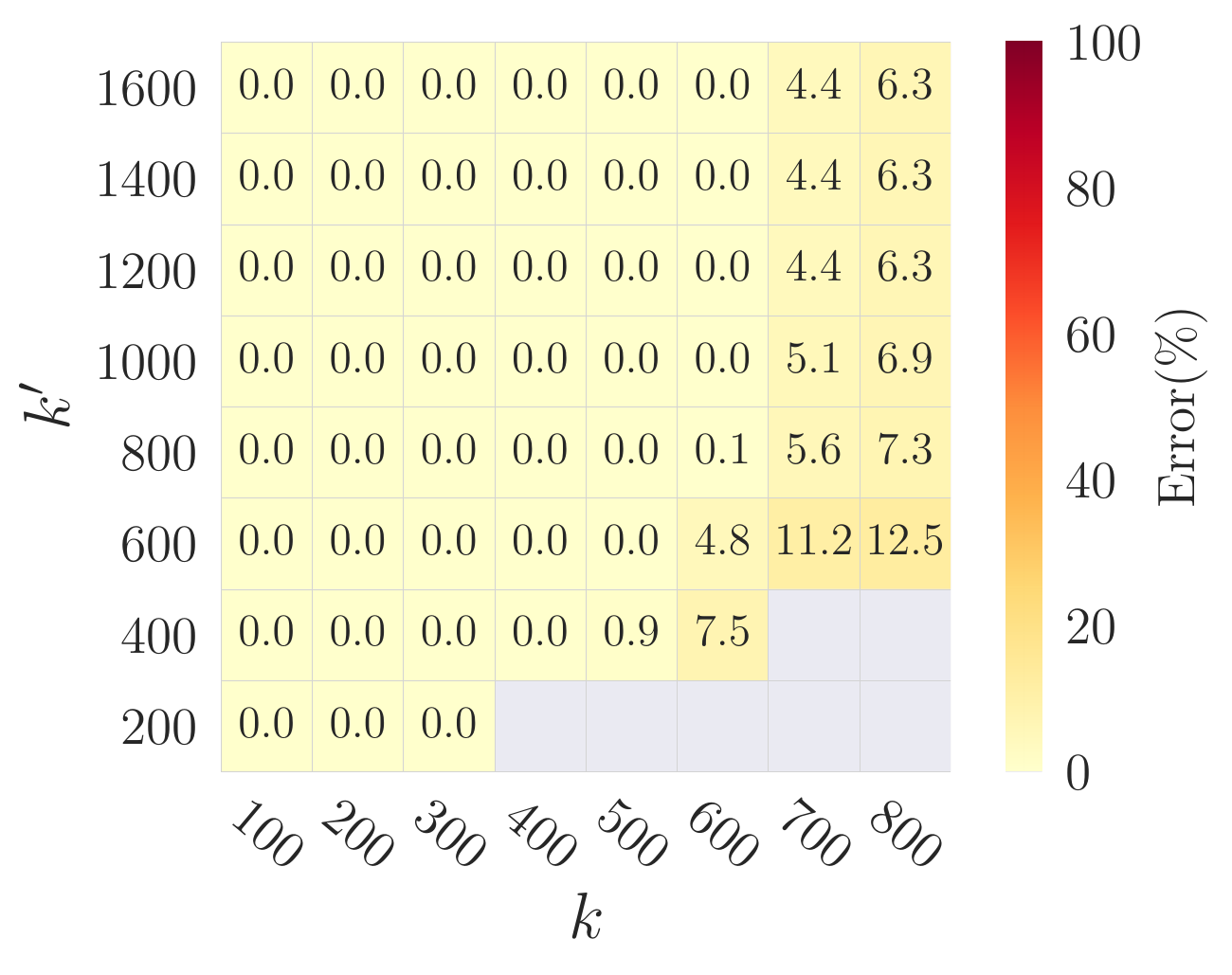} \label{fig:kac-route-ac}}
	\subfloat[\bible]{%
		\includegraphics[width=.25\textwidth] {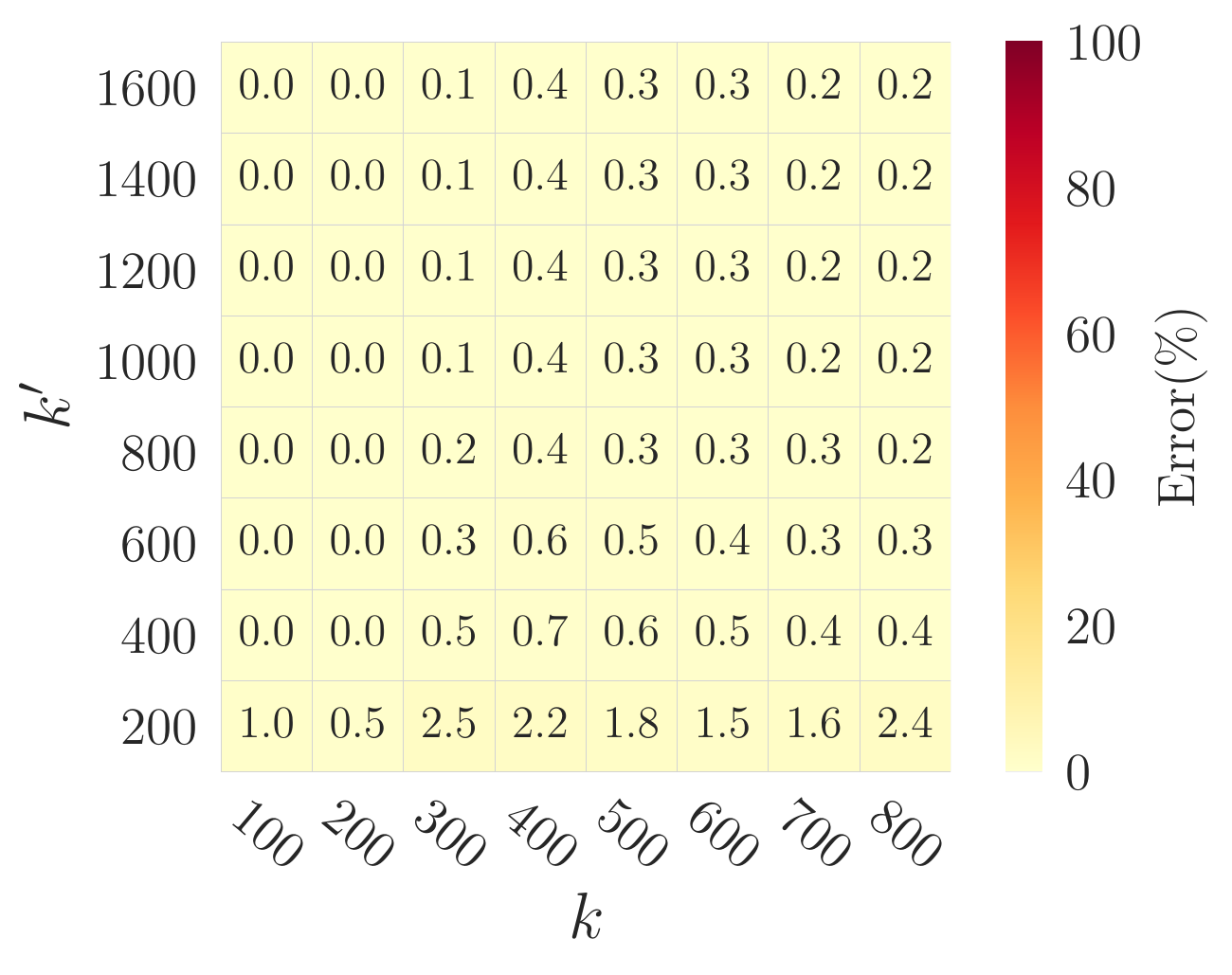} \label{fig:kac-bible-ac}}
	\subfloat[\slsh]{%
		\includegraphics[width=.25\textwidth] {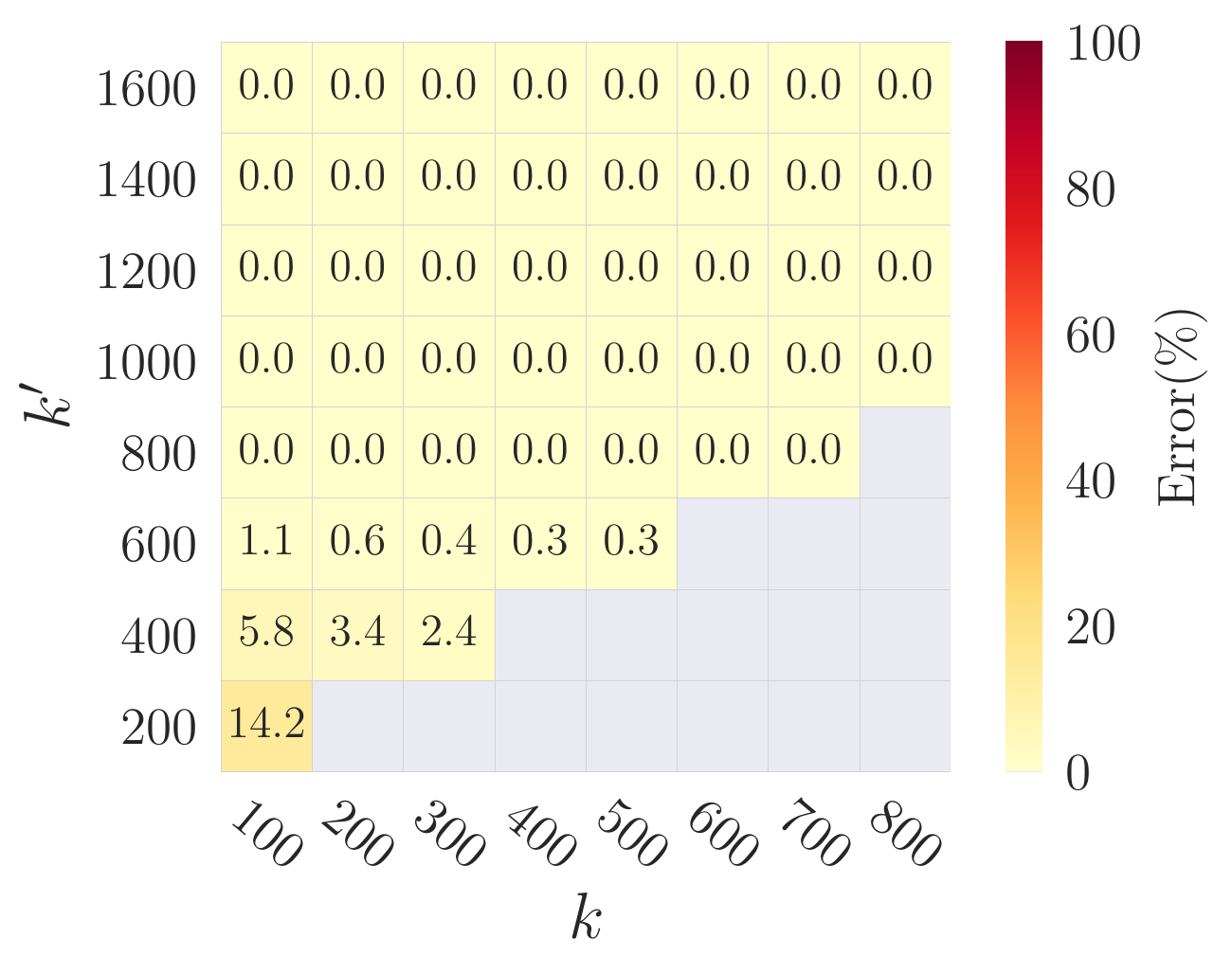} \label{fig:kac-slash-ac}}
	\hfill 
	\caption{\ac, for $\gamma=0.8$, $\gamma' = 1.0$, and \mnsz{}$=3$.}
	\vspace{-3ex}
\end{figure*}

\textbf{Runtime Compared to \pl{}:} The experiments show that \kqc{} yields a significant speedup over the \pl{}, for enumerating Top-$k$-quasi-cliques. For example, in \Cref{fig:advo-0.7} in graph \advo{} with $\gamma = 0.7$, $k = 100$, and $k' = 300$, when we set $\gamma' = 0.9$, \kqc{} yields a speedup of $984$x over \pl{}. For $\gamma=0.6$, the speedup is even more since \pl{} did not finish after $259$K secs while \kqc{} took only $172$ secs.\footnote{K stands for thousands} For the graph \bible{}, with $\gamma = 0.6$ and $\gamma' = 0.8$, \kqc{}  yields a $34$x speedup over \pl{} (\Cref{fig:bible-0.6}). In \slsh{} and the same values for $\gamma$ and $\gamma'$, \kqc{} yields a $638$x speedup over \pl{} (\Cref{fig:slash-0.6}). On the \route{} graph, the speedup is not high, especially for large values of $\gamma$. The reason is that this graph is not very dense, and even the \pl{} had a small runtime ($<100$ secs). For this graph, obtaining high speedups is not as important.


\textbf{Error Rate:} The results show that \kqc{} has a high accuracy for different graphs and various parameter settings while achieving a huge speedup over \pl{}. As shown in \Cref{fig:advo-gp,fig:bible-gp,fig:slash-gp,fig:route-gp}, error percentage for most graphs and different values of $\gamma$ and $\gamma'$ is less than $0.9\%$, and is often zero (i.e. exactly matches with the output of \pl{}). The highest error among all experiments is $2.1\%$ and belongs to the graph \slsh{} where $\gamma$ is $0.8$ (\Cref{fig:slash-0.8}). We did not report the error percent of \Cref{fig:advo-0.6} because the \pl{}  did not finish after $259$K secs.
\subsection{Dependence on $\gamma'$}
In \Cref{fig:advo-gp,fig:bible-gp,fig:slash-gp,fig:route-gp}, we ran \kqc{}  for different values of $\gamma'$ and $\gamma$ on four graphs \advo, \route, \bible, and \slsh. The purpose of these experiments is to understand the effects of the user-defined parameters $\gamma$ and $\gamma'$ on \kqc{} in terms of accuracy and time-efficiency. These experiments are helpful to choose an optimum value for $\gamma'$ which can result in a good balance between accuracy and runtime. For all graphs, we set $k = 100$, $k' = 300$ to find Top-$k$ quasi-cliques, and \mnsz{} $ = 5$ which is the minimum size threshold of quasi-cliques. 

\textbf{Performance of kernel detection and expansion:} Here, we show how different values of $\gamma$ and $\gamma'$ can have effect on two parts of \kqc{}. When $\gamma'$ is close to $\gamma$ kernel detection is slower than kernel expansion (\Cref{fig:advo-gp,fig:bible-gp,fig:slash-gp,fig:route-gp}). This is because kernel detection needs to extract all $\gamma'$-quasi-cliques. As shown in \Cref{fig:g-vs-time,fig:time-vs-gamma}, it requires greater computation time to mine all $\gamma'$-quasi-cliques when $\gamma'$ is smaller. On the other hand, if $\gamma'$ is larger, the $\gamma'$-quasi-cliques found by kernel detection are smaller. This fact keeps the size of $\gamma'$-quasi-cliques small, at least in the graphs we used. Therefore, since the kernel expansion phase starts with smaller kernels, it requires more time to explore larger \qc{}s. The ideal choice of $\gamma'$ should balance between the costs of the two phases, kernel expansion and kernel detection.

\subsection{Impact of the minimum size threshold (\mnsz)} 
\Cref{fig:mnszt} represents the runtimes of \kqc{} and \pl{} for different values of \mnsz{}. Based on the runtimes of \kqc{}, one can see that \mnsz{} does not have a major impact on the runtime, for the most part. However, in some cases, such as \Cref{fig:mnszt-bible} and for \mnsz{} $ = 10$, the runtime of \kqc{} decreases drastically. This is because the size of largest \qc{} in the graph \bible{} is $12$. Setting \mnsz{}$ = 10$ can considerably reduce the search space for \kqc{}, which leads to a decrease in runtime.

\mnsz{} is a user-defined parameter. When there is no knowledge about the given graph and \mnsz{} is set to a high value, it is possible that the graph does not contain $k$ \qc{}s with the size at least \mnsz{}, let alone top-$k$ maximal \qc{}s. This can cause us to miss large \qc{}s which could be good candidates to be placed in top-$k$ maximal \qc{}s. One way to handle this is to start with a high value of \mnsz{} and decrease it if enough quasi-cliques are not found with prior settings.


\remove{
\begin{figure*}[t]
	\includegraphics[width=0.3\textwidth]{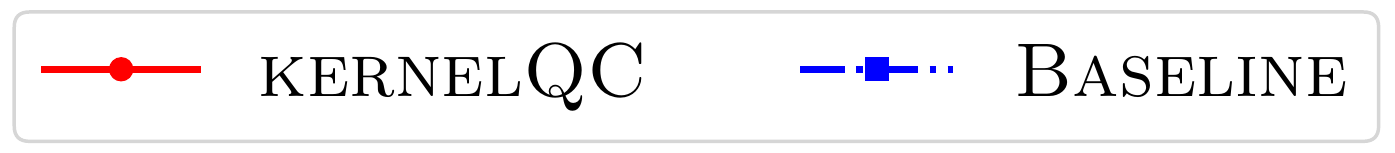}
	\vspace{-4ex}
\end{figure*}
\begin{figure*}[t]
	\captionsetup[subfigure]{justification=centering}
	\centering
	
	\subfloat[\advo]{%
		\includegraphics[width=.25\textwidth] {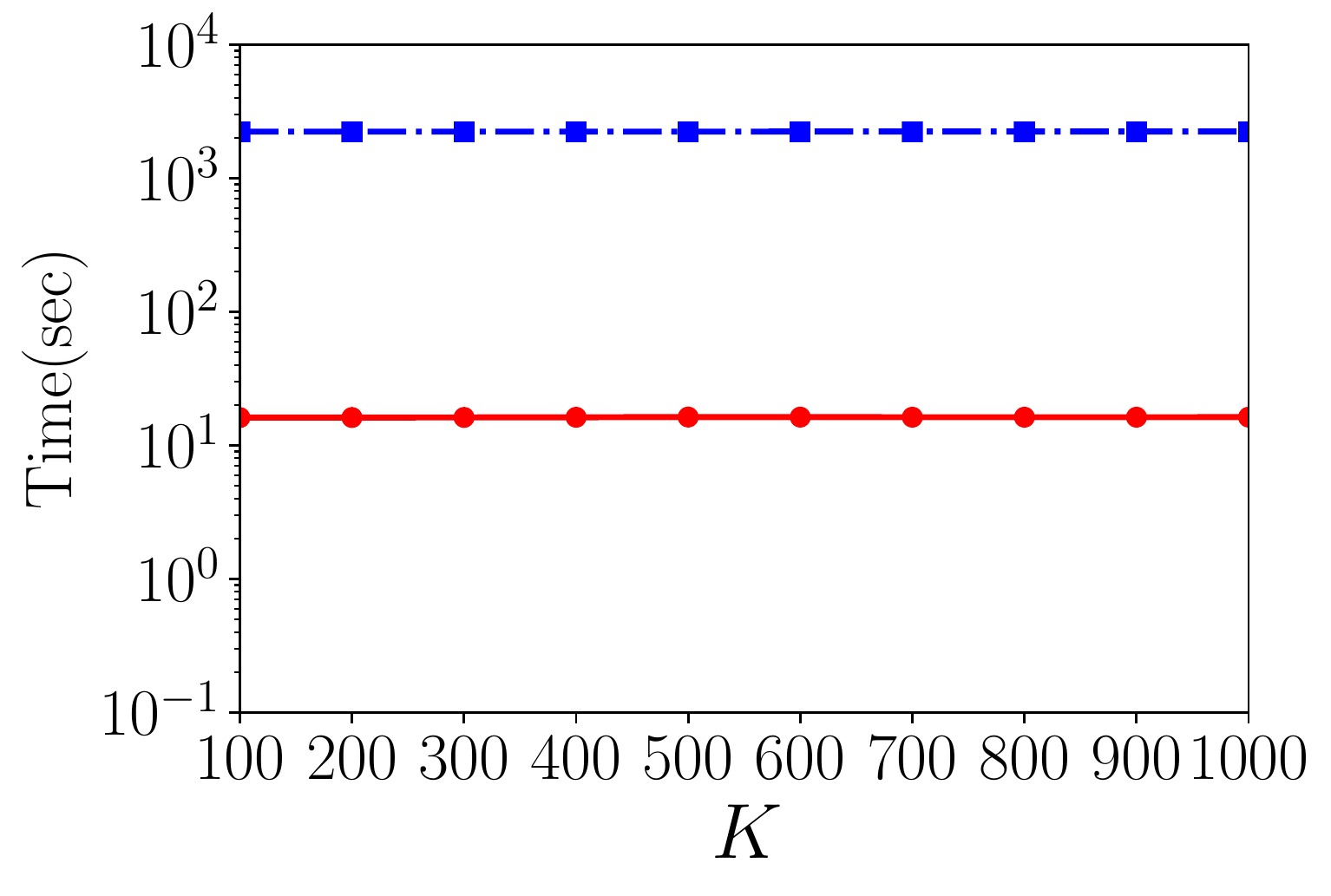}}
	\subfloat[\route]{%
		\includegraphics[width=.25\textwidth]{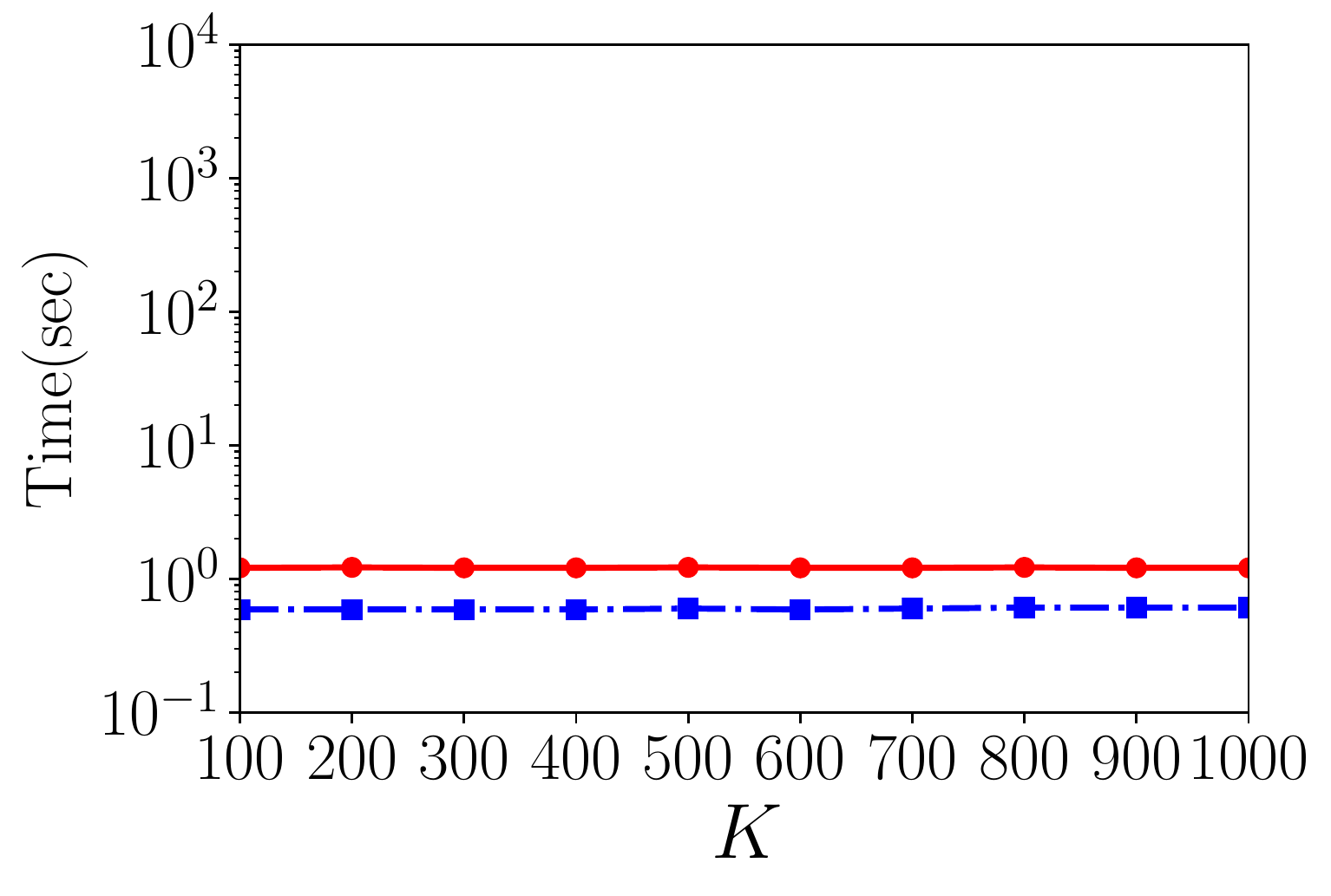}}
	\subfloat[\bible]{%
		\includegraphics[width=.25\textwidth] {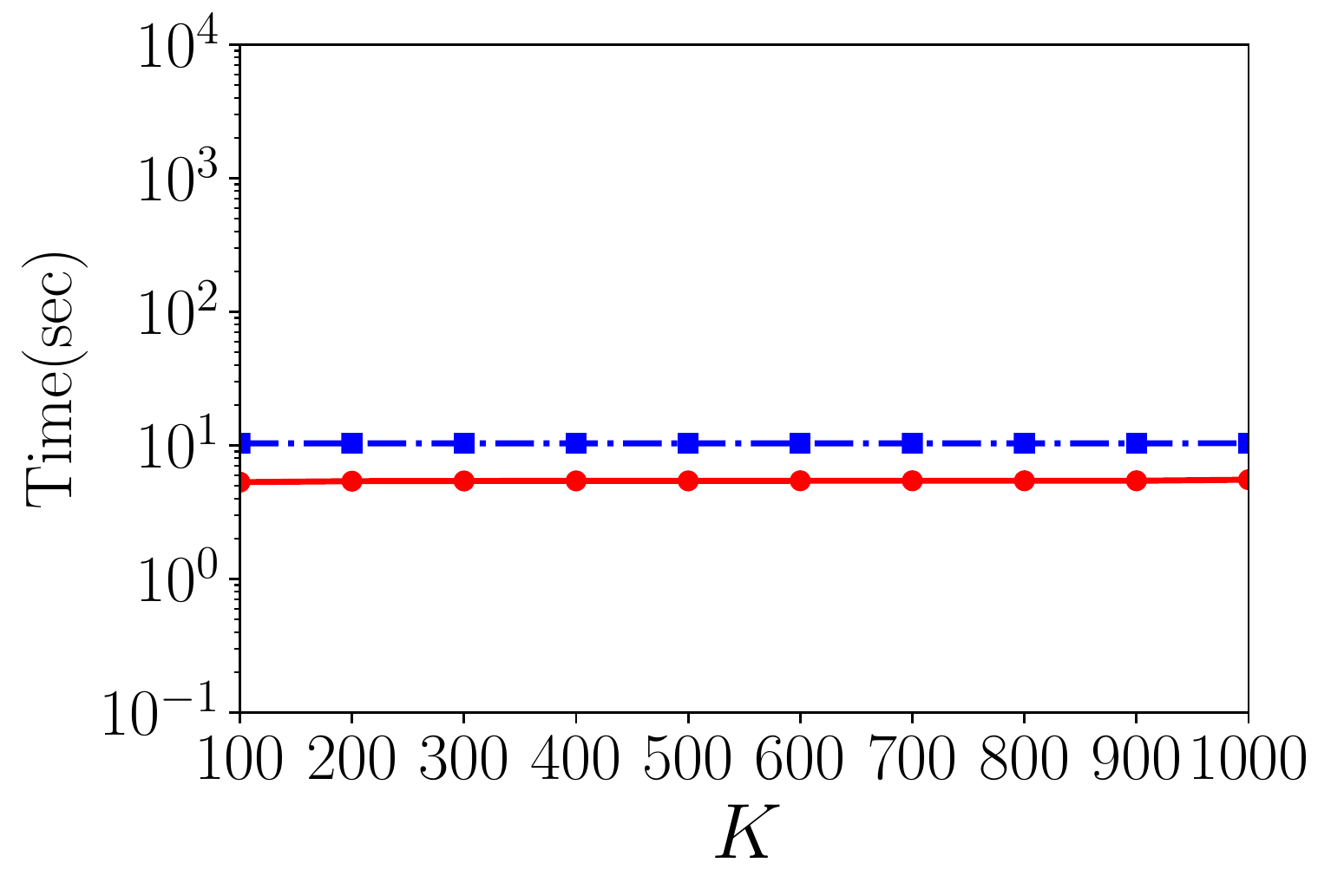}}
	\subfloat[\slsh]{%
		\includegraphics[width=.25\textwidth]{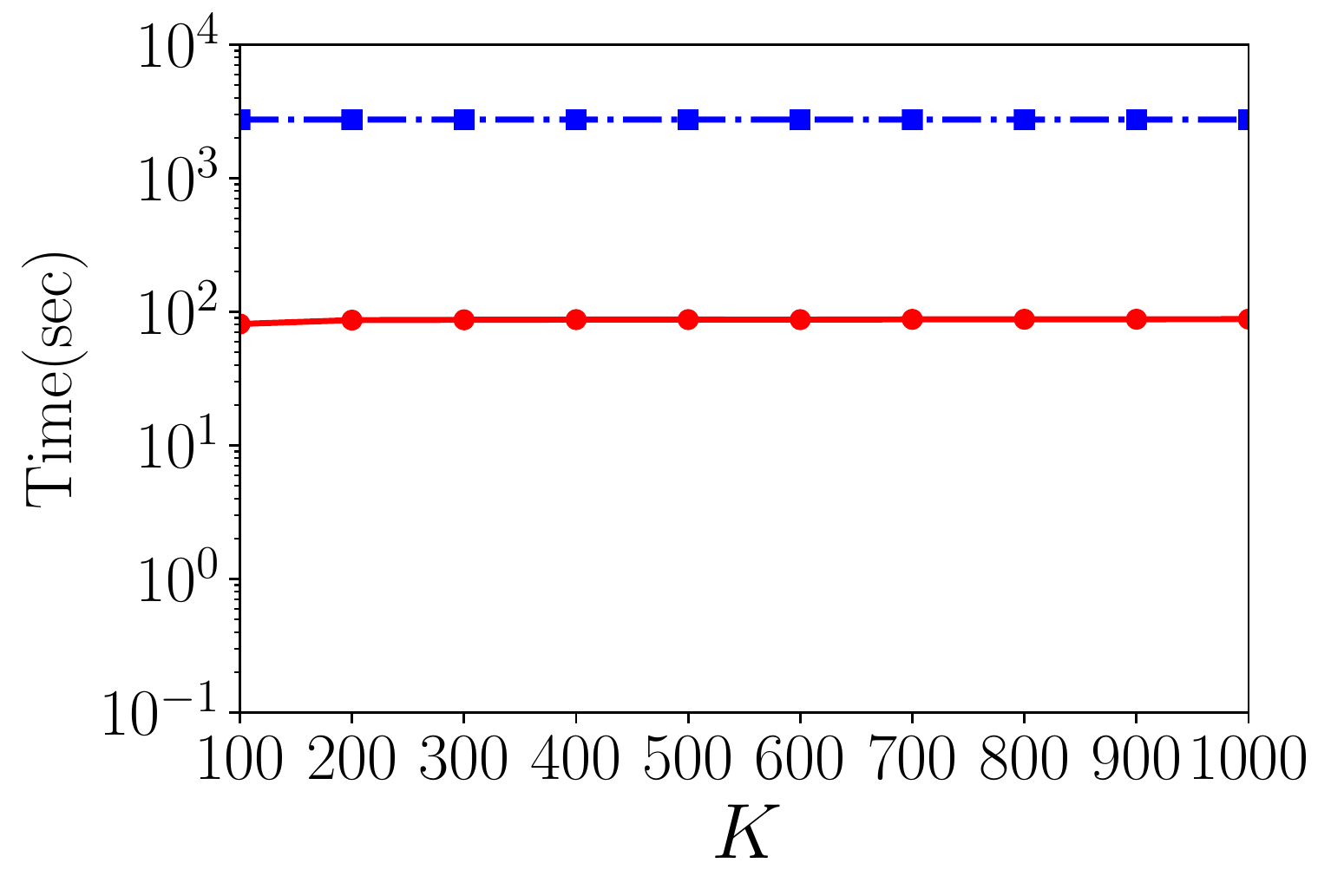}}
	\caption{The runtimes of \kqc{} and \pl{} for different values of $k$ where $k' = 1200$}
	\label{fig:k-vs-time}
\end{figure*}
}

\subsection{Dependence on $k$ and $k'$}
We consider different values of $k$ and $k'$. \Cref{fig:kac-advo-ac,fig:kac-route-ac,fig:kac-bible-ac,fig:kac-slash-ac} show the error percent of \kqc{}. More specifically, each cell shows the error percent for corresponding values of $k$ and $k'$. 
In addition, \Cref{fig:kac-advo-sp,fig:kac-route-sp,fig:kac-bible-sp,fig:kac-slash-sp} represent the speedup factor of \kqc{} over \pl{}. Similarly, each cell in these figures represent a speedup factor of \kqc{} over \pl{}. There are also some empty cells (for example in \Cref{fig:kac-slash-ac,fig:kac-slash-sp}). The empty cells indicate that in some graphs \kqc{} could not extract $k$ maximal \qc{s} with a given value of $k'$ due to a very few number of maximal quasi-cliques. Therefore, we did not report the error percent and speedup factors for those cases.

\textbf{Speedup compared to \pl{}:}
\Cref{fig:kac-advo-sp,fig:kac-route-sp,fig:kac-bible-sp,fig:kac-slash-sp} represent the speedup factor of \kqc{} over \pl{}. Based on the results, an increase in value of $k'$ makes \kqc{} slower compared to \pl{}. For example, in \Cref{fig:kac-slash-sp}, for $k = 100$ and $k' = 200$, the speedup of \kqc over \pl is $226$x while for the same value of $k$ and $k' = 400$, it is reduced to $108$x. The reason is that a higher value of $k'$ in \kqc{} means more number of kernels. Therefore, the kernel expansion phase needs to expand more kernels, which increases the overall runtime.

\textbf{Error rate:}
Here, we describe the effect of parameter $k'$ on the accuracy of \kqc{}.  As shown in \Cref{fig:kac-advo-ac,fig:kac-route-ac,fig:kac-bible-ac,fig:kac-slash-ac}, the higher value of $k'$ we set, the lower error we obtain. For example, in \Cref{fig:kac-slash-ac}, for $k = 100$ and $k' = 200$, \kqc{} results in $14.2\%$ error while increasing the value of $k'$ to $800$ can yield zero percent error. The reason is that by setting higher values for $k'$, we retrieve more $\gamma'$-quasi-cliques in kernel detection of \kqc{}, and there are more kernels to be expanded by the kernel expansion of \kqc{}. In other words, a high value for $k'$ can increase the chance of \kqc{} to unearth very large \qc{s}. For a fixed value of $k'$, the error percent of different values of $k$ fluctuates slightly in most cases. Here, we give an example why an increase in value of $k$ can results in both lower and higher error percent. Let assume the size of \qc{}s returned by \kqc{} is $H = \langle10,10,9\rangle$ (for $k = 3$), and the size of \qc{}s returned by the exact algorithm (\pl{}) is $Z = \langle12,10,10\rangle$. Based on the error metric we used (See \Cref{eq:error-percent}), the error percent of \kqc{} in this case is $9.3\%$. For $k = 4$, suppose that the returned list by \pl{} is $Z = \langle12,10,10,9\rangle$. The error percent can be lowered if \kqc{} returns $H = \langle10,10,9,9\rangle$, where the error is $7.3\%$. It can be also greater if \kqc{} returns $H = \langle10,10,9,8\rangle$, where the error is $9.7\%$.

\begin{table}[t!]
	\begin{tabular}{|l|c|c|c|c|c|}
		\hline
		\multicolumn{1}{|c|}{\multirow{2}{*}{Graph}} & \multicolumn{4}{c|}{\kqc}       & \pl                       \\ \cline{2-6} 
		\multicolumn{1}{|c|}{}                       & $\gamma$    & $\gamma'$   & {\footnotesize {\texttt{Avg sz}}} & Time(sec) & Time(sec)                      \\ \hline
		\live                                  & 0.85 & 1.0  & 24.1 & 843       & $>$ 259K secs \\ \hline
		\youtube                                      & 0.8  & 1.0  & 27.2 & 3130      &  $>$ 259K secs \\ \hline
		\hyves                                        & 0.75 & 0.95 & 33.4 & 9026      &  $>$ 259K secs \\ \hline
	\end{tabular}
	\caption{{\small Performance of \kqc on large graphs.  $k=100, k'=300, \mnsz=5$. {\footnotesize {\texttt{Avg sz}}} shows the average size of $k$ quasi-cliques. $>259$K means \pl did not finish in $72$ hours (note this happens with every graph).}}
	\label{table:lrg-graphs}
\end{table}

\subsection{Performance of \kqc on Large graphs}
Our method can handle larger graphs. As shown in \Cref{table:lrg-graphs}, \kqc is able to retrieve large maximal quasi-cliques on the graphs with millions of edges and vertices. For example, \kqc lists $100$ maximal quasi-cliques in \num{3130} and \num{9026} secs respectively for the graphs \youtube and \hyves while \pl does not finish after $259$K secs (72 hours). The speedup is even higher in the graph \live, where \kqc takes only $843$ secs while \pl did not finish in $72$ hours. This is because \live has a higher average degree, hence denser than other graphs (See \cref{table:graph-stats} for more details). Therefore, the search space for \pl in this graph can be huge while \kqc quickly enumerates kernels of \live and then expands them to obtain $k$ maximal $\gamma$-quasi-cliques.

\remove{
\begin{figure*}[t]
	\centering
	\includegraphics[width=0.6\textwidth]{plots/output-gctac/legend.pdf}
	\vspace{-4ex}
\end{figure*}	
	
\begin{figure*}[htb]
	\captionsetup[subfigure]{justification=centering}
	\centering 
	\subfloat[\advo, \ac]{%
		\includegraphics[width=.5\textwidth] {plots/output-ktac/advogato-g=0,8-ktac.pdf}}
	\subfloat[\advo, \spd]{%
		\includegraphics[width=.5\textwidth]{plots/output-ktac/advogato-g=0,8-ktac-spdup.pdf}}
	\subfloat[\slsh, \ac]{%
		\includegraphics[width=.25\textwidth]{plots/output-ktac/slash-g=0,8-ktac.pdf}}  
	\caption{$\gamma=0.8$, $\gamma' = 1.0$, $k = 100$, $k' = 300$, \mnsz{}$=3$}
	\label{fig:kaccuracy}
	\subfloat[\bible, \ac]{%
	\includegraphics[width=.5\textwidth]{plots/output-ktac/bible-g=0,8-ktac.pdf}}  
\subfloat[\bible, \spd]{%
	\includegraphics[width=.5\textwidth] {plots/output-ktac/bible-g=0,8-ktac-spdup.pdf}}
\end{figure*}

\begin{figure*}[htb]
	\captionsetup[subfigure]{justification=centering}
	\centering 
	\subfloat[\route, \spd]{%
		\includegraphics[width=.25\textwidth] {plots/output-ktac/route-views-g=0,8-ktac-spdup.pdf}}
	\subfloat[\slsh, \spd]{%
		\includegraphics[width=.25\textwidth] {plots/output-ktac/slash-g=0,8-ktac-spdup.pdf}}
	\hfill 
	\caption{$\gamma=0.8$, $\gamma' = 1.0$, $k = 100$, $k' = 300$, \mnsz{}$=3$}
	\label{fig:kspeedup}
\end{figure*}
	\subfloat[\slsh, \ac]{%
	\includegraphics[width=.25\textwidth]{plots/output-ktac/slash-g=0,8-ktac.pdf}}  
\caption{$\gamma=0.8$, $\gamma' = 1.0$, $k = 100$, $k' = 300$, \mnsz{}$=3$}


\begin{figure}[htb]
	\captionsetup[subfigure]{justification=centering}
	\centering 
	\subfloat[\ac]{%
		\includegraphics[width=.245\textwidth] {plots/output-ktac/advogato-g=0,8-ktac.pdf} \label{fig:kac-advo-acr}}
	\subfloat[\spd]{%
		\includegraphics[width=.36\textwidth]{plots/output-ktac/slash-g=0,8-ktac-spdup.pdf} \label{fig:kac-slash-spr}}
	\caption{\advo, $\gamma=0.8$, $\gamma' = 1.0$, \mnsz{}$=3$}
	\label{fig:kacs-advo}
\end{figure}
\begin{figure}
	\captionsetup[subfigure]{justification=centering}
	\centering 
	\subfloat[\ac]{%
		\includegraphics[width=.24\textwidth] {plots/output-ktac/route-views-g=0,8-ktac.pdf} \label{fig:kac-route-acr}}
	\subfloat[\spd]{%
		\includegraphics[width=.24\textwidth]{plots/output-ktac/route-views-g=0,8-ktac-spdup.pdf} \label{fig:kac-route-spr}}
	\caption{\route, $\gamma=0.8$, $\gamma' = 1.0$, \mnsz{}$=3$}
	\label{fig:kacs-route}
\end{figure}
\begin{figure*}[t]
	\captionsetup[subfigure]{justification=centering}
	\centering 
	\subfloat[\ac]{%
		\includegraphics[width=.36\textwidth] {plots/output-ktac/bible-g=0,8-ktac.pdf} \label{fig:kac-bible-acr}}
	\subfloat[\spd]{%
		\includegraphics[width=.36\textwidth]{plots/output-ktac/bible-g=0,8-ktac-spdup.pdf} \label{fig:kac-bible-spr}}
	\caption{\bible, $\gamma=0.8$, $\gamma' = 1.0$, \mnsz{}$=3$}
	\label{fig:kacs-bible}
\end{figure*}
\begin{figure*}[t]
	\captionsetup[subfigure]{justification=centering}
	\centering 
	\subfloat[\ac]{%
		\includegraphics[width=.36\textwidth] {plots/output-ktac/slash-g=0,8-ktac.pdf} \label{fig:kac-slash-acr}}
	\subfloat[\spd]{%
		\includegraphics[width=.36\textwidth]{plots/output-ktac/slash-g=0,8-ktac-spdup.pdf} \label{fig:kac-slash-spr}}
	\caption{\slsh, $\gamma=0.8$, $\gamma' = 1.0$, \mnsz{}$=3$}
	\label{fig:kacs-slash}
\end{figure*}

}

\vspace{-1ex}
\section{Conclusions}
\label{sec:concl}
Quasi-clique enumeration is an important problem in the area of dense subgraph enumeration. We considered the problem of enumerating top-$k$ maximal degree-based quasi-cliques from a graph. We first showed that it is NP-hard to even determine whether a given (degree-based) quasi-clique is maximal. We then presented a novel heuristic algorithm \kqc for enumerating top-$k$ maximal quasi-cliques, based on an idea of finding dense kernels, followed by expanding them into larger quasi-cliques. Our experiments showed that \kqc can often lead to a speedup of three orders of magnitude, when compared with a state-of-the-art baseline algorithm. This implies that it may be possible to mine quasi-cliques from larger graphs than was possible earlier. Many directions remain to be explored, including the following: (1) Can the idea of detecting and expanding kernels be applied to other incomplete dense structures, such as quasi-bicliques? (2) Can the algorithms for quasi-cliques be parallelized effectively?
\vspace{-2ex}

\bibliographystyle{IEEEtran}
\bibliography{IEEEabrv,ref}

\begin{thebibliography}{10}
\providecommand{\url}[1]{#1}
\csname url@samestyle\endcsname
\providecommand{\newblock}{\relax}
\providecommand{\bibinfo}[2]{#2}
\providecommand{\BIBentrySTDinterwordspacing}{\spaceskip=0pt\relax}
\providecommand{\BIBentryALTinterwordstretchfactor}{4}
\providecommand{\BIBentryALTinterwordspacing}{\spaceskip=\fontdimen2\font plus
\BIBentryALTinterwordstretchfactor\fontdimen3\font minus
  \fontdimen4\font\relax}
\providecommand{\BIBforeignlanguage}[2]{{%
\expandafter\ifx\csname l@#1\endcsname\relax
\typeout{** WARNING: IEEEtran.bst: No hyphenation pattern has been}%
\typeout{** loaded for the language `#1'. Using the pattern for}%
\typeout{** the default language instead.}%
\else
\language=\csname l@#1\endcsname
\fi
#2}}
\providecommand{\BIBdecl}{\relax}
\BIBdecl

\bibitem{BK73}
C.~Bron and J.~Kerbosch, ``Algorithm 457: finding all cliques of an undirected
  graph,'' \emph{Commun. ACM}, vol.~16, no.~9, pp. 575--577, 1973.

\bibitem{TTT06}
E.~Tomita, A.~Tanaka, and H.~Takahashi, ``The worst-case time complexity for
  generating all maximal cliques and computational experiments,''
  \emph{Theoretical Computer Science}, vol. 363, no.~1, pp. 28--42, 2006.

\bibitem{MU04}
K.~Makino and T.~Uno, ``New algorithms for enumerating all maximal cliques,''
  in \emph{SWAT}, 2004, pp. 260--272.

\bibitem{ELS10}
D.~Eppstein, M.~L{\"o}ffler, and D.~Strash, ``Listing all maximal cliques in
  sparse graphs in near-optimal time,'' in \emph{ISAAC}, 2010, pp. 403--414.

\bibitem{MXT17}
A.~P. Mukherjee, P.~Xu, and S.~Tirthapura, ``Enumeration of maximal cliques
  from an uncertain graph,'' \emph{{IEEE} Trans. Knowl. Data Eng.}, vol.~29,
  no.~3, pp. 543--555, 2017.

\bibitem{SMT15}
M.~Svendsen, A.~P. Mukherjee, and S.~Tirthapura, ``Mining maximal cliques from
  a large graph using mapreduce: Tackling highly uneven subproble m sizes,''
  \emph{Journal of Parallel and distributed computing}, vol.~79, pp. 104--114,
  2015.

\bibitem{Alba73}
R.~D. Alba, ``A graph-theoretic definition of a sociometric clique,''
  \emph{Journal of Mathematical Sociology}, vol.~3, pp. 113--126, 1973.

\bibitem{Freeman92}
L.~C. Freeman, ``The sociological concept of "group": An empirical test of two
  models,'' \emph{American Journal of Sociology}, vol.~98, no.~1, pp. 152--166,
  1992.

\bibitem{SM03}
V.~Spirin and L.~A. Mirny, ``Protein complexes and functional modules in
  molecular networks,'' vol. 100, no.~21, pp. 12\,123--12\,128, 2003.

\bibitem{BB09}
M.~Bhattacharyya and S.~Bandyopadhyay, ``Mining the largest quasi-clique in
  human protein interactome,'' in \emph{Adaptive and Intelligent Systems, 2009.
  ICAIS'09. International Conference on}.\hskip 1em plus 0.5em minus
  0.4em\relax IEEE, 2009, pp. 194--199.

\bibitem{GF+10}
S.~Gunnemann, I.~Farber, B.~Boden, and T.~Seidl, ``Subspace clustering meets
  dense subgraph mining: A synthesis of two paradigms,'' in \emph{Data Mining
  (ICDM), 2010 IEEE 10th International Conference on}.\hskip 1em plus 0.5em
  minus 0.4em\relax IEEE, 2010, pp. 845--850.

\bibitem{BG+12}
B.~Boden, S.~G{\"u}nnemann, H.~Hoffmann, and T.~Seidl, ``Mining coherent
  subgraphs in multi-layer graphs with edge labels,'' in \emph{Proceedings of
  the 18th ACM SIGKDD international conference on Knowledge discovery and data
  mining}.\hskip 1em plus 0.5em minus 0.4em\relax ACM, 2012, pp. 1258--1266.

\bibitem{SMZ12}
A.~Silva, W.~Meira~Jr, and M.~J. Zaki, ``Mining attribute-structure correlated
  patterns in large attributed graphs,'' \emph{Proceedings of the VLDB
  Endowment}, vol.~5, no.~5, pp. 466--477, 2012.

\bibitem{LW08}
G.~Liu and L.~Wong, ``Effective pruning techniques for mining quasi-cliques,''
  in \emph{Joint European conference on machine learning and knowledge
  discovery in databases}.\hskip 1em plus 0.5em minus 0.4em\relax Springer,
  2008, pp. 33--49.

\bibitem{PV+18}
G.~Pastukhov, A.~Veremyev, V.~Boginski, and O.~A. Prokopyev, ``On maximum
  degree-based-quasi-clique problem: Complexity and exact approaches,''
  \emph{Networks}, vol.~71, no.~2, pp. 136--152, 2018.

\bibitem{MIH99}
H.~Matsuda, T.~Ishihara, and A.~Hashimoto, ``Classifying molecular sequences
  using a linkage graph with their pairwise similarities,'' \emph{Theoretical
  Computer Science}, vol. 210, no.~2, pp. 305--325, 1999.

\bibitem{K16}
C.~Komusiewicz, ``Multivariate algorithmics for finding cohesive subnetworks,''
  \emph{Algorithms}, vol.~9, no.~1, p.~21, 2016.

\bibitem{PJZ05}
J.~Pei, D.~Jiang, and A.~Zhang, ``On mining cross-graph quasi-cliques,'' in
  \emph{Proceedings of the eleventh ACM SIGKDD international conference on
  Knowledge discovery in data mining}.\hskip 1em plus 0.5em minus 0.4em\relax
  ACM, 2005, pp. 228--238.

\bibitem{ZW+07}
Z.~Zeng, J.~Wang, L.~Zhou, and G.~Karypis, ``Out-of-core coherent closed
  quasi-clique mining from large dense graph databases,'' \emph{ACM
  Transactions on Database Systems (TODS)}, vol.~32, no.~2, p.~13, 2007.

\bibitem{LL16}
P.~Lee and L.~V. Lakshmanan, ``Query-driven maximum quasi-clique search,'' in
  \emph{Proceedings of the 2016 SIAM International Conference on Data
  Mining}.\hskip 1em plus 0.5em minus 0.4em\relax SIAM, 2016, pp. 522--530.

\bibitem{ARS02}
J.~Abello, M.~G. Resende, and S.~Sudarsky, ``Massive quasi-clique detection,''
  in \emph{Latin American symposium on theoretical informatics}.\hskip 1em plus
  0.5em minus 0.4em\relax Springer, 2002, pp. 598--612.

\bibitem{uno2007efficient}
T.~Uno, ``An efficient algorithm for enumerating pseudo cliques,'' in
  \emph{International Symposium on Algorithms and Computation}.\hskip 1em plus
  0.5em minus 0.4em\relax Springer, 2007, pp. 402--414.

\bibitem{PV+13}
J.~Pattillo, A.~Veremyev, S.~Butenko, and V.~Boginski, ``On the maximum
  quasi-clique problem,'' \emph{Discrete Applied Mathematics}, vol. 161, no.
  1-2, pp. 244--257, 2013.

\bibitem{BHB07}
M.~Brunato, H.~H. Hoos, and R.~Battiti, ``On effectively finding maximal
  quasi-cliques in graphs,'' in \emph{International conference on learning and
  intelligent optimization}.\hskip 1em plus 0.5em minus 0.4em\relax Springer,
  2007, pp. 41--55.

\bibitem{BB+18}
P.~Balister, B.~Bollob{\'a}s, J.~Sahasrabudhe, and A.~Veremyev, ``Dense
  subgraphs in random graphs,'' \emph{arXiv preprint arXiv:1803.10349}, 2018.

\bibitem{VB+12}
A.~Veremyev, V.~Boginski, P.~A. Krokhmal, and D.~E. Jeffcoat, ``Dense
  percolation in large-scale mean-field random networks is provably
  “explosive”,'' \emph{PloS one}, vol.~7, no.~12, p. e51883, 2012.

\bibitem{CN85}
N.~Chiba and T.~Nishizeki, ``Arboricity and subgraph listing algorithms,''
  \emph{SIAM J. Comput.}, vol.~14, pp. 210--223, 1985.

\bibitem{CG+16}
A.~Conte, R.~Grossi, A.~Marino, and L.~Versari, ``Sublinear-space bounded-delay
  enumeration for massive network analytics: maximal cliques,'' in \emph{43rd
  International Colloquium on Automata, Languages, and Programming (ICALP
  2016)}, vol. 148, 2016, pp. 1--148.

\bibitem{DST16}
\BIBentryALTinterwordspacing
A.~Das, M.~Svendsen, and S.~Tirthapura, ``Change-sensitive algorithms for
  maintaining maximal cliques in a dynamic graph,'' \emph{CoRR}, vol.
  abs/1601.06311, 2016. [Online]. Available:
  \url{http://arxiv.org/abs/1601.06311}
\BIBentrySTDinterwordspacing

\bibitem{AA+04}
G.~Alexe, S.~Alexe, Y.~Crama, S.~Foldes, P.~L. Hammer, and B.~Simeone,
  ``Consensus algorithms for the generation of all maximal bicliques,''
  \emph{Discrete Applied Mathematics}, vol. 145, no.~1, pp. 11--21, 2004.

\bibitem{LSL06}
G.~Liu, K.~Sim, and J.~Li, ``Efficient mining of large maximal bicliques,'' in
  \emph{Data warehousing and knowledge discovery}, 2006, pp. 437--448.

\bibitem{MT17}
A.~P. Mukherjee and S.~Tirthapura, ``Enumerating maximal bicliques from a large
  graph using mapreduce,'' \emph{{IEEE} Trans. Services Computing}, vol.~10,
  no.~5, pp. 771--784, 2017.

\bibitem{BZ03}
V.~Batagelj and M.~Zaversnik, ``An o (m) algorithm for cores decomposition of
  networks,'' \emph{arXiv preprint cs/0310049}, 2003.

\bibitem{CK+11}
J.~Cheng, Y.~Ke, S.~Chu, and M.~T. {\"O}zsu, ``Efficient core decomposition in
  massive networks,'' in \emph{Data Engineering (ICDE), 2011 IEEE 27th
  International Conference on}.\hskip 1em plus 0.5em minus 0.4em\relax IEEE,
  2011, pp. 51--62.

\bibitem{DDZ14}
N.~S. Dasari, R.~Desh, and M.~Zubair, ``Park: An efficient algorithm for k-core
  decomposition on multicore processors,'' in \emph{Big Data (Big Data), 2014
  IEEE International Conference on}.\hskip 1em plus 0.5em minus 0.4em\relax
  IEEE, 2014, pp. 9--16.

\bibitem{KB+15}
W.~Khaouid, M.~Barsky, V.~Srinivasan, and A.~Thomo, ``K-core decomposition of
  large networks on a single pc,'' \emph{Proceedings of the VLDB Endowment},
  vol.~9, no.~1, pp. 13--23, 2015.

\bibitem{C08}
J.~Cohen, ``Trusses: Cohesive subgraphs for social network analysis,''
  \emph{National Security Agency Technical Report}, vol.~16, 2008.

\bibitem{G84}
A.~V. Goldberg, \emph{Finding a maximum density subgraph}.\hskip 1em plus 0.5em
  minus 0.4em\relax University of California Berkeley, CA, 1984.

\bibitem{C00}
M.~Charikar, ``Greedy approximation algorithms for finding dense components in
  a graph,'' in \emph{International Workshop on Approximation Algorithms for
  Combinatorial Optimization}.\hskip 1em plus 0.5em minus 0.4em\relax Springer,
  2000, pp. 84--95.

\bibitem{KS09}
S.~Khuller and B.~Saha, ``On finding dense subgraphs,'' in \emph{International
  Colloquium on Automata, Languages, and Programming}.\hskip 1em plus 0.5em
  minus 0.4em\relax Springer, 2009, pp. 597--608.

\bibitem{AC09}
R.~Andersen and K.~Chellapilla, ``Finding dense subgraphs with size bounds,''
  in \emph{International Workshop on Algorithms and Models for the
  Web-Graph}.\hskip 1em plus 0.5em minus 0.4em\relax Springer, 2009, pp.
  25--37.

\bibitem{T14}
C.~E. Tsourakakis, ``A novel approach to finding near-cliques: The
  triangle-densest subgraph problem,'' \emph{arXiv preprint arXiv:1405.1477},
  2014.

\bibitem{T15}
C.~Tsourakakis, ``The k-clique densest subgraph problem,'' in \emph{Proceedings
  of the 24th international conference on world wide web}.\hskip 1em plus 0.5em
  minus 0.4em\relax International World Wide Web Conferences Steering
  Committee, 2015, pp. 1122--1132.

\bibitem{MP+15}
M.~Mitzenmacher, J.~Pachocki, R.~Peng, C.~Tsourakakis, and S.~C. Xu, ``Scalable
  large near-clique detection in large-scale networks via sampling,'' in
  \emph{Proceedings of the 21th ACM SIGKDD International Conference on
  Knowledge Discovery and Data Mining}.\hskip 1em plus 0.5em minus 0.4em\relax
  ACM, 2015, pp. 815--824.

\bibitem{ZL+10}
Z.~Zou, J.~Li, H.~Gao, and S.~Zhang, ``Finding top-k maximal cliques in an
  uncertain graph,'' in \emph{Data Engineering (ICDE), 2010 IEEE 26th
  International Conference on}.\hskip 1em plus 0.5em minus 0.4em\relax IEEE,
  2010, pp. 649--652.

\bibitem{BD+15}
O.~D. Balalau, F.~Bonchi, T.~Chan, F.~Gullo, and M.~Sozio, ``Finding subgraphs
  with maximum total density and limited overlap,'' in \emph{Proceedings of the
  Eighth ACM International Conference on Web Search and Data Mining}.\hskip 1em
  plus 0.5em minus 0.4em\relax ACM, 2015, pp. 379--388.

\bibitem{GGT16}
E.~Galbrun, A.~Gionis, and N.~Tatti, ``Top-k overlapping densest subgraphs,''
  \emph{Data Mining and Knowledge Discovery}, vol.~30, no.~5, pp. 1134--1165,
  2016.

\bibitem{GJ02}
M.~R. Garey and D.~S. Johnson, \emph{Computers and intractability}.\hskip 1em
  plus 0.5em minus 0.4em\relax wh freeman New York, 2002, vol.~29.

\end{thebibliography}

\end{document}